\newcommand*\dif{\mathop{}\!\mathrm{d}}
\theoremstyle{definition}
\newtheorem{prop}{Proposition}
\newtheorem{rem}{Remark}
\title{ 
	\textbf{Real and complexified configuration spaces for spherical 4-bar linkages}
}
\author{Zeyuan He, Kentaro Hayakawa and Makoto Ohsaki \\ \small Department of Architecture and Architectural Engineering, Kyoto University \\ \small he.zeyuan.8u@kyoto-u.ac.jp, se.hayakawa@archi.kyoto-u.ac.jp, ohsaki@archi.kyoto-u.ac.jp}
\date{}
\begin{document}\maketitle

This document provides a comprehensive library of symbolic parametrized expressions for the real and complexified configuration spaces of a spherical 4-bar linkage. Building on the foundational work of \citet[Section 2]{izmestiev_classification_2016}, the library extends the available expressions by incorporating all four folding angles across every possible combination of linkage lengths. It also includes the polynomial relationship between the diagonals (spherical arcs). Additionally, a fully functional MATLAB app script \citet{he_elliptic-fun-based_2023-1} is provided, offering tools for visualization and parametrization. The derivations are meticulously detailed to ensure accessibility for researchers from a broad spectrum of disciplines.

The spherical 4-bar linkage is a fundamental component in single-degree-of-freedom mechanisms that have been pivotal in engineering applications for over two centuries. When each spherical linkage is replaced by a sectoral rigid panel and each joint by a rotational hinge, the result is a comparable panel-hinge mechanism. In origami engineering, such mechanisms are identified as degree-4 single-vertex rigid origami, which has attracted significant interest in both theoretical research and practical implementation. Theoretical and applied studies on spherical 4-bar linkages or degree-4 single-vertex rigid origami are extensive. This note focuses on theoretical foundations, offering thorough analytical calculations and insights into various forms of algebraic analysis. For additional details, see \citet{chen_fourth_1983} for fourth-order synthesis of adjacent folding angles and the analysis of the pole's locus between the coupler and fixed linkage; \citet{gibson_movable_1988-1, gibson_movable_1988} for results on configuration space properties, singularities, and reductions with varying linkage lengths; and \citet{khimshiashvili_complex_2011} for contributions on moduli spaces, cross-ratio calculations, and generalized Heron polynomials for determining the area of spherical quadrilaterals. Within the context of origami engineering, we also highlight \citet{waitukaitis_origami_2016}, \citet{zimmermann_rigid_2020}, and \citet{foschi_explicit_2022}, which delve into analytical calculations for degree-4 vertices using spherical trigonometry through case-specific discussions.

By synthesizing these contributions, this document serves as a vital resource for engineers, architects, and mathematicians interested in symbolic parametrized expressions for configuring mechanisms based on 4-bar linkages.

\section{Modelling} \label{section: spherical modelling}

In this section we will set up the notation for a spherical 4-bar linkage, which is shown in Figure \ref{fig: degree-4 vertex}. We use $\alpha, ~\beta, ~\gamma, ~\delta \in \mathbb{R}^+$ to describe the \textbf{magnitudes of sector angles} and four \textbf{folding angles} $\rho_x, ~\rho_y, ~\rho_z, ~\rho_w \in \mathbb{R} \slash 2\pi$ to describe the configuration. $\mathbb{R} \slash 2\pi$ means $a, ~b \in \mathbb{R}$ are equivalent if $b = a + 2k\pi, ~k \in \mathbb{Z}$, which implies that we take $-\pi$ and $\pi$ as the same rotational angle.

The tangents of half of the folding angles are defined as 
\begin{equation}
	\begin{gathered}
		x=\tan \dfrac{\rho_x}{2}, ~~y=\tan \dfrac{\rho_y}{2}, ~~z=\tan \dfrac{\rho_z}{2}, ~~w=\tan \dfrac{\rho_w}{2} \\
		x, ~y, ~z, ~w \in \mathbb{R} \cup \{\infty\}
	\end{gathered}
\end{equation}
$\mathbb{R} \cup \{\infty\}$ is the \textbf{one-point compactification} of the real line, which `glues' $-\infty$ and $\infty$ to the same point. Moreover, in further algebraic discussions, $u$ and $v$ are the lengths of spherical diagonals.

\begin{figure} [!tb]
	\noindent \begin{centering}
		\includegraphics[width=1\linewidth]{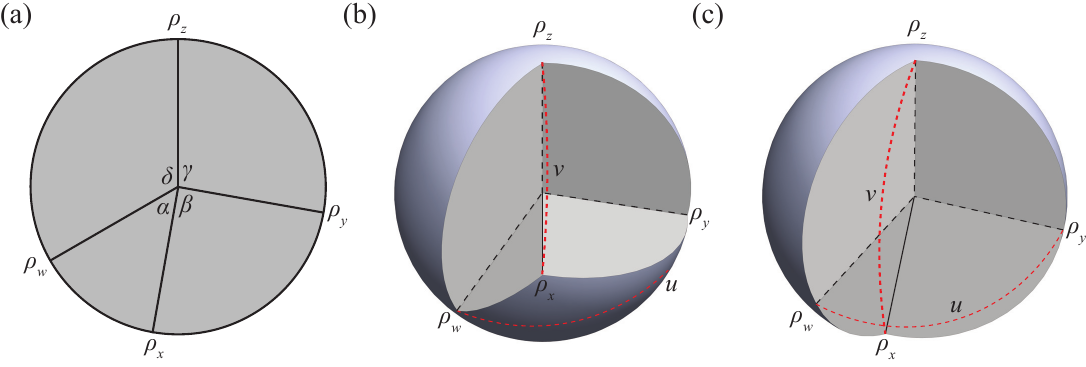}
		\par \end{centering}
	
	\caption{\label{fig: degree-4 vertex}(a) A degree-4 single-vertex rigid origami (not necessarily planar as shown here). We label the sector angles counter-clockwise as $\alpha$, $\beta$, $\gamma$, $\delta$; and the folding angles counter-clockwise as $\rho_x$, $\rho_y$, $\rho_z$, $\rho_w$. The tangent of half of these folding angles are $x, ~y, ~z, ~w$. $u$ and $v$ are two spherical diagonals of this vertex, each of which segments the spherical quadrilateral to spherical triangles. (b) and (c) show two non-trivial folded states with the outside edges of the single-vertex drawn on a sphere as arcs of great circles, assuming the panel corresponding to $\gamma$ is fixed when changing the magnitude of $z$. The mountain and valley creases are shown in solid and dashed lines. Generically, for a $z$ there are two sets of folding angles $\{x, ~y, ~w\}$, where the panels corresponding to $\alpha$ and $\beta$ of the two folded states are symmetric with respect to $u$.}
\end{figure}

\begin{rem}
	If considering the `stacking sequence' of these linkages, it is also reasonable to require $x = -\infty$ to be different from $x = + \infty$. This definition will make $x \in [-\infty, +\infty]$, which is called the \textit{two-point compactification} of the real line.
\end{rem}

First, we have the following proposition on the range of sector angles.
\begin{prop} \label{prop: sector angle range}
	$\alpha, ~\beta, ~\gamma, ~\delta$ are the sector angles of a degree-4 single-vertex if and only if
	\begin{equation} \label{eq: sector angle range}
		\begin{cases}
			\begin{gathered}
				\alpha, ~\beta, ~\gamma, ~\delta \in (0, \pi) \\
				\alpha < \beta + \gamma + \delta < \alpha + 2\pi \\
				\beta < \gamma + \delta + \alpha < \beta + 2\pi \\
				\gamma < \delta + \alpha + \beta < \gamma + 2\pi \\
				\delta < \alpha + \beta + \gamma  < \delta + 2\pi
			\end{gathered}
		\end{cases}
	\end{equation}
\end{prop}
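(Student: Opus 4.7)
The plan is to reduce the spherical quadrilateral to two spherical triangles by drawing the diagonal $u$ joining the vertex between $\alpha,\beta$ to the vertex between $\gamma,\delta$, and then translate the well-known side-length conditions for spherical triangles into the eight inequalities of \eqref{eq: sector angle range}. This is the natural approach because $u$ will reappear anyway in the algebraic analysis that follows in later sections.

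First I would recall (or give a quick proof of) the standard fact that three positive numbers $a,b,c$ are the sides of a non-degenerate spherical triangle if and only if each lies in $(0,\pi)$, each is strictly less than the sum of the other two, and $a+b+c<2\pi$. Equivalently, given $a,b\in(0,\pi)$, the third side $c$ ranges precisely over the open interval $\bigl(|a-b|,\,\min(a+b,\,2\pi-a-b)\bigr)$, and this interval is automatically contained in $(0,\pi)$.

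For necessity I assume the spherical quadrilateral exists, apply the triangle characterisation to each of the triangles $(\alpha,\beta,u)$ and $(\gamma,\delta,u)$, and obtain
\[
|\alpha-\beta|<u<\min(\alpha+\beta,\,2\pi-\alpha-\beta),\qquad |\gamma-\delta|<u<\min(\gamma+\delta,\,2\pi-\gamma-\delta).
\]
Cross-combining the lower bound from one pair with the upper bounds from the other pair produces $|\alpha-\beta|<\gamma+\delta$, $|\alpha-\beta|<2\pi-\gamma-\delta$, and the two symmetric inequalities with $(\alpha,\beta)$ swapped with $(\gamma,\delta)$; expanding each absolute value gives exactly the eight inequalities of \eqref{eq: sector angle range}, while $\alpha,\beta,\gamma,\delta\in(0,\pi)$ is inherited from the triangle characterisation. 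For sufficiency I assume \eqref{eq: sector angle range} and observe that the same eight inequalities, together with the trivial $|\alpha-\beta|<\alpha+\beta$ etc., are exactly what is needed to make every lower bound strictly smaller than every upper bound in the two intervals above, so their intersection is a nonempty open subset of $(0,\pi)$. Picking any $u$ in the intersection and gluing the two spherical triangles $(\alpha,\beta,u)$ and $(\gamma,\delta,u)$ along $u$ on opposite sides reconstructs the required spherical quadrilateral.

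The main obstacle I expect is the careful bookkeeping in the necessity direction: there are eight inequalities and four symmetry-related cases, so I need to be systematic about which lower bound pairs with which upper bound and to verify that the spherical triangle bound $a+b+c<2\pi$ gets translated into the four ``$2\pi$-type'' inequalities rather than being weakened into a single global bound like $\alpha+\beta+\gamma+\delta<4\pi$. A minor secondary point is the degenerate case where $\alpha=\beta$ or $\gamma=\delta$ makes one of the lower bounds equal to $0$; this is harmless because the upper bounds remain strictly positive, so the intersection interval for $u$ is still nonempty.
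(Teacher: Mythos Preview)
Your argument is correct, and for the sufficiency direction it is essentially the paper's own proof: the paper also intersects the admissible ranges for the diagonal $u$ (and, redundantly, for $v$) to see that some diagonal length exists. Your observation that one diagonal suffices is a slight economy over the paper, which writes out both the $u$- and $v$-intervals before concluding.

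Where you differ is in the necessity direction. You run the diagonal argument in reverse, cross-combining the lower bound for $u$ coming from one triangle with the upper bounds coming from the other, and this indeed produces all eight inequalities cleanly. The paper instead invokes the ``switching a strip'' symmetry: from the given vertex $(\alpha,\beta,\gamma,\delta)$ it notes that the four related vertices $(\pi-\alpha,\pi-\beta,\gamma,\delta)$, $(\alpha,\pi-\beta,\pi-\gamma,\delta)$, $(\alpha,\beta,\pi-\gamma,\pi-\delta)$, $(\pi-\alpha,\beta,\gamma,\pi-\delta)$ are also realised spherical quadrilaterals, and then applies the ordinary spherical polygon inequality (each side less than the sum of the others) to each of the five. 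Your route is more self-contained and avoids appealing to a geometric transformation that is only justified later; the paper's route, on the other hand, foreshadows Proposition~\ref{prop: switch a strip}, which becomes a workhorse throughout the case-by-case analysis. Either way the bookkeeping you flag as the main obstacle is exactly the content of the proof, and your plan handles it correctly.
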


\begin{proof}
	Necessity: Suppose we have a degree-4 single vertex,
	\begin{equation} 
		\begin{cases}
			\begin{gathered}
				\alpha, ~\beta, ~\gamma, ~\delta \\
				\pi-\alpha, ~\pi-\beta, ~\gamma, ~\delta \\
				\alpha, ~\pi-\beta, ~\pi-\gamma, ~\delta \\
				\alpha, ~\beta, ~\pi-\gamma, ~\pi-\delta \\
				\pi-\alpha, ~\beta, ~\gamma, ~\pi-\delta
			\end{gathered}
		\end{cases}
	\end{equation}
	are all spherical quadrilaterals, hence Equation \eqref{eq: sector angle range} holds.
	
	Sufficiency: Consider the four spherical triangles $(\alpha, ~\beta, ~u)$, $(\gamma, ~\delta, ~u)$, $(\beta, \gamma, ~v)$ and $(\delta, ~\alpha, ~v)$ when $\alpha, ~\beta, ~\gamma, ~\delta \in (0, \pi)$. $u, ~v$ should satisfy the following condition:
	\begin{equation}
		\begin{cases}
			\begin{gathered}
				|\alpha-\beta| < u < \min(\alpha+\beta,~ 2\pi-\alpha-\beta) \\
				|\gamma-\delta| < u < \min(\gamma+\delta, ~2\pi-\gamma-\delta) \\
				|\beta-\gamma| < v < \min(\beta+\gamma, ~2\pi-\beta-\gamma) \\
				|\delta-\alpha| < v < \min(\delta+\alpha, ~2\pi-\delta-\alpha) 
			\end{gathered}
		\end{cases}
	\end{equation}
	The sufficient condition for $\alpha, ~\beta, ~\gamma, ~\delta$ to be the sector angles of a degree-4 single-vertex is the range of $u$ and $v$ being non-empty, which means:
	\begin{equation}
		\begin{cases}
			\begin{gathered}
				|\alpha-\beta|  < \min(\gamma+\delta, ~2\pi-\gamma-\delta) \\
				|\gamma-\delta| < \min(\alpha+\beta, ~2\pi-\alpha-\beta) \\
				|\beta-\gamma| < \min(\delta+\alpha, ~2\pi-\delta-\alpha) \\
				|\delta-\alpha| < \min(\beta+\gamma, ~2\pi-\beta-\gamma) \\ 
			\end{gathered}
		\end{cases}
	\end{equation}
	This leads to Equation \eqref{eq: sector angle range}.
\end{proof}

Next we will give the polynomial relation between:
\begin{enumerate} [label={[\arabic*]}]
	\item adjacent rotational angles $x$ and $y$; $x$ and $w$.
	\item opposite rotational angles $x$ and $z$.
	\item lengths of diagonals $u$ and $v$.
\end{enumerate}
then conduct a detailed case-by-case discussion.

\section{Relation between adjacent folding angles}

\begin{figure} [!tb]
	\noindent \begin{centering}
		\includegraphics[width=0.5\linewidth]{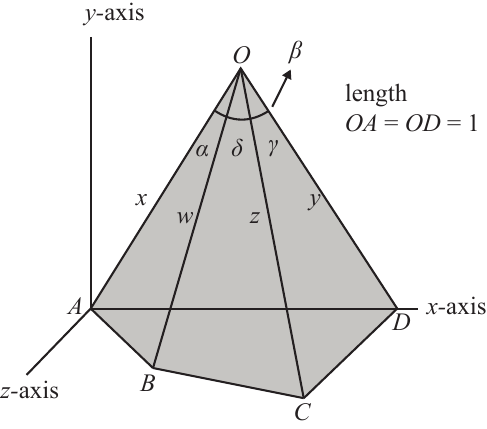}
		\par \end{centering}
	
	\caption{\label{fig: derivation vertex}This figure shows a degree-4 single-vertex rigid origami embedded in a 3-dimensional Euclidean coordinate system. Here $\angle AOB = \alpha$, $\angle BOC = \delta$, $\angle COD = \gamma$, $\angle DOA = \beta$. These sector angles $\alpha, ~\beta, ~\gamma, ~\delta$ satisfy the constraint in Proposition \ref{prop: sector angle range}. The folding angles on $OA, ~OB, ~OC, ~OD$ are $ \rho_x, ~\rho_w, ~\rho_z, ~\rho_y$. Whether these folding angles are all positive or all negative depends on the orientation specified. Triangle $OAD$ is set on the $xy$-plane with $OA=OD=1$. Next we set $\angle OAB = \angle ODC = \pi/2$ to make $O, ~A, ~B, ~C, ~D$ determined under given $\alpha, ~\beta, ~\gamma, ~\delta$. Hence $AB=\tan \alpha$, $CD=\tan \gamma$.}
\end{figure}

One method to derive the relation between adjacent folding angles is to build an Euclidean coordinate system, as shown in Figure \ref{fig: derivation vertex}. With the provided dimensions we could write the coordinates of $B$ and $C$ as:
\begin{equation*}
	\begin{gathered}
		B \left(\tan \alpha \cos (\pi+\rho_x) \cos \frac{\beta}{2}, -\tan \alpha \cos (\pi+\rho_x) \sin \frac{\beta}{2}, \tan \alpha \sin (\pi+\rho_x) \right) \\
		\Rightarrow B \left(- \dfrac{1-x^2}{1+x^2} \tan \alpha \cos \dfrac{\beta}{2} , \dfrac{1-x^2}{1+x^2} \tan \alpha \sin \dfrac{\beta}{2}, - \dfrac{2x}{1+x^2} \tan \alpha \right) \\
		C \left(2 \sin \frac{\beta}{2}-\tan \gamma \cos (\pi+\rho_y) \cos \frac{\beta}{2}, -\tan \gamma \cos (\pi+\rho_y) \sin \frac{\beta}{2}, \tan \gamma \sin (\pi+\rho_y) \right) \\
		\Rightarrow C \left( 2 \sin \frac{\beta}{2}+\dfrac{1-y^2}{1+y^2} \tan \gamma \cos \dfrac{\beta}{2} , \dfrac{1-y^2}{1+y^2} \tan \gamma \sin \dfrac{\beta}{2}, - \dfrac{2y}{1+y^2} \tan \gamma \right)
	\end{gathered}
\end{equation*}
Within triangle OBC we could get the relation between $x$ and $y$:
\begin{equation*}
	\begin{aligned}
		& \left(2 \sin \frac{\beta}{2}+\dfrac{1-x^2}{1+x^2} \tan \alpha \cos \dfrac{\beta}{2} + \dfrac{1-y^2}{1+y^2} \tan \gamma \cos \dfrac{\beta}{2}\right)^2 \\
		+ & \left(\dfrac{1-x^2}{1+x^2} \tan \alpha \sin \dfrac{\beta}{2}-\dfrac{1-y^2}{1+y^2} \tan \gamma \sin \dfrac{\beta}{2}\right)^2 \\
		+ & \left(\dfrac{2x}{1+x^2} \tan \alpha-\dfrac{2y}{1+y^2} \tan \gamma \right)^2  \\
		= & \dfrac{1}{\cos^2 \alpha} + \dfrac{1}{\cos^2 \gamma} - \dfrac{2 \cos \delta}{\cos \alpha \cos \gamma}
	\end{aligned}
\end{equation*}
The next step is to simplify the above equation to a polynomial equation on $x$ and $y$, the result is:
\begin{equation*} \label{eq: degree-4 vertex pre}
	\begin{aligned}
		& \left( \cos (\alpha - \beta + \gamma) - \cos \delta \right) x^2 y^2 \\ 
		+ & \left( \cos (\alpha - \beta - \gamma) - \cos \delta \right) x^2 \\
		+ & \left(4 \sin \alpha \sin \gamma \right) xy \\
		+ & \left( \cos (\alpha + \beta - \gamma ) - \cos \delta \right) y^2 \\
		+ & \left( \cos (\alpha + \beta + \gamma ) - \cos \delta \right) =0
	\end{aligned}
\end{equation*}
We further simplify the above equation to the polynomial form below:
\begin{equation} \label{eq: degree-4 vertex adjacent}
	f(\alpha, ~\beta, ~\gamma, ~\delta, ~x, ~y) = f_{22} x^2y^2 + f_{20} x^2 + 2f_{11}xy + f_{02} y^2 + f_{00} = 0
\end{equation}
\begin{equation*}
	\begin{gathered}
		f_{22} = \sin \dfrac{\alpha - \beta + \gamma + \delta}{2} \sin \dfrac{\alpha - \beta + \gamma - \delta}{2} = \sin (\sigma-\beta) \sin (\sigma-\beta-\delta) \\
		f_{20} = \sin \dfrac{ - \alpha + \beta + \gamma + \delta}{2} \sin \dfrac{ - \alpha + \beta + \gamma - \delta}{2} = \sin (\sigma-\alpha) \sin (\sigma-\alpha-\delta) \\
		f_{11} = - \sin \alpha \sin \gamma \\
		f_{02} = \sin \dfrac{-\alpha - \beta + \gamma + \delta}{2} \sin \dfrac{-\alpha - \beta + \gamma - \delta}{2} = \sin (\sigma-\gamma) \sin (\sigma-\gamma-\delta) \\
		f_{00} = \sin \dfrac{\alpha + \beta + \gamma + \delta}{2} \sin \dfrac{\alpha + \beta + \gamma - \delta}{2} = \sin \sigma \sin (\sigma-\delta) \\
		\sigma =  \dfrac{\alpha + \beta + \gamma + \delta}{2} \in (0, 2\pi) \\
	\end{gathered}
\end{equation*}
$\sigma$ is the \textit{semi-perimeter} of a spherical 4-bar linkage. Note that $\alpha, ~\beta, ~\gamma, ~\delta$ satisfy Proposition \ref{prop: sector angle range}, and we allow $x, ~y \in \mathbb{R} \cup \{\infty\}$ in Equation \eqref{eq: degree-4 vertex adjacent}. 

Further we will classify Equation \eqref{eq: degree-4 vertex adjacent} based on the \textbf{degree of degeneracy} --- how many of $f_{22}, ~f_{20}, ~f_{02}, ~f_{00}$ are zero ($f_{11} < 0$). From Proposition \ref{prop: sector angle range} we have:
\begin{equation*}
	\sigma-\alpha \in (0, \pi), ~~ \sigma-\beta \in (0, \pi), ~~ \sigma-\gamma \in (0, \pi), ~~ \sigma-\delta \in (0, \pi) \\
\end{equation*}
We classify a spherical 4-bar linkage in the following way:
\begin{enumerate} [label={[\arabic*]}]
	\item a \textit{square}, if:
	\begin{equation*}
		f_{22}=0, ~~ f_{20}=0,~~f_{02}=0, ~~f_{00} = 0 ~~ \Leftrightarrow ~~  \alpha=\beta=\gamma=\delta=\pi/2
	\end{equation*}
	\item a \textit{rhombus}, if:
	\begin{equation*}
		f_{22}=0, ~~ f_{20}=0,~~f_{02}=0, ~~f_{00} \neq 0 ~~ \Leftrightarrow ~~  \alpha=\beta=\gamma=\delta \neq \pi/2
	\end{equation*}
	\item a \textit{cross}, if:
	\begin{equation*}
		f_{22} \neq 0, ~~ f_{20}=0,~~f_{02}=0, ~~f_{00} = 0 ~~ \Leftrightarrow ~~  \alpha= \pi - \beta = \gamma = \pi - \delta \neq \pi/2
	\end{equation*}
	\item of type \textit{Miura I}, if:
	\begin{equation*}
		f_{22} = 0, ~~ f_{20} \neq 0,~~f_{02}=0, ~~f_{00} = 0 ~~ \Leftrightarrow ~~  \alpha= \pi - \beta = \pi - \gamma = \delta \neq \pi/2
	\end{equation*}
	\item of type \textit{Miura II}, if:
	\begin{equation*}
		f_{22} = 0, ~~ f_{20} = 0,~~f_{02} \neq 0, ~~f_{00} = 0 ~~ \Leftrightarrow ~~  \alpha = \beta = \pi - \gamma = \pi - \delta \neq \pi/2
	\end{equation*}
	\item an \textit{isogram}, if: 
	\begin{equation*}
		f_{22} \neq 0, ~~ f_{20} = 0,~~f_{02} = 0, ~~f_{00} \neq 0 ~~ \Leftrightarrow ~~ \gamma=\alpha, ~~ \delta=\beta, ~~\beta \neq \alpha, ~~\alpha + \beta \neq \pi
	\end{equation*}
	\item an \textit{anti-isogram}, if: 
	\begin{equation*}
		f_{22} = 0, ~~ f_{20} \neq 0,~~f_{02} \neq 0, ~~f_{00} = 0 ~~ \Leftrightarrow ~~ \gamma=\pi - \alpha, ~~ \delta= \pi - \beta, ~~\beta \neq \alpha, ~~\alpha + \beta \neq \pi
	\end{equation*}
	\item a \textit{deltoid I}, if:
	\begin{equation*}
		f_{22} = 0, ~~ f_{20} \neq 0,~~f_{02} = 0, ~~f_{00} \neq 0 ~~ \Leftrightarrow ~~ \delta=\alpha, ~~ \gamma=\beta, ~~\beta \neq \alpha, ~~\alpha + \beta \neq \pi
	\end{equation*}
	\item an \textit{anti-deltoid I}, if:
	\begin{equation*}
		f_{22} \neq 0, ~~ f_{20} = 0,~~f_{02} \neq 0, ~~f_{00} = 0 ~~ \Leftrightarrow ~~ \delta= \pi - \alpha, ~~ \gamma = \pi - \beta, ~~\beta \neq \alpha, ~~\alpha + \beta \neq \pi
	\end{equation*}
	\item a \textit{deltoid II}, if:
	\begin{equation*}
		f_{22} = 0, ~~ f_{20} = 0,~~f_{02} \neq 0, ~~f_{00} \neq 0 ~~ \Leftrightarrow ~~ \alpha=\beta, ~~ \delta=\gamma, ~~\gamma \neq \beta, ~~\beta + \gamma \neq \pi
	\end{equation*}
	\item an \textit{anti-deltoid II}, if:
	\begin{equation*}
		f_{22} \neq 0, ~~ f_{20} \neq 0,~~f_{02} = 0, ~~f_{00} = 0 ~~ \Leftrightarrow ~~ \alpha= \pi - \beta, ~~ \delta= \pi - \gamma, ~~\gamma \neq \beta, ~~\beta + \gamma \neq \pi
	\end{equation*}
	\item of type \textit{conic I}, if:
	\begin{equation*} 
		f_{22} = 0, ~~ f_{20} \neq 0,~~f_{02} \neq 0, ~~f_{00} \neq 0  ~~ \Leftrightarrow ~~ 
		\begin{dcases}
			\alpha - \beta + \gamma - \delta = 0 \\
			\alpha - \beta - \gamma + \delta \neq 0 \\
			\alpha + \beta - \gamma - \delta \neq 0 \\
			\alpha + \beta + \gamma + \delta \neq 2\pi
		\end{dcases}
	\end{equation*}
	\item of type \textit{conic II}, if:
	\begin{equation*} 
		f_{22} \neq 0, ~~ f_{20} = 0,~~f_{02} \neq 0, ~~f_{00} \neq 0  ~~ \Leftrightarrow ~~ 
		\begin{dcases}
			\alpha - \beta + \gamma - \delta \neq 0 \\
			\alpha - \beta - \gamma + \delta = 0 \\
			\alpha + \beta - \gamma - \delta \neq 0 \\
			\alpha + \beta + \gamma + \delta \neq 2\pi
		\end{dcases}
	\end{equation*}
	\item of type \textit{conic III}, if:
	\begin{equation*} 
		f_{22} \neq 0, ~~ f_{20} \neq 0,~~f_{02} = 0, ~~f_{00} \neq 0  ~~ \Leftrightarrow ~~ 
		\begin{dcases}
			\alpha - \beta + \gamma - \delta \neq 0 \\
			\alpha - \beta - \gamma + \delta \neq 0 \\
			\alpha + \beta - \gamma - \delta = 0 \\
			\alpha + \beta + \gamma + \delta \neq 2\pi
		\end{dcases}
	\end{equation*}
	\item of type \textit{conic IV}, if:
	\begin{equation*} 
		f_{22} \neq 0, ~~ f_{20} \neq 0,~~f_{02} \neq 0, ~~f_{00} = 0  ~~ \Leftrightarrow ~~ 
		\begin{dcases}
			\alpha - \beta + \gamma - \delta \neq 0 \\
			\alpha - \beta - \gamma + \delta \neq 0 \\
			\alpha + \beta - \gamma - \delta \neq 0 \\
			\alpha + \beta + \gamma + \delta = 2\pi
		\end{dcases}
	\end{equation*}
	\item of type \textit{elliptic}, if:
	\begin{equation*} 
		f_{22} \neq 0, ~~ f_{20} \neq 0,~~f_{02} \neq 0, ~~f_{00} \neq 0  ~~ \Leftrightarrow ~~ 
		\begin{dcases}
			\alpha - \beta + \gamma - \delta \neq 0 \\
			\alpha - \beta - \gamma + \delta \neq 0 \\
			\alpha + \beta - \gamma - \delta \neq 0 \\
			\alpha + \beta + \gamma + \delta \neq 2\pi
		\end{dcases}
	\end{equation*}
	Further, if the following relation is satisfied:
	\begin{equation*}
		\cos \alpha \cos \gamma = \cos \beta \cos \delta 
	\end{equation*}
	We call this type \textit{orthodiagonal} since the diagonals are orthogonal on the sphere (sum of the square of opposite arc lengths are equal), and we will analyze its special configuration space later.
\end{enumerate}
Similarly, for the standard form on $x$ and $w$, we have the following relation:
\begin{equation} \label{eq: degree-4 vertex 2}
	f(\beta, ~\alpha, ~\delta, ~\gamma, ~x, ~w) = 0
\end{equation}

\section{Relation between opposite folding angles}

In terms of $x$ and $z$, we could derive the relation between $z^2$ and $x^2$ from spherical trigonometry.
\begin{equation*}
	\begin{gathered}
		\cos u = \cos \delta \cos \gamma - \sin \delta \sin \gamma \dfrac{1-z^2}{1+z^2} = \cos \alpha \cos \beta - \sin \alpha \sin \beta \dfrac{1-x^2}{1+x^2}
	\end{gathered}
\end{equation*}
Next step is to transfer them to polynomials,
\begin{equation*}
	\begin{aligned}
		& \left( \cos(\alpha - \beta) - \cos(\delta - \gamma) \right) x^2 z^2 \\ 
		+ & \left( \cos(\alpha - \beta) - \cos(\delta + \gamma) \right) x^2 \\
		+ & \left( \cos(\alpha + \beta) - \cos(\delta - \gamma) \right) z^2 \\
		+ & \left( \cos(\alpha + \beta) - \cos(\delta + \gamma) \right) =0
	\end{aligned}
\end{equation*}
which leads to the simplified polynomial form on $z$ and $x$ below:
\begin{equation} \label{eq: opposite folding angle}
	g_{22}x^2z^2+g_{20}x^2+g_{02}z^2+g_{00}=0
\end{equation}
where
\begin{equation*} 
	\begin{gathered}
		g_{22}= \sin \dfrac{ -\alpha + \beta + \gamma - \delta }{2} \sin \dfrac{ \alpha - \beta + \gamma - \delta }{2} = \sin (\sigma-\alpha-\delta) \sin (\sigma-\beta-\delta) \\
		g_{20}= \sin \dfrac{ -\alpha + \beta + \gamma + \delta }{2} \sin \dfrac{ \alpha - \beta + \gamma + \delta }{2} = \sin (\sigma-\alpha) \sin (\sigma-\beta) \\
		g_{02}= \sin \dfrac{ -\alpha - \beta + \gamma - \delta }{2} \sin \dfrac{ \alpha + \beta + \gamma - \delta }{2} = -\sin (\sigma-\gamma) \sin (\sigma-\delta)\\
		g_{00}= \sin \dfrac{ \alpha + \beta + \gamma + \delta }{2} \sin \dfrac{ -\alpha - \beta + \gamma + \delta }{2} = \sin \sigma \sin (\sigma - \alpha - \beta) \\
		\sigma=\dfrac{\alpha+\beta+\gamma+\delta}{2}
	\end{gathered}
\end{equation*}
The identities listed in Proposition \ref{prop: identities folding angle} would help to understand the symmetry of coefficients in Equation \eqref{eq: opposite folding angle}.

It is clear that $g_{20}>0$ and $g_{02}<0$. Note that $x$ and $z$ can be separated if $g_{22}x^2+g_{02} \neq 0$:
\begin{equation}
	z^2=-\dfrac{g_{20}x^2+g_{00}}{g_{22}x^2+g_{02}}
\end{equation}
Equation \eqref{eq: opposite folding angle} after a case-by-case study is:
\begin{enumerate} [label={[\arabic*]}]
	\item square
	\begin{equation*}
		\begin{aligned}
			\alpha=\beta=\gamma=\delta=\pi/2 & ~~ \Rightarrow ~~ g_{22}=0, ~~ g_{20}=-g_{02}=1, ~~ g_{00}=0 \\
			& ~~ \Rightarrow ~~ z = \pm x
		\end{aligned}
	\end{equation*}
	\item rhombus
	\begin{equation*}
		\begin{aligned}
			\alpha=\beta=\gamma=\delta \neq \dfrac{\pi}{2} & ~~ \Rightarrow ~~ g_{22}=0, ~~ g_{20}=-g_{02}=\sin^2 \alpha, ~~ g_{00}=0 \\
			& ~~ \Rightarrow ~~ z = \pm x
		\end{aligned}
	\end{equation*}
	\item cross
	\begin{equation*}
		\begin{aligned}
			\alpha= \pi - \beta = \gamma = \pi - \delta \neq \pi/2 & ~~ \Rightarrow ~~ g_{22}=0, ~~ g_{20}=-g_{02}=\sin^2 \alpha, ~~ g_{00}=0 \\
			& ~~ \Rightarrow ~~ z = \pm x
		\end{aligned}
	\end{equation*}
	\item Miura I
	\begin{equation*}
		\begin{aligned}
			\alpha= \pi - \beta = \pi - \gamma = \delta \neq \pi/2 & ~~ \Rightarrow ~~ g_{22}=0, ~~ g_{20}=-g_{02}=\sin^2 \alpha, ~~ g_{00}=0 \\
			& ~~ \Rightarrow ~~ z = \pm x
		\end{aligned}
	\end{equation*}
	\item Miura II
	\begin{equation*}
		\begin{aligned}
			\alpha = \beta = \pi - \gamma = \pi - \delta \neq \pi/2 & ~~ \Rightarrow ~~ g_{22}=0, ~~ g_{20}=-g_{02}=\sin^2 \alpha, ~~ g_{00}=0 \\
			& ~~ \Rightarrow ~~ z = \pm x
		\end{aligned}
	\end{equation*}
	\item isogram 
\begin{equation*}
	\begin{aligned}
		& \quad \quad \gamma=\alpha, ~~ \delta=\beta, ~~\beta \neq \alpha, ~~\alpha + \beta \neq \pi  \\
		& \Rightarrow ~~ g_{22}=0, ~~ g_{20}=-g_{02}=\sin \alpha \sin \beta, ~~g_{00}=0 \\
		& \Rightarrow ~~ z= \pm x
	\end{aligned}
\end{equation*}
\item anti-isogram 
\begin{equation*}
	\begin{aligned}
		& \quad \quad \gamma=\pi - \alpha, ~~ \delta= \pi - \beta, ~~\beta \neq \alpha, ~~\alpha + \beta \neq \pi \\
		& \Rightarrow ~~ g_{22}=0, ~~ g_{20}=-g_{02}=\sin \alpha \sin \beta, ~~g_{00}=0 \\
		& \Rightarrow ~~ z= \pm x
	\end{aligned}
\end{equation*}
\item deltoid I
\begin{equation*}
	\begin{aligned}
		& \quad \quad \delta=\alpha, ~~ \gamma=\beta, ~~\beta \neq \alpha, ~~\alpha + \beta \neq \pi \\
		& \Rightarrow ~~ g_{22}=0, ~~ g_{20}=-g_{02}=\sin \alpha \sin \beta, ~~g_{00}=0 \\
		& \Rightarrow ~~ z= \pm x
	\end{aligned}
\end{equation*}
\item anti-deltoid I
\begin{equation*}
	\begin{aligned}
		& \quad \quad \delta=\pi-\alpha, ~~ \gamma=\pi-\beta, ~~\beta \neq \alpha, ~~\alpha + \beta \neq \pi \\
		& \Rightarrow ~~ g_{22}=0, ~~ g_{20}=-g_{02}=\sin \alpha \sin \beta, ~~g_{00}=0 \\
		& \Rightarrow ~~ z= \pm x
	\end{aligned}
\end{equation*}
\item deltoid II
\begin{equation*}
	\begin{aligned}
		& \quad \quad \alpha=\beta, ~~ \delta=\gamma, ~~\gamma \neq \beta, ~~\beta + \gamma \neq \pi \\
		& \Rightarrow ~~ g_{22}=0, ~~ g_{20}=\sin^2 \gamma, ~~ g_{02}=-\sin^2 \beta, ~~g_{00} = \sin^2 \gamma - \sin^2 \beta \\
		& \Rightarrow ~~ \sin^2 \beta (z^2 +1) = \sin^2 \gamma (x^2+1)  
	\end{aligned}
\end{equation*}
\item anti-deltoid II
\begin{equation*}
	\begin{aligned}
		& \quad \quad \alpha= \pi - \beta, ~~ \delta= \pi - \gamma, ~~\gamma \neq \beta, ~~\beta + \gamma \neq \pi \\
		& \Rightarrow ~~ g_{22}= \sin^2 \beta - \sin^2 \gamma, ~~ g_{20}=\sin^2 \beta, ~~ g_{02}=-\sin^2 \gamma, ~~g_{00} = 0 \\
		& \Rightarrow ~~ \sin^2 \beta (z^{-2} +1) = \sin^2 \gamma (x^{-2}+1) 
	\end{aligned}
\end{equation*}
\item conic I
\begin{equation*}
	\begin{aligned}
		\begin{dcases}
			\alpha - \beta + \gamma - \delta = 0 \\
			\alpha - \beta - \gamma + \delta \neq 0 \\
			\alpha + \beta - \gamma - \delta \neq 0 \\
			\alpha + \beta + \gamma + \delta \neq 2\pi
		\end{dcases} & ~~ \Rightarrow ~~ 
		\begin{dcases*}
			g_{22}=0, ~~ g_{20}=\sin \gamma \sin \delta, ~~ g_{02}=-\sin \alpha \sin \beta \\ g_{00}=\sin(\alpha+\gamma) \sin(-\beta+\gamma)= -\sin \alpha \sin \beta+ \sin \gamma \sin \delta
		\end{dcases*} \\
		& ~~ \Rightarrow ~~ (z^2+1) \sin \alpha \sin \beta  = (x^2 +1 ) \sin \gamma \sin \delta 
	\end{aligned}
\end{equation*}
We refer to Proposition \ref{prop: identities folding angle} for helpful identities in the derivation. 
\item conic II
\begin{equation*}
	\begin{aligned}
		\begin{dcases}
			\alpha - \beta + \gamma - \delta \neq 0 \\
			\alpha - \beta - \gamma + \delta = 0 \\
			\alpha + \beta - \gamma - \delta \neq 0 \\
			\alpha + \beta + \gamma + \delta \neq 2\pi
		\end{dcases} & ~~ \Rightarrow ~~ 
		\begin{dcases*}
			g_{22}=0, ~~ g_{20}=\sin \gamma \sin \delta, ~~ g_{02}=-\sin \alpha \sin \beta \\ g_{00}=\sin(\alpha+\delta) \sin(-\beta+\delta)= -\sin \alpha \sin \beta+ \sin \gamma \sin \delta
		\end{dcases*} \\
		& ~~ \Rightarrow ~~ (z^2+1) \sin \alpha \sin \beta  = (x^2 +1 ) \sin \gamma \sin \delta 
	\end{aligned}
\end{equation*}
\item conic III
\begin{equation*}
	\begin{aligned}
		\begin{dcases}
			\alpha - \beta + \gamma - \delta \neq 0 \\
			\alpha - \beta - \gamma + \delta \neq 0 \\
			\alpha + \beta - \gamma - \delta = 0 \\
			\alpha + \beta + \gamma + \delta \neq 2\pi
		\end{dcases} & ~~ \Rightarrow ~~ 
		\begin{dcases*}
			g_{22}= \sin(\beta-\gamma) \sin(\alpha-\gamma) = \sin \alpha \sin \beta- \sin \gamma \sin \delta \\ g_{20}=\sin \alpha \sin \beta, ~~ g_{02}=-\sin \gamma \sin \delta , ~~ g_{00} = 0
		\end{dcases*} \\
		& ~~ \Rightarrow ~~ (z^{-2}+1) \sin \alpha \sin \beta = (x^{-2} +1) \sin \gamma \sin \delta 
	\end{aligned}
\end{equation*}
\item conic IV
\begin{equation*}
	\begin{aligned}
		\begin{dcases}
			\alpha - \beta + \gamma - \delta \neq 0 \\
			\alpha - \beta - \gamma + \delta \neq 0 \\
			\alpha + \beta - \gamma - \delta \neq 0 \\
			\alpha + \beta + \gamma + \delta = 2\pi
		\end{dcases} & ~~ \Rightarrow ~~ 
		\begin{dcases*}
			g_{22}= \sin(\beta+\gamma) \sin(\alpha+\gamma) = \sin \alpha \sin \beta- \sin \gamma \sin \delta \\ g_{20}=\sin \alpha \sin \beta, ~~ g_{02}=-\sin \gamma \sin \delta , ~~ g_{00} = 0
		\end{dcases*} \\
		& ~~ \Rightarrow ~~ (z^{-2}+1) \sin \alpha \sin \beta = (x^{-2} +1)\sin \gamma \sin \delta 
	\end{aligned}
\end{equation*}
\item Elliptic. See Section \ref{section: spherical elliptic} since the expression is not reduced.
\end{enumerate}

\section{Relation between the lengths of diagonals}

The polynomial relation between the spherical lengths of the diagonals $u$ and $v$ holds significant importance in our study and finds frequent application.

To simplify our analysis, it is necessary to examine the relationship between $u$ and $v$ independently of the folding angles. An invaluable method for this purpose is the spherical adaptation of the \textit{Cayley-Menger determinant}, applied to the four vertices of a spherical 4-bar linkage. The criterion for this linkage to reside on a sphere is as follows:
\begin{equation*}
	\begin{gathered}
		\left|
		\begin{matrix}
			\dfrac{1}{2} & 1 & 1 & 1 & 1 \\
			1 & 0 & 2(1-\cos \alpha) & 2(1-\cos u) & 2(1-\cos \delta) \\
			1 & 2(1-\cos \alpha) & 0 & 2(1-\cos \beta) &  2(1-\cos v) \\
			1 & 2(1-\cos u) & 2(1-\cos \beta) & 0 & 2(1-\cos \gamma) \\
			1 & 2(1-\cos \delta) & 2(1-\cos v) & 2(1-\cos \gamma) & 0
		\end{matrix}
		\right| = 0 \\[10pt]
	\end{gathered}
\end{equation*}
After full simplification, the result is an order 3 polynomial over $1 - \cos u$ and $1- \cos v$:
\begin{equation} \label{eq: relation between diagonals spherical}
	\begin{aligned}
		h(\alpha, ~\beta, ~\gamma, ~\delta, ~u, ~ v) & = (1- \cos u)^2(1-\cos v)^2 \\
		& \quad -2(1- \cos u)^2(1-\cos v)-2(1- \cos u)(1-\cos v)^2 \\
		& \quad +h_{11}(1- \cos u)(1-\cos v) \\
		& \quad +h_{10}(1- \cos u)+h_{01}(1- \cos v)+h_{00}=0
	\end{aligned}
\end{equation}
where
\begin{equation*}
	\begin{gathered}
		h_{11}=2 (2 -\cos \alpha \cos \gamma - \cos \beta \cos \delta) \\
		h_{10}=2(\cos \beta-\cos \gamma)(\cos \delta-\cos \alpha) \\
		h_{01}=2(\cos \alpha-\cos \beta)(\cos \gamma-\cos \delta) \\
		\begin{aligned}
			h_{00} & = (\cos \alpha \cos \gamma - \cos \beta \cos \delta)^2 - (\cos \alpha - \cos \beta + \cos \gamma - \cos \delta)^2 \\
			& = \left( (1- \cos \alpha) (1- \cos \gamma) - (1- \cos \beta) (1- \cos \delta) \right) \\ & \quad  \left( (1+ \cos \alpha) (1+ \cos \gamma) - (1+ \cos \beta) (1+ \cos \delta) \right)
		\end{aligned}
	\end{gathered}
\end{equation*}
Here we use two simple examples to show how to make use of Equation \eqref{eq: relation between diagonals spherical}:
\begin{enumerate} [label={[\arabic*]}]
	\item square
	\begin{equation*}
		\begin{aligned}
			\alpha=\beta=\gamma=\delta=\pi/2 & ~~ \Rightarrow ~~ h_{11}=4, ~~ h_{10}=h_{01}=h_{00} = 0 \\
			& ~~ \Rightarrow ~~ (1 - \cos u)(1 - \cos v)(1 + \cos u)(1 + \cos v) = 0 \\
			& ~~ \Rightarrow ~~ u \equiv 0 ~~\mathrm{or}~~ v \equiv 0 ~~\mathrm{or}~~ u \equiv \pi ~~\mathrm{or}~~ v \equiv \pi \\
		\end{aligned}
	\end{equation*}
	\item rhombus
	\begin{equation*}
		\begin{aligned}
			\alpha=\beta=\gamma=\delta \neq \dfrac{\pi}{2} & ~~ \Rightarrow ~~ h_{11}=4-4 \cos^2 \alpha, ~~ h_{10}=h_{01}=h_{00} = 0 \\
			& ~~ \Rightarrow ~~ (1 - \cos u)(1 - \cos v)\left((1 + \cos u)(1 + \cos v)-4 \cos^2 \alpha\right) = 0 \\
			& ~~ \Rightarrow ~~ u \equiv 0 ~~\mathrm{or}~~ v \equiv 0 ~~\mathrm{or}~~ (1 + \cos u)(1 + \cos v) = 4 \cos^2 \alpha \\
		\end{aligned}
	\end{equation*}
\end{enumerate}

Equation \eqref{eq: relation between diagonals spherical} leads to a useful proposition:
\begin{prop}
	There is a one-to-one correspondence between the two spherical 4-bar linkages below:
	\begin{equation}
		(\alpha, ~\beta, ~\gamma, ~\delta, ~u, ~v) \Leftrightarrow (\sigma-\alpha, ~\sigma-\beta, ~\sigma-\gamma, ~\sigma-\delta, ~u, ~v)
	\end{equation}
	That is to say, for every spherical 4-bar linkage with sector angles $\alpha, ~\beta, ~\gamma, ~\delta$ and diagonal lengths $u, ~v$, there is another spherical 4-bar linkage with sector angles $\sigma-\alpha, ~\sigma-\beta, ~\sigma-\gamma, ~\sigma-\delta$ and the same diagonal lengths $u, ~v$. We say they are \textit{conjugate} spherical 4-bar linkages.
\end{prop}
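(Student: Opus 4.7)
The substitution $T \colon (\alpha,\beta,\gamma,\delta) \mapsto (\sigma-\alpha,\sigma-\beta,\sigma-\gamma,\sigma-\delta)$ preserves the semi-perimeter, since the new sum of sector angles is $4\sigma-2\sigma=2\sigma$, so the conjugate semi-perimeter equals $\sigma$ and $T\circ T$ is the identity. It therefore suffices to show that $T$, extended by $(u,v)\mapsto(u,v)$, sends valid linkage data to valid linkage data; bijectivity is then automatic from $T$ being an involution. The argument splits into two parts: (i) check that the constraints characterising a spherical 4-bar linkage are preserved by $T$, and (ii) check that Equation \eqref{eq: relation between diagonals spherical} itself is invariant under $T$.

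For part (i), the inequalities of Proposition \ref{prop: sector angle range} are preserved almost trivially: $\sigma-\alpha=(\beta+\gamma+\delta-\alpha)/2$, so $0<\sigma-\alpha<\pi$ is exactly $\alpha<\beta+\gamma+\delta<\alpha+2\pi$, and the three pair-sum constraints on the conjugate angles reduce to $\alpha,\beta,\gamma,\delta\in(0,\pi)$ after the same rewriting. Moreover, the diagonal ranges used in the sufficiency half of the proof of Proposition \ref{prop: sector angle range} are also preserved, because $(\sigma-\alpha)+(\sigma-\beta)=\gamma+\delta$ and $(\sigma-\gamma)+(\sigma-\delta)=\alpha+\beta$: the two triangle-inequality constraints on $u$ coming from the triangles $(\alpha,\beta,u)$ and $(\gamma,\delta,u)$ merely swap under $T$, and analogously for $v$, so the admissible region for $(u,v)$ is unchanged.

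For part (ii), one checks that each coefficient $h_{22},h_{20},h_{02},h_{11},h_{10},h_{01},h_{00}$ (with the pure $(1-\cos u)^p(1-\cos v)^q$ coefficients being constants) is invariant under $T$. The $h_{10}$ and $h_{01}$ cases are quick sum-to-product manipulations, for instance
\begin{equation*}
\cos(\sigma-\beta)-\cos(\sigma-\gamma)=-2\sin\tfrac{\alpha+\delta}{2}\sin\tfrac{\gamma-\beta}{2},
\end{equation*}
using $2\sigma-\beta-\gamma=\alpha+\delta$, which after multiplication with the analogous factorization of $\cos(\sigma-\delta)-\cos(\sigma-\alpha)$ reproduces $h_{10}$ up to sign cancellation. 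For $h_{11}$, the identity $\cos(\sigma-\alpha)\cos(\sigma-\gamma)+\cos(\sigma-\beta)\cos(\sigma-\delta)=\cos\alpha\cos\gamma+\cos\beta\cos\delta$ expands via the angle-subtraction formula and collapses because $\alpha+\gamma$ and $\beta+\delta$ each appear paired with $2\sigma-(\alpha+\gamma)=\beta+\delta$. The main obstacle is $h_{00}$: I would use the factored form displayed in the excerpt and the half-angle rewriting $1\mp\cos\alpha=2\sin^2(\alpha/2)$ or $2\cos^2(\alpha/2)$, so that the substitution $\alpha\mapsto\sigma-\alpha$ converts each half-angle factor into a product-to-sum expression in $\sigma/2$ and the individual $\alpha/2,\beta/2,\gamma/2,\delta/2$; the two quartic factors are then shown invariant up to sign, whose signs cancel in the product. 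With all coefficients of $h$ invariant, Equation \eqref{eq: relation between diagonals spherical} holds for the conjugate sector angles with the same $(u,v)$, which combined with part (i) yields a well-defined valid conjugate linkage and hence the claimed one-to-one correspondence.
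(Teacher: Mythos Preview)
Your proposal is correct and follows essentially the same route as the paper: show that the coefficients $h_{11},h_{10},h_{01},h_{00}$ of the Cayley--Menger relation \eqref{eq: relation between diagonals spherical} are invariant under $(\alpha,\beta,\gamma,\delta)\mapsto(\sigma-\alpha,\sigma-\beta,\sigma-\gamma,\sigma-\delta)$. The paper's argument is terser---it simply defers to the identities collected in Proposition~\ref{prop: identities folding angle} (items [5]--[7] and their permutations give exactly the invariances you need)---whereas you sketch the sum-to-product manipulations directly; you also explicitly verify that the conjugate sector angles satisfy Proposition~\ref{prop: sector angle range}, a point the paper leaves implicit.
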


By leveraging the identities outlined in Proposition \ref{prop: identities folding angle} and performing direct symbolic computations, it becomes evident that the coefficients $h_{11}, ~h_{10}, ~h_{01}, ~h_{00}$ remain unchanged when transitioning a spherical 4-bar linkage to its conjugate.

\section{Helpful identities and sign convention} \label{section: sign convention folding angle}

This section presents several useful identities that aid in analysing the polynomial relations discussed above.

\begin{prop} \label{prop: identities folding angle}
	Some identities over $\alpha, ~\beta, ~\gamma, ~ \delta$ and $\sigma-\alpha, ~\sigma-\beta, ~ \sigma-\gamma, ~\sigma-\delta$. Note that $\sigma=(\alpha+\beta+\gamma+\delta)/2$ is the semi-perimeter without further clarification.
	\begin{enumerate} [label={[\arabic*]}]
		\item 
		\begin{equation*}
			\sigma-\alpha-\beta =  -(\sigma-\gamma-\delta)
		\end{equation*}
		\item
		\begin{equation*}
			\begin{aligned}
				\sin(\sigma-\alpha) \sin(\sigma-\beta)- \sin \alpha \sin \beta 
				& = \dfrac{1}{2} \left(\cos (\alpha + \beta) - \cos (\gamma + \delta) \right) \\
				& = \sin \sigma \sin(\sigma-\alpha-\beta) \\
				& = -\sin \sigma \sin(\sigma-\gamma-\delta) \\
				& = \sin \gamma \sin \delta - \sin (\sigma-\gamma) \sin (\sigma-\delta)
			\end{aligned}
		\end{equation*} 
		\item
		\begin{equation*}
			\begin{aligned}
				\sin (\sigma-\alpha) \sin (\sigma-\beta)- \sin \gamma \sin \delta & = \dfrac{1}{2} \left(\cos (\alpha - \beta) - \cos (\gamma - \delta) \right) \\
				& = \sin (\sigma-\alpha-\gamma) \sin (\sigma-\beta-\gamma) \\
				& = - \sin (\sigma-\gamma-\alpha) \sin (\sigma-\delta-\alpha) \\
				& = \sin \alpha \sin \beta- \sin (\sigma-\gamma) \sin (\sigma-\delta)
			\end{aligned}
		\end{equation*}
		\item 
		\begin{equation*}
			\sin \alpha \sin \beta + \sin \gamma \sin \delta = \sin (\sigma-\alpha) \sin (\sigma-\beta) + \sin (\sigma-\gamma) \sin (\sigma-\delta) 
		\end{equation*}
		\begin{equation*}
			\begin{aligned}
				& \sin \alpha \sin \beta \sin \gamma \sin \delta - \sin (\sigma-\alpha) \sin (\sigma-\beta) \sin (\sigma-\gamma) \sin (\sigma-\delta) \\
				= &  [\sin(\sigma-\alpha) \sin (\sigma-\beta)+ \sin (\sigma-\gamma) \sin (\sigma-\delta)- \sin \gamma \sin \delta] \sin \gamma \sin \delta \\ & \quad -\sin (\sigma-\alpha) \sin (\sigma-\beta) \sin (\sigma-\gamma) \sin (\sigma-\delta) \\ 
				= &  (\sin (\sigma-\alpha) \sin (\sigma-\beta)- \sin \gamma \sin \delta)( \sin \gamma \sin \delta-\sin (\sigma-\gamma) \sin (\sigma-\delta)) \\
				= &  \sin \sigma \sin (\sigma-\alpha-\beta) \sin (\sigma-\alpha-\gamma) \sin (\sigma-\beta-\gamma)			
			\end{aligned}
		\end{equation*}
		\item
		\begin{equation*}
			\begin{aligned}
				\cos(\sigma-\alpha) \cos(\sigma-\beta)- \cos \alpha \cos \beta 
				& = \dfrac{1}{2} \left(-\cos (\alpha + \beta) + \cos (\gamma + \delta) \right) \\
				& = -\sin \sigma \sin(\sigma-\alpha-\beta) \\
				& = \sin \sigma \sin(\sigma-\gamma-\delta) \\
				& = \cos \gamma \cos \delta - \cos (\sigma-\gamma) \cos (\sigma-\delta)
			\end{aligned}
		\end{equation*} 
		\item
		\begin{equation*}
			\begin{aligned}
				\cos (\sigma-\alpha) \cos (\sigma-\beta)- \cos \gamma \cos \delta & = \dfrac{1}{2} \left( \cos (\alpha - \beta) - \cos (\gamma - \delta) \right) \\
				& = \sin (\sigma-\alpha-\gamma) \sin (\sigma-\beta-\gamma) \\
				& = - \sin (\sigma-\gamma-\alpha) \sin (\sigma-\delta-\alpha) \\
				& = \cos \alpha \cos \beta- \cos (\sigma-\gamma) \cos (\sigma-\delta)
			\end{aligned}
		\end{equation*}
		\begin{equation*}
			\cos \alpha \cos \beta + \cos \gamma \cos \delta = \cos (\sigma-\alpha) \cos (\sigma-\beta) + \cos (\sigma-\gamma) \cos (\sigma-\delta) 
		\end{equation*}
		\item 
		\begin{equation*}
			\begin{aligned}
				& (\cos (\sigma-\alpha) - \cos (\sigma - \beta))(\cos (\sigma-\gamma) - \cos (\sigma - \delta)) \\
				= &4 \sin \dfrac{\alpha-\beta}{2} \sin \dfrac{\gamma + \delta}{2} \sin \dfrac{\alpha - \beta}{2} \sin \dfrac{\gamma + \delta}{2} \\
				= & (\cos \alpha - \cos \beta)(\cos \gamma - \cos \delta)
			\end{aligned}
		\end{equation*}
		\item 
		\begin{equation*}
			\begin{aligned}
				\sin \alpha \sin \beta - \sin \gamma \sin \delta = \cos(\sigma-\alpha)\cos(\sigma-\beta) - \cos(\sigma-\gamma)\cos(\sigma-\delta) \\
				\cos \alpha \cos \beta - \cos \gamma \cos \delta = \sin(\sigma-\alpha)\sin(\sigma-\beta) - \sin(\sigma-\gamma)\sin(\sigma-\delta) \\
			\end{aligned}
		\end{equation*}
		\item When $\cos \alpha\cos \gamma = \cos \beta \cos \delta$:
		\begin{equation*}
			\begin{gathered}
				\begin{aligned}
					& \quad \sin(\sigma-\beta)\sin(\sigma-\beta-\delta)  = \dfrac{1}{2} \left(\cos \delta - \cos(\alpha+\gamma-\beta) \right) \\
					& = \dfrac{\cos \beta\cos \delta - \cos \beta \cos(\alpha+\gamma-\beta)}{2\cos \beta} = \dfrac{\cos \alpha\cos \gamma - \cos \beta \cos(\alpha+\gamma-\beta)}{2\cos \beta}  \\
					& = \dfrac{\cos (\alpha - \gamma) - \cos (2\beta-\alpha-\gamma)}{4\cos \beta} = \dfrac{\sin (\beta - \alpha) \sin (\beta - \gamma)}{2\cos \beta} 
				\end{aligned} \\
				\sin (\sigma-\alpha) \sin (\sigma-\alpha-\delta) = \dfrac{\sin (\beta - \alpha) \sin (\beta + \gamma)}{2\cos \beta} \\
				\sin (\sigma-\gamma) \sin (\sigma-\gamma-\delta) = \dfrac{\sin (\beta + \alpha) \sin (\beta - \gamma)}{2\cos \beta} \\
				\sin \sigma \sin (\sigma-\delta) = \dfrac{\sin (\beta + \alpha) \sin (\beta + \gamma)}{2\cos \beta} \\
				\sin(\sigma - \alpha)\sin(\sigma - \gamma) = \dfrac{1}{2}(\sin \alpha \sin \gamma + \sin \beta \sin \delta) = \sin(\sigma - \beta)\sin(\sigma - \delta)
			\end{gathered}
		\end{equation*}
	\end{enumerate}
	These identities also hold for all the permutations over $\alpha, ~\beta, ~\gamma, ~\delta$.
\end{prop}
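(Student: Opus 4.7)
The plan is to reduce every claim to mechanical uses of the product-to-sum formulas $2\sin A \sin B = \cos(A-B) - \cos(A+B)$ and $2\cos A \cos B = \cos(A-B) + \cos(A+B)$, together with the single key observation that $(\sigma-\alpha)+(\sigma-\beta) = \gamma+\delta$ and $(\sigma-\alpha)-(\sigma-\beta) = \beta-\alpha$, both immediate from $2\sigma = \alpha+\beta+\gamma+\delta$. After product-to-sum, $\sin(\sigma-\alpha)\sin(\sigma-\beta)$ and $\sin\alpha\sin\beta$ share the common term $\tfrac12\cos(\alpha-\beta)$ and differ only by $-\tfrac12\cos(\gamma+\delta)$ versus $-\tfrac12\cos(\alpha+\beta)$; the cosine analogue is identical.

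First I would dispatch the easy identities. Identity [1] is arithmetic: both sides equal $\pm(\gamma+\delta-\alpha-\beta)/2$. For [2] the product-to-sum expansion makes the $\cos(\alpha-\beta)$ terms cancel, leaving $\tfrac12[\cos(\alpha+\beta)-\cos(\gamma+\delta)]$, which is reassembled as $\sin\sigma\sin(\sigma-\alpha-\beta)$ by the same formula via $\sigma+(\sigma-\alpha-\beta)=\gamma+\delta$; the conjugate form comes from swapping $(\alpha,\beta)\leftrightarrow(\gamma,\delta)$ and applying [1]. Identity [3] is entirely parallel, with the roles of $(\alpha\pm\beta)$ and $(\gamma\pm\delta)$ exchanged. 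The cosine identities [5] and [6] are verbatim copies using the cosine product-to-sum formula, and [7] is two applications of $\cos A - \cos B = -2\sin\tfrac{A+B}{2}\sin\tfrac{A-B}{2}$, one to each factor. Identity [8] follows by equating the two expressions for $\sin\sigma\sin(\sigma-\alpha-\beta)$, respectively its cosine analogue, provided by [2] (or [5]) and its $(\gamma,\delta)$-swap.

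The main obstacles are the quartic identity in [4] and the orthodiagonal identity [9]. For [4] I would first obtain the preliminary rearrangement $\sin(\sigma-\alpha)\sin(\sigma-\beta)+\sin(\sigma-\gamma)\sin(\sigma-\delta)-\sin\gamma\sin\delta = \sin\alpha\sin\beta$ by equating the two $\sin\sigma\sin(\sigma-\alpha-\beta)$ expressions in [2]. Substituting into $\sin\alpha\sin\beta\cdot\sin\gamma\sin\delta - \sin(\sigma-\alpha)\sin(\sigma-\beta)\sin(\sigma-\gamma)\sin(\sigma-\delta)$ and collecting terms produces the factorization $[\sin(\sigma-\alpha)\sin(\sigma-\beta)-\sin\gamma\sin\delta]\cdot[\sin\gamma\sin\delta-\sin(\sigma-\gamma)\sin(\sigma-\delta)]$; the first bracket is $\sin(\sigma-\alpha-\gamma)\sin(\sigma-\beta-\gamma)$ by [3] and the second is $\sin\sigma\sin(\sigma-\alpha-\beta)$ by [2] applied to $(\gamma,\delta)$, delivering the target product. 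For [9] I would exploit $\cos\alpha\cos\gamma = \cos\beta\cos\delta$ inside the standard product-to-sum chain: after writing $\sin(\sigma-\beta)\sin(\sigma-\beta-\delta) = \tfrac12[\cos\delta-\cos(\alpha+\gamma-\beta)]$, multiplying numerator and denominator by $\cos\beta$ swaps $\cos\beta\cos\delta$ for $\cos\alpha\cos\gamma$, and a final sum-to-product collapses the numerator to $\sin(\beta-\alpha)\sin(\beta-\gamma)$. The remaining three factored expressions follow by repeating the same chain starting from $\sin(\sigma-\alpha)\sin(\sigma-\alpha-\delta)$, $\sin(\sigma-\gamma)\sin(\sigma-\gamma-\delta)$, and $\sin\sigma\sin(\sigma-\delta)$, which produce analogous sign patterns in the trailing sines. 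The concluding identity of [9] is the first line of [4] restricted by the orthodiagonal relation. The permutation-symmetry remark at the end of the proposition is automatic, since every step used only the defining relation $2\sigma=\alpha+\beta+\gamma+\delta$, which is invariant under all permutations of $\alpha,\beta,\gamma,\delta$.
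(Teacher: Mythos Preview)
Your proposal is correct and matches the paper's approach: the paper does not give a separate proof for this proposition, but the derivation steps for the nontrivial parts [4] and [9] are already embedded in the statement itself (the multi-line displays there are exactly the substitution-then-factor chain you describe for [4], and the $\cos\beta\cos\delta\to\cos\alpha\cos\gamma$ swap you describe for [9]). Your outline for [1]--[3], [5]--[8] via product-to-sum and the observation $(\sigma-\alpha)+(\sigma-\beta)=\gamma+\delta$ is the standard route the paper implicitly assumes.
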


We will also provide some conventions and notations used throughout this note:
\begin{equation} \label{eq: sqrt sign convention}
	\sqrt{a}=
	\begin{cases}
		\begin{aligned}
			& \sqrt{a} ~(a \ge 0) \\
			& i \sqrt{-a} ~(a<0)
		\end{aligned}
	\end{cases}
\end{equation}
\begin{equation*}
	\begin{gathered}
		p_x=\sqrt{\dfrac{\sin \alpha \sin \beta}{\sin (\sigma-\alpha) \sin(\sigma-\beta)}-1}, ~~ p_y=\sqrt{\dfrac{\sin \beta \sin \gamma}{\sin (\sigma-\beta) \sin (\sigma-\gamma)}-1} \\
		p_z=\sqrt{\dfrac{\sin \gamma \sin \delta}{\sin (\sigma- \gamma) \sin (\sigma-\delta)}-1}, ~~ p_w=\sqrt{\dfrac{\sin \delta \sin \alpha}{\sin (\sigma-\delta) \sin (\sigma-\alpha)}-1}
	\end{gathered}
\end{equation*}

We conclude this section by presenting a significant transformation referred to as switching a strip. The term is inspired by a concept proposed by Prof. Ivan Izmestiev:
\begin{prop} \label{prop: switch a strip}
	One-to-one correspondence between two spherical quadrilaterals.
	\begin{enumerate} [label={[\arabic*]}]
		\item $(\alpha, ~\beta, ~\gamma, ~\delta, ~x, ~y, ~z, ~w) \Leftrightarrow (\alpha, ~\beta, ~\pi-\gamma, ~\pi-\delta, ~x, ~-y^{-1}, ~-z, ~-w^{-1})$
		\item $(\alpha, ~\beta, ~\gamma, ~\delta, ~x, ~y, ~z, ~w) \Leftrightarrow (\pi-\alpha, ~\beta, ~\gamma, ~\pi-\delta, ~-x^{-1}, ~y, ~-z^{-1}, ~-w)$ 
		\item $(\alpha, ~\beta, ~\gamma, ~\delta, ~x, ~y, ~z, ~w) \Leftrightarrow (\pi-\alpha, ~\pi - \beta, ~\gamma, ~\delta, ~-x, ~-y^{-1}, ~z, ~-w^{-1})$ 
		\item $(\alpha, ~\beta, ~\gamma, ~\delta, ~x, ~y, ~z, ~w) \Leftrightarrow (\alpha, ~\pi - \beta, ~\pi - \gamma, ~\delta, ~-x^{-1}, ~-y, ~-z^{-1}, ~w)$ 
	\end{enumerate}
\end{prop}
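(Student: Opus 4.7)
The plan is to interpret each of the four transformations [1]--[4] as the geometric operation of replacing one edge endpoint of the spherical quadrilateral with its antipode on the unit sphere, using the labelling of Figure \ref{fig: derivation vertex}: transformation [1] corresponds to $C \mapsto -C$, [2] to $B \mapsto -B$, [3] to $A \mapsto -A$, and [4] to $D \mapsto -D$. Since antipodal reflection of a vertex is an involution, the resulting map on configurations is automatically bijective; the content of the proposition is therefore the explicit formula for how that operation acts on $(\alpha, \beta, \gamma, \delta, x, y, z, w)$.

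First I would verify the effect on sector angles via the identity $\angle(-P)OQ = \pi - \angle POQ$ for $Q \neq \pm P$: the two sectors incident to edge $OP$ become their supplements and the other two are unchanged, which upon checking each of the four vertices in turn reproduces exactly the sector-angle substitutions listed in [1]--[4]. Next, for the folding angles, I would combine the tangent identities $\tan((\pi+\rho)/2) = -1/\tan(\rho/2)$ and $\tan(-\rho/2) = -\tan(\rho/2)$ with the following dihedral-angle picture. Reflecting $P$ reverses the orientation of edge $OP$, flipping the sign of the folding angle along it, so the corresponding tangent half-angle transforms as $t \mapsto -t$; each of the two panels adjacent to $OP$ is replaced by its supplementary antipodal half in the same plane, which shifts the dihedral angles along the two adjacent edges by $\pi$ and therefore sends each of those tangent half-angles $s \mapsto -1/s$; and the dihedral angle along the edge opposite to $OP$ is unchanged. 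Running this template over the four choices of $P$ yields exactly the folding-angle substitutions written in [1]--[4].

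To make the folding-angle analysis fully rigorous, I would either redo the explicit coordinate computation of Figure \ref{fig: derivation vertex} with $P$ replaced by $-P$ and recover the transformed tangent half-angles directly from the resulting parametrization, or, more economically, substitute the transformed parameters into Equations \eqref{eq: degree-4 vertex adjacent} and \eqref{eq: opposite folding angle} and verify that both polynomial relations are preserved, reducing the argument to a symbolic identity check aided by Proposition \ref{prop: identities folding angle}. The main obstacle I expect is the sign bookkeeping in this folding-angle step, in particular at the boundary of the one-point compactification where some of $x, y, z, w$ equal $0$ or $\infty$ and the formulas $-t$, $-1/t$ must be read consistently via $-1/0 = \infty$ and $-1/\infty = 0$. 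A short side check is that the transformed linkage still satisfies Proposition \ref{prop: sector angle range}; this follows from the permutation symmetry of the four inequalities in \eqref{eq: sector angle range} under replacement of any two adjacent sectors by their supplements.
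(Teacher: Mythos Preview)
The paper does not actually prove this proposition: it is stated at the end of Section~\ref{section: sign convention folding angle} as a transformation rule and then used repeatedly, but no argument is supplied. So there is no ``paper's proof'' to compare against; your proposal fills a genuine gap.

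Your geometric reading is correct. With the labelling of Figure~\ref{fig: derivation vertex} (crease $x$ between sectors $\alpha,\beta$; crease $y$ between $\beta,\gamma$; crease $z$ between $\gamma,\delta$; crease $w$ between $\delta,\alpha$), each of [1]--[4] is exactly the antipodal replacement $C\mapsto -C$, $B\mapsto -B$, $A\mapsto -A$, $D\mapsto -D$ respectively, and the three effects you describe (sign flip on the reflected crease, $t\mapsto -1/t$ on the two adjacent creases via a $\pi$-shift of the dihedral angle, no change on the opposite crease) check out against the listed formulas. The involution property then gives bijectivity for free. One small point worth making explicit in the write-up: when both panels at the reflected crease are replaced by their antipodal supplements, the two $\pi$-shifts cancel and only the orientation reversal of the crease itself survives, which is why that variable picks up $t\mapsto -t$ rather than $t\mapsto t$ or $t\mapsto 1/t$.

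Of your two suggested rigorous routes, the algebraic one (substitute the transformed data into Equations~\eqref{eq: degree-4 vertex adjacent} and~\eqref{eq: opposite folding angle} and check invariance) is closest to the paper's overall style and is essentially mechanical: for instance, under [1] the coefficients $(f_{22},f_{20},f_{02},f_{00})$ of $f(\alpha,\beta,\gamma,\delta,x,y)$ become $(-f_{02},-f_{00},-f_{22},-f_{20})$ while $(x,y)\mapsto(x,-y^{-1})$ swaps the roles of $y^2$ and $1$ with a sign, and the $xy$ term is preserved because $f_{11}=-\sin\alpha\sin\gamma$ is invariant under $\gamma\mapsto\pi-\gamma$. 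If you go that route, doing one case in full and invoking the cyclic symmetry of the labelling for the other three is enough.
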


Generically, given a flex of $x$,  there will be two corresponding solutions for $y$ from Equation \eqref{eq: degree-4 vertex adjacent}, two solutions for $z$ based on Equation Equation \eqref{eq: opposite folding angle}, and two solutions for $w$ from Equation \eqref{eq: degree-4 vertex 2}. Determining which combination of $\{x, ~y, ~z, ~w\}$ constitutes the correct solution requires further validation. The discussions in the earlier sections are insufficient to fully address solutions at infinity within this polynomial system. For a comprehensive analysis, it is necessary to bihomogenize the equations mentioned above, as detailed in \citet[Section 6]{he_real_2023}.

The following sections will explore both finite solutions and solutions at infinity, addressing each case individually. 

\section{Square: finite solution} 

The condition on sector angles is:
\begin{equation*}
	f_{22}=0, ~~ f_{20}=0,~~f_{02}=0, ~~f_{00} = 0 ~~ \Leftrightarrow ~~  \alpha=\beta=\gamma=\delta=\pi/2
\end{equation*}
which means:
\begin{equation*}
	\begin{dcases}
		xy = 0 \\
		z = \pm x \\
		xw = 0
	\end{dcases}
\end{equation*}
After post-examination, there are two branches of flexes. Each branch will be diffeomorphic to a circle $S^1$ with the parametrization below:
\begin{equation*}
	\begin{cases}
		x \in \mathbb{R} \cup \{\infty\} \\
		y = 0 \\
		z = x \\
		w = 0
	\end{cases} \mathrm{or} \quad \begin{cases}
		x = 0 \\
		y \in \mathbb{R} \cup \{\infty\} \\
		z = 0 \\
		w = y
	\end{cases} 
\end{equation*}
which are just simply folded along collinear opposite creases.

\section{Rhombus: finite solution}

The condition on sector angles is
\begin{equation*}
	f_{22}=0, ~~ f_{20}=0,~~f_{02}=0, ~~f_{00} \neq 0 ~~ \Leftrightarrow ~~  \alpha=\beta=\gamma=\delta \neq \pi/2
\end{equation*}
which means:
\begin{equation*}
	\begin{aligned}
		\begin{dcases}
			xy = \cos \alpha \\
			z = \pm x \\
			xw = \cos \alpha 
		\end{dcases} ~~ \Rightarrow ~~ & \begin{dcases}
			y = \dfrac{\cos \alpha}{x} \\
			z = \pm x \\
			w = \dfrac{\cos \alpha}{x} 
		\end{dcases}
	\end{aligned}
\end{equation*}
From post-examination, there is a single branch of flex diffeomorphic to a circle $S^1$ with the parametrization below:
\begin{equation*}
	\begin{dcases}
		x \in \mathbb{R} \cup \{\infty\} \\
		y = \dfrac{\cos \alpha}{x} \\
		z = x \\
		w = \dfrac{\cos \alpha}{x} 
	\end{dcases}
\end{equation*}
which is exactly a rhombus on a sphere.

\section{Cross: finite solution}

The condition on sector angles is:
\begin{equation*}
	f_{22} \neq 0, ~~ f_{20}=0,~~f_{02}=0, ~~f_{00} = 0 ~~ \Leftrightarrow ~~  \alpha= \pi - \beta = \gamma = \pi - \delta \neq \pi/2
\end{equation*}
which means:
\begin{equation*}
	\begin{dcases}
		xy(xy \cos \alpha + 1) = 0 \\
		z = \pm x \\
		xw(xw \cos \alpha - 1) = 0
	\end{dcases}
\end{equation*}
After post-examination, there are three branches of flexes. Each branch will be diffeomorphic to a circle $S^1$ with the parametrization below:
\begin{equation*}
	\begin{cases}
		x \in \mathbb{R} \cup \{\infty\} \\
		y = 0 \\
		z = x \\
		w = 0
	\end{cases} \mathrm{or} \quad \begin{cases}
		x = 0 \\
		y \in \mathbb{R} \cup \{\infty\} \\
		z = 0 \\
		w = y
	\end{cases}  \mathrm{or} \quad  \begin{cases}
		x \in \mathbb{R} \cup \{\infty\} \\
		y = -\dfrac{1}{x\cos \alpha} \\
		z = -x \\
		w = \dfrac{1}{x\cos \alpha}
	\end{cases}
\end{equation*}
The first two are just simply folded along collinear opposite creases. The third has a self-intersected `butterfly' shape. 

\section{Miura I: finite solution}

The condition on sector angles is:
\begin{equation*}
	f_{22} = 0, ~~ f_{20} \neq 0,~~f_{02}=0, ~~f_{00} = 0 ~~ \Leftrightarrow ~~  \alpha= \pi - \beta = \pi - \gamma = \delta \neq \pi/2
\end{equation*}
which means:
\begin{equation*}
	\begin{dcases}
		x (x \cos \alpha - y) = 0 \\
		z = \pm x \\
		x (x \cos \alpha + w) = 0
	\end{dcases}
\end{equation*}
After post-examination, there are two branches of flexes. Each branch will be diffeomorphic to a circle $S^1$ with the parametrization below:
\begin{equation*}
	\begin{cases}
		x = 0 \\
		y \in \mathbb{R} \cup \{\infty\} \\
		z = 0 \\
		w = y
	\end{cases}  \mathrm{or} \quad  \begin{cases}
		x \in \mathbb{R} \cup \{\infty\} \\
		y = x\cos \alpha \\
		z = x \\
		w = -x\cos \alpha
	\end{cases}
\end{equation*}

\section{Miura II: finite solution}

The condition on sector angles is:
\begin{equation*}
	f_{22} = 0, ~~ f_{20} = 0,~~f_{02} \neq 0, ~~f_{00} = 0 ~~ \Leftrightarrow ~~  \alpha = \beta = \pi - \gamma = \pi - \delta \neq \pi/2
\end{equation*}
which means:
\begin{equation*}
	\begin{dcases}
		y (y \cos \alpha + x) = 0 \\
		z = \pm x \\
		w (w \cos \alpha + x) = 0
	\end{dcases}
\end{equation*}
After post-examination, there are two branches of flexes. Each branch will be diffeomorphic to a circle $S^1$ with the parametrization below:
\begin{equation*}
	\begin{cases}
		x \in \mathbb{R} \cup \{\infty\} \\
		y = 0 \\
		z = x \\
		w = 0
	\end{cases}  \mathrm{or} \quad  \begin{cases}
		x \in \mathbb{R} \cup \{\infty\} \\
		y = - \dfrac{x}{\cos \alpha} \\
		z = -x \\
		w = - \dfrac{x}{\cos \alpha}
	\end{cases}
\end{equation*}

\section{Isogram: finite solution}

The condition on sector angles is:
\begin{equation*}
	f_{22} \neq 0, ~~ f_{20} = 0,~~f_{02} = 0, ~~f_{00} \neq 0 ~~ \Leftrightarrow ~~ \gamma=\alpha, ~~ \delta=\beta, ~~\beta \neq \alpha, ~~\alpha + \beta \neq \pi
\end{equation*}
which means:
\begin{equation*}
	\begin{aligned}
		\begin{cases}
			\sin(\alpha-\beta) x^2y^2-2 \sin \alpha xy+\sin(\alpha+\beta)=0 \\
			z = \pm x \\
			\sin(\beta-\alpha)x^2w^2-2 \sin \beta xw+\sin (\alpha+\beta) =0 
		\end{cases} 
		\Rightarrow ~~ \begin{dcases}
			xy = \dfrac{\sin \alpha \pm \sin \beta}{\sin (\alpha - \beta)} \\
			z = \pm x \\
			xw = \dfrac{\sin \beta \pm \sin \alpha}{\sin (\beta - \alpha)} \\
		\end{dcases} 
	\end{aligned}
\end{equation*}
From post-examination, there are only two branches. Each branch will be diffeomorphic to a circle $S^1$ with the parametrization below:
\begin{equation*}
	\begin{dcases}
		x \in \mathbb{R} \cup \{\infty\} \\
		y = \dfrac{\cos \frac{\alpha + \beta}{2}}{x\cos \frac{\alpha - \beta}{2}} \\
		z = x \\
		w = \dfrac{\cos \frac{\beta + \alpha}{2}}{x\cos \frac{\beta - \alpha}{2}} 
	\end{dcases} \quad \mathrm{or} \quad \begin{dcases}
		x \in \mathbb{R} \cup \{\infty\} \\
		y = \dfrac{\sin \frac{\alpha + \beta}{2}}{x \sin \frac{\alpha - \beta}{2}} \\
		z = -x \\
		w = \dfrac{\sin \frac{\beta + \alpha}{2}}{x\sin \frac{\beta - \alpha}{2}}
	\end{dcases}	
\end{equation*}
The first branch has a non-self-intersecting convex `car wiper' motion, while the second branch has a constantly self-intersecting `butterfly' shape.  

\section{Anti-isogram: finite solution}

The condition on sector angles is:
\begin{equation*}
	f_{22} = 0, ~~ f_{20} \neq 0,~~f_{02} \neq 0, ~~f_{00} = 0 ~~ \Leftrightarrow ~~ \gamma=\pi - \alpha, ~~ \delta= \pi - \beta, ~~\beta \neq \alpha, ~~\alpha + \beta \neq \pi
\end{equation*}
which means:
\begin{equation*}
	\begin{aligned}
		\begin{cases}
			\sin(\alpha-\beta) x^2 + 2 \sin \alpha xy + \sin(\alpha+\beta) y^2=0 \\
			z = \pm x \\
			\sin(\beta-\alpha) x^2 + 2 \sin \beta xw+\sin (\beta+\alpha) w^2 =0 
		\end{cases} 
		\Rightarrow ~~ \begin{dcases}
			\dfrac{x}{y} = \dfrac{ - \sin \alpha \pm \sin \beta}{\sin (\alpha - \beta)} \\
			z = \pm x \\
			\dfrac{x}{w} = \dfrac{ - \sin \beta \pm \sin \alpha}{\sin (\beta - \alpha)} \\
		\end{dcases} 
	\end{aligned}
\end{equation*}
From post-examination, there are only two branches. Each branch will be diffeomorphic to a circle $S^1$ with the parametrization below:
\begin{equation*}
	\begin{dcases}
		x \in \mathbb{R} \cup \{\infty\} \\
		y = -\dfrac{x\sin \frac{\alpha - \beta}{2}}{\sin \frac{\alpha + \beta}{2}} \\
		z = x \\
		w = -\dfrac{x\sin \frac{\beta - \alpha}{2}}{\sin \frac{\beta + \alpha}{2}} 
	\end{dcases} \quad \mathrm{or} \quad \begin{dcases}
		x \in \mathbb{R} \cup \{\infty\} \\
		y = -\dfrac{x\cos \frac{\alpha - \beta}{2}}{\cos \frac{\alpha + \beta}{2}} \\
		z = -x \\
		w = -\dfrac{x\cos \frac{\beta - \alpha}{2}}{\cos \frac{\beta + \alpha}{2}}
	\end{dcases}	
\end{equation*}

\section{Deltoid I: finite solution}

The condition on sector angles is:	
\begin{equation*}
	f_{22} = 0, ~~ f_{20} \neq 0,~~f_{02} = 0, ~~f_{00} \neq 0 ~~ \Leftrightarrow ~~ \delta=\alpha, ~~ \gamma=\beta, ~~\beta \neq \alpha, ~~\alpha + \beta \neq \pi
\end{equation*}
which means:
\begin{equation*}
	\begin{aligned}
		\begin{cases}
			\sin(\beta-\alpha) x^2-2 \sin \alpha xy+ \sin (\beta+\alpha) =0 \\
			z = \pm x \\
			\sin(\alpha-\beta)x^2-2 \sin \beta xw+ \sin(\alpha+\beta) =0 
		\end{cases} 
		~~ \Rightarrow ~~  \begin{dcases}
			y = \dfrac{\sin(\beta-\alpha)x^2+\sin(\beta+\alpha)}{2 x \sin \alpha } \\
			z = \pm x \\
			w = \dfrac{\sin(\alpha-\beta)x^2 + \sin (\alpha+\beta)}{2 x \sin \beta}
		\end{dcases} 
	\end{aligned}
\end{equation*}
From post-examination, there is only a single branch. Each branch will be diffeomorphic to a circle $S^1$ with the parametrization below:
\begin{equation*}
	\begin{dcases}
		y = \dfrac{\sin(\beta-\alpha)x^2+\sin(\beta+\alpha)}{2 x \sin \alpha } \\
		z = x \\
		w = \dfrac{\sin(\alpha-\beta)x^2+\sin(\alpha+\beta)}{2 x \sin \beta}
	\end{dcases}
\end{equation*}
This branch has a non-self-intersecting convex `car wiper' motion.

\section{Anti-deltoid I: finite solution}

The condition on sector angles is:
\begin{equation*}
	f_{22} \neq 0, ~~ f_{20} = 0,~~f_{02} \neq 0, ~~f_{00} = 0 ~~ \Leftrightarrow ~~ \delta= \pi - \alpha, ~~ \gamma = \pi - \beta, ~~\beta \neq \alpha, ~~\alpha + \beta \neq \pi
\end{equation*}
which means:
\begin{equation*}
	\begin{aligned}
		& \begin{cases}
			\sin(\beta-\alpha) x^2y^2 + 2 \sin \alpha xy+ \sin (\beta+\alpha)y^2 =0 \\
			z = \pm x \\
			\sin(\alpha-\beta) x^2w^2 + 2 \sin \beta xw+ \sin (\alpha+\beta)w^2 =0  
		\end{cases}  \\
		\Rightarrow ~~  & \begin{dcases}
			y = 0 \\
			z = \pm x \\
			w = 0
		\end{dcases} ~~ \mathrm{or} ~~ \begin{dcases}
			y^{-1} = -\dfrac{\sin(\beta-\alpha)x^2+\sin(\beta+\alpha)}{2 x \sin \alpha } \\
			z = \pm x \\
			w^{-1} = -\dfrac{\sin(\alpha-\beta)x^2 + \sin (\alpha+\beta)}{2 x \sin \beta}
		\end{dcases}
	\end{aligned}
\end{equation*}
From post-examination, there is only a single branch. Each branch will be diffeomorphic to a circle $S^1$ with the parametrization below:
\begin{equation*}
	\begin{dcases}
		x \in \mathbb{R} \cup \{\infty\} \\
		y = 0 \\
		z = x \\
		w = 0
	\end{dcases} ~~ \mathrm{or} ~~ \begin{dcases}
		x \in \mathbb{R} \cup \{\infty\} \\
		y^{-1} = -\dfrac{\sin(\beta-\alpha)x^2+\sin(\beta+\alpha)}{2 x \sin \alpha } \\
		z = - x \\
		w^{-1} = -\dfrac{\sin(\alpha-\beta)x^2 + \sin (\alpha+\beta)}{2 x \sin \beta}
	\end{dcases}
\end{equation*}
The expression above can also be obtained after `switching a strip' from Deltoid I.	
\begin{equation*}
	(\alpha, ~\beta, ~\gamma, ~\delta, ~x, ~y, ~z, ~w) \Leftrightarrow (\alpha, ~\beta, ~\pi-\gamma, ~\pi-\delta, ~x, ~-y^{-1}, ~-z, ~-w^{-1})
\end{equation*}

\section{Deltoid II: finite solution}

The condition on sector angles is 	
\begin{equation*}
	f_{22} = 0, ~~ f_{20} = 0,~~f_{02} \neq 0, ~~f_{00} \neq 0 ~~ \Leftrightarrow ~~ \alpha=\beta, ~~ \delta=\gamma, ~~\gamma \neq \beta, ~~\beta + \gamma \neq \pi
\end{equation*}
which means:
\begin{equation*}
	\begin{aligned}
		\begin{cases}
			\sin (\beta-\gamma)y^2-2 \sin \gamma xy + \sin (\beta+\gamma) =0 \\
			\sin^2 \beta (z^2 +1) = \sin^2 \gamma (x^2+1) \\
			\sin (\beta-\gamma)w^2-2 \sin \gamma xw+ \sin (\beta+\gamma) =0 
		\end{cases}
		\Rightarrow ~~ \begin{dcases}
			y = \dfrac{x \sin \gamma \pm \sqrt{\sin^2 \gamma(x^2+1)-\sin^2 \beta}}{\sin(\beta-\gamma)} \\
			z = \pm \dfrac{\sqrt{\sin^2 \beta(x^2+1)-\sin^2 \gamma}}{\sin \beta} \\
			w = \dfrac{x \sin \gamma \pm \sqrt{\sin^2 \gamma(x^2+1)-\sin^2 \beta}}{\sin(\beta-\gamma)}
		\end{dcases} 
	\end{aligned}
\end{equation*}
Only when $x^2 \ge \dfrac{\sin^2 \beta}{\sin ^2 \gamma}-1$ is there a real solution. From post-examination, there are only two solutions:
\begin{equation}
	\begin{dcases}
		y = \dfrac{x \sin \gamma + \sqrt{\sin^2 \gamma(x^2+1)-\sin^2 \beta}}{\sin(\beta-\gamma)} \\
		z = - \dfrac{\sqrt{\sin^2 \beta(x^2+1)-\sin^2 \gamma}}{\sin \beta} \\
		w = \dfrac{x \sin \gamma + \sqrt{\sin^2 \gamma(x^2+1)-\sin^2 \beta}}{\sin(\beta-\gamma)}
	\end{dcases} ~~ \mathrm{or} ~~
	\begin{dcases}
		y = \dfrac{x \sin \gamma - \sqrt{\sin^2 \gamma(x^2+1)-\sin^2 \beta}}{\sin(\beta-\gamma)} \\
		z = \dfrac{\sqrt{\sin^2 \beta(x^2+1)-\sin^2 \gamma}}{\sin \beta} \\
		w = \dfrac{x \sin \gamma - \sqrt{\sin^2 \gamma(x^2+1)-\sin^2 \beta}}{\sin(\beta-\gamma)}
	\end{dcases}
\end{equation}
where
\begin{equation*}
	\begin{aligned}
		\begin{dcases}
			x \in \left(-\infty, -\sqrt{\dfrac{\sin^2 \beta}{\sin^2 \gamma}-1} \right] \bigcup \left[\sqrt{\dfrac{\sin^2 \beta}{\sin^2 \gamma}-1}, +\infty \right) &  \mathrm{when} ~~ \sin \beta> \sin \gamma \\
			x \in \mathbb{R} &  \mathrm{when} ~~ \sin \beta< \sin \gamma
		\end{dcases}
	\end{aligned}
\end{equation*}
A more advanced expression is setting (referring to Subsection \ref{section: sign convention folding angle})
\begin{equation*}
	\begin{aligned}
		p_x=\sqrt{\dfrac{\sin^2 \beta}{\sin^2 \gamma}-1}, ~~
		\begin{dcases}
			x = \pm p_x \cosh s, ~~ s \in (0, +\infty) &  \mathrm{when} ~~ \sin \beta> \sin \gamma \\
			x = \pm i p_x \sinh s, ~~ s \in (0, +\infty) &  \mathrm{when} ~~ \sin \beta< \sin \gamma
		\end{dcases}
	\end{aligned}
\end{equation*}
We set $s \in (0, +\infty)$ in order for sign confirmation when calculating the square root. 

When $\sin \beta > \sin \gamma$,
\begin{equation*}
	\begin{aligned}
		&\begin{dcases}
			x = \sqrt{\dfrac{\sin^2 \beta}{\sin^2 \gamma}-1} \cosh s \\
			y = \sqrt{\dfrac{\tan \beta + \tan \gamma}{\tan \beta - \tan \gamma}} e^{s} \\
			z = -\sqrt{1-\dfrac{\sin^2 \gamma}{\sin^2 \beta }} \sinh s \\
			w= \sqrt{\dfrac{\tan \beta + \tan \gamma}{\tan \beta - \tan \gamma}} e^{s}
		\end{dcases} 
		~~ \mathrm{and} ~~ \begin{dcases}
			x = -\sqrt{\dfrac{\sin^2 \beta }{\sin^2 \gamma}-1} \cosh s \\
			y = -\sqrt{\dfrac{\tan \beta + \tan \gamma}{\tan \beta - \tan \gamma}} e^{s} \\
			z = \sqrt{1-\dfrac{\sin^2 \gamma}{\sin^2 \beta }} \sinh s \\
			w= -\sqrt{\dfrac{\tan \beta + \tan \gamma}{\tan \beta - \tan \gamma}} e^{s}
		\end{dcases} \\
	\end{aligned}
\end{equation*}
\begin{equation*}
	\begin{aligned}
		& ~~ \mathrm{or} ~~ \begin{dcases}
			x = \sqrt{\dfrac{\sin^2 \beta }{\sin^2 \gamma}-1} \cosh s \\
			y = \sqrt{\dfrac{\tan \beta + \tan \gamma}{\tan \beta - \tan \gamma}} e^{-s} \\
			z = \sqrt{1-\dfrac{\sin^2 \gamma}{\sin^2 \beta }} \sinh s \\
			w= \sqrt{\dfrac{\tan \beta + \tan \gamma}{\tan \beta - \tan \gamma}} e^{-s}
		\end{dcases} 
		~~ \mathrm{and} ~~ \begin{dcases}
			x = -\sqrt{\dfrac{\sin^2 \beta }{\sin^2 \gamma}-1} \cosh s \\
			y = -\sqrt{\dfrac{\tan \beta + \tan \gamma}{\tan \beta - \tan \gamma}} e^{-s} \\
			z = -\sqrt{1-\dfrac{\sin^2 \gamma}{\sin^2 \beta }} \sinh s \\
			w= -\sqrt{\dfrac{\tan \beta + \tan \gamma}{\tan \beta - \tan \gamma}} e^{-s}
		\end{dcases} 
	\end{aligned} , ~~ s \in (0, +\infty)
\end{equation*}
When $\sin \beta < \sin \gamma$
\begin{equation*}
	\begin{aligned}
		& \begin{dcases}
			x = -\sqrt{1-\dfrac{\sin^2 \beta }{\sin^2 \gamma}} \sinh s \\
			y = \sqrt{\dfrac{\tan \gamma + \tan \beta}{\tan \gamma - \tan \beta}} e^{s} \\
			z = \sqrt{\dfrac{\sin^2 \gamma}{\sin^2 \beta }-1} \cosh s \\
			w= \sqrt{\dfrac{\tan \gamma + \tan \beta}{\tan \gamma - \tan \beta}} e^{s}
		\end{dcases} 
		~~ \mathrm{and} ~~ \begin{dcases}
			x = \sqrt{1-\dfrac{\sin^2 \beta }{\sin^2 \gamma}} \sinh s \\
			y = -\sqrt{\dfrac{\tan \gamma + \tan \beta}{\tan \gamma - \tan \beta}} e^{s} \\
			z = -\sqrt{\dfrac{\sin^2 \gamma}{\sin^2 \beta }-1} \cosh s \\
			w= -\sqrt{\dfrac{\tan \gamma + \tan \beta}{\tan \gamma - \tan \beta}} e^{s}
		\end{dcases}
	\end{aligned}
\end{equation*}
\begin{equation*}
	\begin{aligned}
		& ~~ \mathrm{or} ~~ \begin{dcases}
			x = \sqrt{1-\dfrac{\sin^2 \beta }{\sin^2 \gamma}} \sinh s \\
			y = \sqrt{\dfrac{\tan \gamma + \tan \beta}{\tan \gamma - \tan \beta}} e^{-s} \\
			z = \sqrt{\dfrac{\sin^2 \gamma}{\sin^2 \beta }-1} \cosh s \\
			w= \sqrt{\dfrac{\tan \gamma + \tan \beta}{\tan \gamma - \tan \beta}} e^{-s}
		\end{dcases}
		~~ \mathrm{and} ~~ \begin{dcases}
			x = -\sqrt{1-\dfrac{\sin^2 \beta }{\sin^2 \gamma}} \sinh s \\
			y = -\sqrt{\dfrac{\tan \gamma + \tan \beta}{\tan \gamma - \tan \beta}} e^{-s} \\
			z = -\sqrt{\dfrac{\sin^2 \gamma}{\sin^2 \beta }-1} \cosh s \\
			w= -\sqrt{\dfrac{\tan \gamma + \tan \beta}{\tan \gamma - \tan \beta}} e^{-s}
		\end{dcases} 
	\end{aligned} , ~~ s \in (0, +\infty)
\end{equation*}

Actually we could write the expressions in a more compact form: 

When $\sin \beta > \sin \gamma$:
\begin{equation*}
	\begin{dcases}
		x = \sqrt{\dfrac{\sin^2 \beta }{\sin^2 \gamma}-1} \cosh s \\
		y = \sqrt{\dfrac{\tan \beta + \tan \gamma}{\tan \beta - \tan \gamma}} e^{-s} \\
		z = \sqrt{1-\dfrac{\sin^2 \gamma}{\sin^2 \beta }} \sinh s \\
		w= \sqrt{\dfrac{\tan \beta + \tan \gamma}{\tan \beta - \tan \gamma}} e^{-s}
	\end{dcases}, ~~ \mathrm{and} ~~
	\begin{dcases}
		x = -\sqrt{\dfrac{\sin^2 \beta }{\sin^2 \gamma}-1} \cosh s \\
		y = -\sqrt{\dfrac{\tan \beta + \tan \gamma}{\tan \beta - \tan \gamma}} 	e^{-s} \\
		z = -\sqrt{1-\dfrac{\sin^2 \gamma}{\sin^2 \beta }} \sinh s \\
		w= -\sqrt{\dfrac{\tan \beta + \tan \gamma}{\tan \beta - \tan \gamma}} e^{-s}
	\end{dcases} , ~~ s \in \mathbb{R} 
\end{equation*}
When $\sin \gamma < \sin \beta$:
\begin{equation*}
	\begin{dcases}
		x = \sqrt{1-\dfrac{\sin^2 \beta }{\sin^2 \gamma}} \sinh s \\
		y = \sqrt{\dfrac{\tan \gamma + \tan \beta}{\tan \gamma - \tan \beta}} e^{-s} \\
		z = \sqrt{\dfrac{\sin^2 \gamma}{\sin^2 \beta }-1} \cosh s \\
		w= \sqrt{\dfrac{\tan \gamma + \tan \beta}{\tan \gamma - \tan \beta}} e^{-s}
	\end{dcases} ~~\mathrm{and} ~~\begin{dcases}
		x = -\sqrt{1-\dfrac{\sin^2 \beta }{\sin^2 \gamma}} \sinh s \\
		y = -\sqrt{\dfrac{\tan \gamma + \tan \beta}{\tan \gamma - \tan \beta}} e^{-s} \\
		z = -\sqrt{\dfrac{\sin^2 \gamma}{\sin^2 \beta }-1} \cosh s \\
		w= -\sqrt{\dfrac{\tan \gamma + \tan \beta}{\tan \gamma - \tan \beta}} e^{-s}
	\end{dcases}, ~~ s \in \mathbb{R} 
\end{equation*}

Further, we want to do an \textbf{analytical continuation} to the above real solutions towards the complex field, which, as we will see later, is a more helpful and unified expression for future symbolic calculations. Let  
\begin{equation*}
	\begin{aligned}
		x=p_x \cos t  
	\end{aligned}
\end{equation*}
When $\sin \beta > \sin \gamma$, 
\begin{equation*}
	\begin{dcases}
		x = \sqrt{\dfrac{\sin^2 \beta }{\sin^2 \gamma}-1} \cos t \\
		y = \sqrt{\dfrac{\tan \beta + \tan \gamma}{\tan \beta - \tan \gamma}} e^{it} \\
		z = -i\sqrt{1-\dfrac{\sin^2 \gamma}{\sin^2 \beta }} \sin t = i\sqrt{1-\dfrac{\sin^2 \gamma}{\sin^2 \beta }} \cos \left(t+\dfrac{\pi}{2}\right) \\
		w= \sqrt{\dfrac{\tan \beta + \tan \gamma}{\tan \beta - \tan \gamma}} e^{it}
	\end{dcases} 	 
\end{equation*}
When $\sin \beta < \sin \gamma$,
\begin{equation*}
	\begin{dcases}
		x = i \sqrt{1-\dfrac{\sin^2 \beta }{\sin^2 \gamma}} \cos t \\
		y = \sqrt{\dfrac{\tan \gamma + \tan \beta}{\tan \gamma - \tan \beta}} e^{it} \\
		z = \sqrt{\dfrac{\sin^2 \gamma}{\sin^2 \beta }-1} \sin t = \sqrt{\dfrac{\sin^2 \gamma}{\sin^2 \beta }-1} \cos \left(t-\dfrac{\pi}{2} \right) \\
		w= \sqrt{\dfrac{\tan \gamma + \tan \beta}{\tan \gamma - \tan \beta}} e^{it}
	\end{dcases} 	 
\end{equation*}
By carefully choosing $t$ for real configurations, with the sign convention in Equation \eqref{section: sign convention folding angle}, it is possible to write the two branches in a unified form. Each branch will be diffeomorphic to $\mathbb{R}$ with the parametrization below:
\begin{equation} \label{eq: spherical deltoid 2}
	\begin{dcases}
		x = p_x \cos t \\
		y = \mathrm{sign}(\pi - \sigma) \sqrt{\dfrac{\tan \beta+\tan \gamma}{\tan \beta - \tan \gamma}} e^{it} \\
		z = p_z \cos \left(t + \dfrac{\pi}{2}\right) \\
		w= \mathrm{sign}(\pi - \sigma) \sqrt{\dfrac{\tan \beta+\tan \gamma}{\tan \beta - \tan \gamma}} e^{it} \\
		\sigma = \beta + \gamma
	\end{dcases} 	 
\end{equation}
\begin{equation*}
	p_x = \sqrt{\dfrac{\sin^2 \beta}{\sin^2 \gamma}-1}, ~~ p_z=\sqrt{\dfrac{\sin^2 \gamma}{\sin^2 \beta}-1}
\end{equation*}
Note that
\begin{equation*}
	\dfrac{\sin (\beta+\gamma)}{\sin(\beta-\gamma)} = \dfrac{\tan \beta+\tan \gamma}{\tan \beta - \tan \gamma}
\end{equation*}
The choices of $t$ for real configurations are provided below:
\bgroup
\def\arraystretch{2}
\begin{center}
	\begin{tabular}{|c|c|c|}
		\hline
		\makecell{Choice of $t$ for \\ real configurations} &  Branch 1 & Branch 2 \\ 
		\hline
		\makecell{$p_x \in \mathbb{R}^+$ \\ $p_z \in i\mathbb{R}^+$}  & \makecell {$is$, $s \in (0, +\infty)$ \\ $\pi + is$, $s \in (0, +\infty)$ \\ when $xz \neq 0$, $xz>0$ \\ this branch is continuous at $x = \infty$ \\ and `snaps' at $x = 0$} & \makecell {$is$, $s \in (-\infty, 0)$ \\ $\pi + is$, $s \in (-\infty, 0)$ \\ when $xz \neq 0$, $xz<0$ \\ this branch is continuous at $x = \infty$ \\ and `snaps' at $x = 0$} \\
		\hline
		\makecell{$p_x \in i\mathbb{R}^+$ \\ $p_z \in \mathbb{R}^+$} & \makecell {$3\pi/2 + is$, $s \in (0, +\infty)$ \\ $\pi/2 + is$, $s \in (0, +\infty)$ \\ when $xz \neq 0$, $xz>0$ \\ this branch is continuous at $x = \infty$ \\ and `snaps' at $x = 0$} & \makecell {$\pi/2 + is$, $s \in (-\infty, 0)$ \\ $3\pi/2 + is$, $s \in (-\infty, 0)$ \\ when $xz \neq 0$, $xz<0$ \\ this branch is continuous at $x = \infty$ \\ and `snaps' at $x = 0$} \\  
		\hline      
	\end{tabular}
\end{center}
\egroup

\section{Anti-deltoid II: finite solution}

The condition on sector angles is 	
\begin{equation*}
	f_{22} \neq 0, ~~ f_{20} \neq 0,~~f_{02} = 0, ~~f_{00} = 0 ~~ \Leftrightarrow ~~ \alpha= \pi - \beta, ~~ \delta= \pi - \gamma, ~~\gamma \neq \beta, ~~\beta + \gamma \neq \pi
\end{equation*}
which means:
\begin{equation*}
	\begin{aligned}
		\begin{cases}
			\sin (\beta-\gamma)x^2y^2+2 \sin \gamma xy + \sin (\beta+\gamma)x^2 =0 \\
			\sin^2 \beta (z^{-2} +1) = \sin^2 \gamma (x^{-2}+1) \\
			\sin (\beta-\gamma)x^2w^2-2 \sin \gamma xw + \sin (\beta+\gamma)x^2 =0
		\end{cases} 
	\end{aligned}
\end{equation*}
By `switching a strip' from Deltoid II:
\begin{equation*}
	(\alpha, ~\beta, ~\gamma, ~\delta, ~x, ~y, ~z, ~w) \Leftrightarrow (\pi-\alpha, ~\beta, ~\gamma, ~\pi-\delta, ~-x^{-1}, ~y, ~-z^{-1}, ~-w)
\end{equation*}
We could also directly write down the two branches. Each branch will be diffeomorphic to $\mathbb{R}$ with the parametrization below:
\begin{equation} \label{eq: spherical anti deltoid 2}
	\begin{dcases}
		x^{-1} = -p_x \cos t \\
		y = \mathrm{sign}(\pi-\sigma) \sqrt{\dfrac{\tan \beta+\tan \gamma}{\tan \beta - \tan \gamma}} e^{it} \\
		z^{-1} = p_z \cos \left(t- \dfrac{\pi}{2}\right) \\
		w= \mathrm{sign}(\sigma-\pi) \sqrt{\dfrac{\tan \beta+\tan \gamma}{\tan \beta - \tan \gamma}} e^{it} \\
		\sigma = \beta + \gamma
	\end{dcases} 	 
\end{equation}
Here we continue to use $\sigma$ to make the expressions consistent with the deltoid II, while here $\sigma$ does not mean the semi-perimeter.
\begin{equation*}
	p_x = \sqrt{\dfrac{\sin^2 \beta}{\sin^2 \gamma}-1}, ~~ p_z=\sqrt{\dfrac{\sin^2 \gamma}{\sin^2 \beta}-1}
\end{equation*}
Note that
\begin{equation*}
	\dfrac{\sin (\beta+\gamma)}{\sin(\beta-\gamma)} = \dfrac{\tan \beta+\tan \gamma}{\tan \beta - \tan \gamma}
\end{equation*}
The choices of $t$ for real configurations are provided below:
\bgroup
\def\arraystretch{2}
\begin{center}
	\begin{tabular}{|c|c|c|}
		\hline
		\makecell{Choice of $t$ for \\ real configurations} &  Branch 1 & Branch 2 \\ 
		\hline
		\makecell{$p_x \in \mathbb{R}^+$ \\ $p_z \in i\mathbb{R}^+$}  & \makecell {$is$, $s \in (0, +\infty) $ \\ $\pi + is$, $s \in (0, +\infty) $ \\ when $xz \neq 0$, $xz>0$ \\ this branch is continuous at $x = 0$ \\ and `snaps' at $x = \infty$} & \makecell {$is$, $s \in (-\infty, 0)$ \\ $\pi + is$, $s \in (-\infty, 0)$ \\ when $xz \neq 0$, $xz<0$ \\ this branch is continuous at $x = 0$ \\ and `snaps' at $x = \infty$} \\
		\hline
		\makecell{$p_x \in i\mathbb{R}^+$ \\ $p_z \in \mathbb{R}^+$} & \makecell {$3\pi/2 + is$, $s \in (0, +\infty) $ \\ $\pi/2 + is$, $s \in (0, +\infty) $ \\ when $xz \neq 0$, $xz>0$ \\ this branch is continuous at $x = 0$ \\ and `snaps' at $x = \infty$} & \makecell {$\pi/2 + is$, $s \in (-\infty, 0)$ \\ $3\pi/2 + is$, $s \in (-\infty, 0)$ \\ when $xz \neq 0$, $xz<0$ \\ this branch is continuous at $x = 0$ \\ and `snaps' at $x = \infty$} \\  
		\hline      
	\end{tabular}
\end{center}
\egroup

\section{Conic I: finite solution}

The condition on sector angles is:
\begin{equation*} 
	f_{22} = 0, ~~ f_{20} \neq 0,~~f_{02} \neq 0, ~~f_{00} \neq 0  ~~ \Leftrightarrow ~~ 
	\begin{dcases}
		\alpha - \beta + \gamma - \delta = 0 \\
		\alpha - \beta - \gamma + \delta \neq 0 \\
		\alpha + \beta - \gamma - \delta \neq 0 \\
		\alpha + \beta + \gamma + \delta \neq 2\pi
	\end{dcases}
\end{equation*}
which implies $\sigma=\alpha+\gamma=\beta+\delta$, and:
\begin{equation*} 
	\begin{aligned}
		\begin{cases}
			y^2 \sin \alpha \sin (\beta-\gamma)-2xy\sin \alpha \sin \gamma + x^2 \sin \gamma \sin (\beta-\alpha)+ \sin \beta \sin (\alpha+\gamma) =0 \\
			\sin \alpha \sin \beta (z^2+1)= \sin \gamma \sin \delta (x^2+1) \\
			\sin \beta w^2 \sin (\alpha-\delta)-2xw\sin \beta \sin \delta + x^2 \sin \delta \sin (\alpha-\beta)+\sin \alpha \sin (\beta+\delta)=0
		\end{cases}
	\end{aligned}
\end{equation*}
First, from the case-by-case discussion on Equation \eqref{eq: opposite folding angle},
\begin{equation*}
	\begin{aligned}
		& \quad \quad ~~ \sin \alpha \sin \beta (z^2+1)=\sin\gamma \sin\delta (x^2+1) \\ & ~~ \Leftrightarrow ~~ x^2 \sin\gamma \sin\delta - z^2 \sin\alpha \sin\beta = \sin\alpha \sin\beta - \sin\gamma \sin\delta
		\\
		& ~~ \Leftrightarrow ~~ \dfrac{x^2 \sin\gamma \sin\delta}{\sin\alpha \sin\beta-\sin\gamma \sin\delta} +  \dfrac{z^2 \sin\alpha \sin\beta }{\sin\gamma \sin\delta -\sin\alpha \sin\beta} = 1  \\
		& ~~ \Leftrightarrow ~~ \dfrac{x^2}{\dfrac{\sin\alpha \sin\beta}{\sin\gamma \sin\delta}-1} + \dfrac{z^2}{\dfrac{\sin\gamma \sin\delta}{\sin\alpha \sin\beta}-1} = 1 \\
		&  ~~ \Leftrightarrow ~~ \dfrac{x^2}{p_x^2} + \dfrac{z^2}{p_z^2} = 1
	\end{aligned}
\end{equation*}
This equation is always hyperbolic. Note that the amplitudes have the form below:
\begin{equation*}
	\begin{gathered}
		p_x=\sqrt{\dfrac{\sin \alpha \sin \beta}{\sin \gamma \sin \delta}-1}, ~~ p_y=\sqrt{\dfrac{\sin \beta \sin \gamma}{\sin \delta \sin \alpha}-1} \\
		p_z=\sqrt{\dfrac{\sin \gamma \sin \delta}{\sin \alpha \sin \beta}-1}, ~~ p_w=\sqrt{\dfrac{\sin \delta \sin \alpha}{\sin \beta \sin \gamma}-1}
	\end{gathered}
\end{equation*}

It is worth mentioning that the magnitudes of $\alpha, ~ \beta, ~\gamma, ~\delta$ plays an important role in determining the signs of all the expressions below, as shown in the following table.

\bgroup
\def\arraystretch{2}
\begin{center}
	\begin{tabular}{|c|c|c|}
		\hline
		& $\sigma < \pi$ & $\sigma > \pi$ \\ 
		\hline
		$\beta = \mathrm{max}(\alpha, ~\beta, ~\gamma, ~\delta)$ & 
		\makecell{$\sin \alpha \sin \beta > \sin \gamma \sin \delta$ \\
			$\sin \beta \sin \gamma > \sin \delta \sin \alpha$ \\
			$\sin \alpha \sin \gamma > \sin \beta \sin \delta$}
		&  \makecell{$\sin \alpha \sin \beta < \sin \gamma \sin \delta$ \\
			$\sin \beta \sin \gamma < \sin \delta \sin \alpha$ \\
			$\sin \alpha \sin \gamma > \sin \beta \sin \delta$} \\
		\hline
		$\alpha = \mathrm{max}(\alpha, ~\beta, ~\gamma, ~\delta)$ & \makecell{$\sin \alpha \sin \beta > \sin \gamma \sin \delta$ \\
			$\sin \beta \sin \gamma < \sin \delta \sin \alpha$ \\
			$\sin \alpha \sin \gamma < \sin \beta \sin \delta$} &  \makecell{$\sin \alpha \sin \beta < \sin \gamma \sin \delta$ \\
			$\sin \beta \sin \gamma > \sin \delta \sin \alpha$ \\
			$\sin \alpha \sin \gamma < \sin \beta \sin \delta$} \\
		\hline
		$\delta= \mathrm{max}(\alpha, ~\beta, ~\gamma, ~\delta)$ & \makecell{$\sin \alpha \sin \beta < \sin \gamma \sin \delta$ \\
			$\sin \beta \sin \gamma < \sin \delta \sin \alpha$ \\
			$\sin \alpha \sin \gamma > \sin \beta \sin \delta$} & \makecell{$\sin \alpha \sin \beta > \sin \gamma \sin \delta$ \\
			$\sin \beta \sin \gamma > \sin \delta \sin \alpha$ \\
			$\sin \alpha \sin \gamma > \sin \beta \sin \delta$} \\
		\hline
		$\gamma = \mathrm{max}(\alpha, ~\beta, ~\gamma, ~\delta)$ & \makecell{$\sin \alpha \sin \beta < \sin \gamma \sin \delta$ \\
			$\sin \beta \sin \gamma > \sin \delta \sin \alpha$ \\
			$\sin \alpha \sin \gamma < \sin \beta \sin \delta$} &  \makecell{$\sin \alpha \sin \beta > \sin \gamma \sin \delta$ \\
			$\sin \beta \sin \gamma < \sin \delta \sin \alpha$ \\
			$\sin \alpha \sin \gamma < \sin \beta \sin \delta$} \\
		\hline
	\end{tabular}
\end{center}
\egroup
This table is derived from Section \ref{section: sign convention folding angle}. For example using:
\begin{equation*}
	\begin{gathered}
		\sin \alpha \sin \beta - \sin \gamma \sin \delta = \sin \sigma \sin (\alpha + \beta - \sigma) \\
		\sin \beta \sin \gamma - \sin \delta \sin \alpha = \sin \sigma \sin (\beta + \gamma - \sigma) \\
		\sin \alpha \sin \gamma - \sin \beta \sin \delta = \sin (\alpha + \beta - \sigma) \sin (\beta + \gamma - \sigma) \\
	\end{gathered}
\end{equation*}

We will take one of the above cases as an example to show the derivation, when $\beta = \mathrm{max}$ and $\sigma < \pi$,
\begin{equation*}
	x = \pm \sqrt{\dfrac{\sin \alpha \sin\beta}{\sin\gamma \sin\delta}-1} \cosh s, \quad z = \pm \sqrt{1-\dfrac{\sin\gamma \sin\delta}{\sin\alpha \sin\beta}} \sinh s, ~~ s \in (0, +\infty)
\end{equation*}
Substitute $x$ into the quadratic equation with respect to $y$ and $w$, we have
\begin{equation*}
	\begin{gathered}
		\sin \alpha \sin (\beta-\gamma) y^2 \mp 2y\sin \alpha \sin \gamma \sqrt{\dfrac{\sin \alpha \sin \beta}{\sin \gamma \sin\delta}-1} \cosh s \\ + \dfrac{\sin(\beta-\alpha)(\sin\alpha \sin\beta -\sin\gamma \sin\delta) \cosh^2 s}{\sin\delta}+\sin\beta\sin(\alpha+\gamma) =0 \\
		\sin \beta \sin (\alpha-\delta)y^2 \mp 2\sin\beta\sin\delta y \sqrt{\dfrac{\sin\alpha \sin\beta}{\sin\gamma \sin\delta}-1} \cosh s \\ + \dfrac{\sin(\alpha-\beta)(\sin\alpha \sin\beta -\sin\gamma \sin\delta) \cosh^2 s}{\sin\beta}+\sin\delta\sin(\beta+\delta) =0
	\end{gathered}
\end{equation*}
and the solutions are 
\begin{equation*}
	\begin{gathered}
		y= \sqrt{\dfrac{\sin(\beta+\delta)}{\sin(\beta-\gamma)}} \left( \pm \sqrt{\dfrac{\sin\gamma}{\sin\delta}}  \cosh s \pm \sqrt{\dfrac{\sin\beta}{\sin\alpha}} \sinh s \right) \\
		w= \sqrt{\dfrac{\sin (\alpha+\gamma)}{\sin(\alpha-\delta)}} \left( \pm \sqrt{\dfrac{\sin\delta}{\sin\gamma}}  \cosh s \pm \sqrt{\dfrac{\sin\alpha}{\sin\beta}} \sinh s \right) 
	\end{gathered}
\end{equation*}
Take one sign choice of $y$ and $w$ as an example:
\begin{equation*}
	\begin{aligned}
		y_1 & = \sqrt{\dfrac{\sin (\beta+\delta)}{\sin(\beta-\gamma)}} \left( \sqrt{\dfrac{\sin \gamma}{\sin \delta}}  \cosh s + \sqrt{\dfrac{\sin \beta}{\sin \alpha}} \sinh s \right) \\
		& = \sqrt{\dfrac{\sin (\beta+\delta)}{\sin(\beta-\gamma)}\left(\dfrac{\sin \gamma}{\sin \delta}-\dfrac{\sin \beta}{\sin \alpha}\right)} \left(\sqrt{\dfrac{\sin \alpha \sin\gamma}{ \sin\alpha \sin\gamma - \sin\beta \sin\delta}} \cosh s \right.\\
		& \quad + \left. \sqrt{\dfrac{\sin\beta \sin\delta}{ \sin\alpha \sin\gamma - \sin\beta \sin\delta}} \sinh s \right)  \\
		& = \sqrt{\dfrac{\sin \beta \sin \gamma}{\sin \delta \sin \alpha }-1} \left(\sqrt{\dfrac{\sin \alpha \sin \gamma}{ \sin\alpha \sin\gamma - \sin\beta \sin\delta}} \cosh s \right. \\
		& \quad + \left. \sqrt{\dfrac{\sin\beta \sin\delta}{ \sin\alpha \sin\gamma - \sin\beta \sin\delta}} \sinh s \right) \\
		& = \sqrt{\dfrac{\sin\beta \sin\gamma}{\sin\delta \sin\alpha}-1} \cosh (s + \theta_1') \\
	\end{aligned}
\end{equation*}
\begin{equation*}
	\begin{aligned}
		w_1 & = \sqrt{\dfrac{\sin (\alpha+\gamma)}{\sin (\alpha-\delta)}} \left( \sqrt{\dfrac{\sin \delta}{\sin \gamma}}  \cosh s + \sqrt{\dfrac{\sin \alpha}{\sin \beta}} \sinh s \right) \\
		& = \sqrt{\dfrac{\sin (\alpha+\gamma)}{\sin (\alpha-\delta) }\left(\dfrac{\sin\alpha}{\sin\beta}-\dfrac{\sin\delta}{\sin\gamma}\right)} \left(\sqrt{\dfrac{\sin\beta \sin\delta}{ \sin\alpha \sin\gamma - \sin\beta \sin\delta}} \cosh s \right.\\ & \left. \quad + \sqrt{\dfrac{\sin\alpha \sin\gamma}{ \sin\alpha \sin\gamma - \sin\beta \sin\delta}} \sinh s \right)  \\
		& = \sqrt{1-\dfrac{\sin \delta \sin \alpha}{\sin \beta \sin \gamma}} \left(\sqrt{\dfrac{\sin \beta \sin \delta}{ \sin\alpha \sin\gamma - \sin\beta \sin\delta}} \cosh s + \right. \\ & \quad \left. \sqrt{\dfrac{\sin\alpha \sin\gamma}{ \sin\alpha \sin\gamma - \sin\beta \sin\delta}} \sinh s \right) \\
		& = \sqrt{1-\dfrac{\sin\delta \sin\alpha}{\sin\beta \sin\gamma}} \sinh (s + \theta_1')
	\end{aligned}
\end{equation*}
\begin{equation*}
	\tanh \theta_1' = \sqrt{\dfrac{\sin \beta \sin \delta}{\sin \alpha \sin \gamma}} ~~ \Rightarrow ~~ \theta_1' = \dfrac{1}{2} \ln \left( \dfrac{\sqrt{\sin \beta \sin \delta} + \sqrt{\sin \alpha \sin \gamma}}{\sqrt{\sin \beta \sin \delta} - \sqrt{\sin \alpha \sin \gamma}}\right)
\end{equation*}
We use $\theta_1'$ here since the phase shift $\theta_1$ will be defined later. After post-examination on sign choices, the solutions are $s \in (0, +\infty)$:
\begin{equation*}
	\begin{aligned}
		&\begin{dcases}
			x = \sqrt{\dfrac{\sin \alpha \sin \beta}{\sin \gamma \sin \delta}-1} \cosh s \\
			y = \sqrt{\dfrac{\sin \beta \sin \gamma}{\sin \delta \sin \alpha }-1} \cosh (s + \theta_1') \\
			z = -\sqrt{1-\dfrac{\sin \gamma \sin \delta}{\sin \alpha \sin \beta}} \sinh s \\
			w= \sqrt{1-\dfrac{\sin \delta\sin  \alpha}{\sin \beta \sin \gamma}} \sinh (s + \theta_1')
		\end{dcases} 
		~~ \mathrm{and} ~~ \begin{dcases}
			x = -\sqrt{\dfrac{\sin \alpha \sin \beta}{\sin \gamma \sin \delta}-1} \cosh s \\
			y = -\sqrt{\dfrac{\sin \beta \sin \gamma}{\sin \delta \sin \alpha }-1} \cosh (s + \theta_1') \\
			z = \sqrt{1-\dfrac{\sin \gamma \sin \delta}{\sin \alpha \sin \beta}} \sinh s \\
			w= -\sqrt{1-\dfrac{\sin \delta \sin \alpha}{\sin \beta \sin \gamma}} \sinh (s + \theta_1')
		\end{dcases}
	\end{aligned}
\end{equation*}
\begin{equation*}
	\begin{aligned}
		\mathrm{and} ~~ \begin{dcases}
			x = \sqrt{\dfrac{\sin \alpha \sin \beta}{\sin \gamma \sin \delta}-1} \cosh s \\
			y = \sqrt{\dfrac{\sin \beta \sin \gamma}{\sin \delta \sin \alpha }-1} \cosh (-s + \theta_1') \\
			z = \sqrt{1-\dfrac{\sin \gamma \sin \delta}{\sin \alpha \sin \beta}} \sinh s \\
			w= \sqrt{1-\dfrac{\sin \delta \sin \alpha}{\sin \beta \sin \gamma}} \sinh (-s + \theta_1')
		\end{dcases} 
		~~ \mathrm{and} ~~ \begin{dcases}
			x = -\sqrt{\dfrac{\sin \alpha \sin \beta}{\sin \gamma \sin \delta}-1} \cosh s \\
			y = -\sqrt{\dfrac{\sin \beta \sin \gamma}{\sin \delta \sin \alpha }-1} \cosh (-s + \theta_1') \\
			z = -\sqrt{1-\dfrac{\sin \gamma \sin \delta}{\sin \alpha\sin  \beta}} \sinh s \\
			w= -\sqrt{1-\dfrac{\sin\delta \sin\alpha}{\sin\beta \sin\gamma}} \sinh (-s + \theta_1')
		\end{dcases} 
	\end{aligned}
\end{equation*}

Note that when $\sigma > \pi$ there is a sign change, which is shown below. Using the same technique as Deltoid II, after listing all the cases and apply analytical continuation we could directly write the final expressions:
\begin{equation}
	\begin{dcases}
		x = p_x \cos t \\
		y = \mathrm{sign}(\pi - \sigma) p_y \cos \left(t - \theta_1 \right) \\
		z = p_z \cos \left(t + \dfrac{\pi}{2}\right) \\
		w = \mathrm{sign}(\pi - \sigma) p_w \cos \left(t - \theta_2 \right) \\
	\end{dcases}
\end{equation}

\bgroup
\def\arraystretch{2}
\begin{center}
	\begin{tabular}{|c|c|c|}
		\hline
		\makecell{Choice of $t$ for \\ real configurations} &  Branch 1 & Branch 2 \\ 
		\hline
		\makecell{$p_x \in \mathbb{R}^+$ \\ $p_z \in i\mathbb{R}^+$}  & \makecell {$is$, $s \in (0, +\infty) $ \\ $\pi + is$, $s \in (0, +\infty) $ \\ when $xz \neq 0$, $xz>0$ \\ this branch is continuous at $x = \infty$ \\ and `snaps' at $x = 0$} & \makecell {$is$, $s \in (-\infty, 0)$ \\ $\pi + is$, $s \in (-\infty, 0)$ \\ when $xz \neq 0$, $xz<0$ \\ this branch is continuous at $x = \infty$ \\ and `snaps' at $x = 0$} \\
		\hline
		\makecell{$p_x \in i\mathbb{R}^+$ \\ $p_z \in \mathbb{R}^+$} & \makecell {$3\pi/2 + is$, $s \in (0, +\infty) $ \\ $\pi/2 + is$, $s \in (0, +\infty) $ \\ when $xz \neq 0$, $xz>0$ \\ this branch is continuous at $x = \infty$ \\ and `snaps' at $x = 0$} & \makecell {$\pi/2 + is$, $s \in (-\infty, 0)$ \\ $3\pi/2 + is$, $s \in (-\infty, 0)$ \\ when $xz \neq 0$, $xz<0$ \\ this branch is continuous at $x = \infty$ \\ and `snaps' at $x = 0$} \\  
		\hline      
	\end{tabular}
\end{center}
\egroup

We could clearly see that in the complex field, $x, ~y, ~ z, ~w$ are trigonometric functions with specific amplitudes and phase shifts. Further,
\begin{equation*}
	\begin{gathered}
		\tan \theta_1 = i \sqrt{\dfrac{\sin \beta \sin \delta}{\sin \alpha \sin \gamma}} \\
		\tan \theta_2 = i \sqrt{\dfrac{\sin \alpha \sin \gamma}{\sin \beta \sin \delta}}
	\end{gathered}
\end{equation*}
\bgroup
\def\arraystretch{2}
\begin{center}
	\begin{tabular}{|c|c|c|}
		\hline
		&  $\theta_1$ & $\theta_2$ \\
		\hline
		$p_x \in \mathbb{R}^+, ~p_y \in \mathbb{R}^+$ & $ \ln \left|\dfrac{\sqrt{\sin \alpha \sin \gamma}+\sqrt{\sin \beta \sin \delta}}{\sqrt{\sin \alpha \sin \gamma}-\sqrt{\sin \beta \sin \delta}}\right|$ & $\dfrac{\pi}{2} + \theta_1$ \\
		\hline
		$p_y \in \mathbb{R}^+, ~p_z \in \mathbb{R}^+$ & $-\dfrac{\pi}{2} + \ln \left|\dfrac{\sqrt{\sin \alpha \sin \gamma}+\sqrt{\sin \beta \sin \delta}}{\sqrt{\sin \alpha \sin \gamma}-\sqrt{\sin \beta \sin \delta}}\right|$ & $\dfrac{\pi}{2} + \theta_1$\\ 
		\hline 
		$p_z \in \mathbb{R}^+, ~p_w \in \mathbb{R}^+$ & $ \ln \left|\dfrac{\sqrt{\sin \alpha \sin \gamma}+\sqrt{\sin \beta \sin \delta}}{\sqrt{\sin \alpha \sin \gamma}-\sqrt{\sin \beta \sin \delta}}\right|$ & $-\dfrac{\pi}{2} + \theta_1$ \\
		\hline
		$p_w \in \mathbb{R}^+, ~p_x \in \mathbb{R}^+$ & $ \dfrac{\pi}{2} + \ln \left|\dfrac{\sqrt{\sin \alpha \sin \gamma}+\sqrt{\sin \beta \sin \delta}}{\sqrt{\sin \alpha \sin \gamma}-\sqrt{\sin \beta \sin \delta}}\right|$ & $-\dfrac{\pi}{2} + \theta_1$ \\
		\hline      
	\end{tabular}
\end{center}
\egroup

\section{Conic II: finite solution}

The condition on sector angles is 
\begin{equation*} 
	f_{22} \neq 0, ~~ f_{20} = 0,~~f_{02} \neq 0, ~~f_{00} \neq 0  ~~ \Leftrightarrow ~~ 
	\begin{dcases}
		\alpha - \beta + \gamma - \delta \neq 0 \\
		\alpha - \beta - \gamma + \delta = 0 \\
		\alpha + \beta - \gamma - \delta \neq 0 \\
		\alpha + \beta + \gamma + \delta \neq 2\pi
	\end{dcases}
\end{equation*}
We could transfer the result for Conic I to Conic II from switching a strip:
\begin{equation}
	(\alpha, ~\beta, ~\gamma, ~\delta, ~x, ~y, ~z, ~w) \rightarrow (\alpha, ~\beta, ~\pi-\gamma, ~\pi-\delta, ~x, ~-y^{-1}, ~-z, ~-w^{-1}) \\
\end{equation}
Now:
\begin{equation*}
	\sigma = \dfrac{\alpha + \beta - \gamma - \delta}{2} + \pi
\end{equation*}
Another thing needs to be noted is the switch of branch 1 and branch 2 when switching a strip.
\begin{equation}
	\begin{dcases}
		x = p_x \cos t \\
		y^{-1} = \mathrm{sign}(\sigma - \pi) p_y \cos \left(t - \theta_1 \right) \\
		z = p_z \cos \left(t - \dfrac{\pi}{2}\right) \\
		w^{-1} = \mathrm{sign}(\sigma - \pi) p_w \cos \left(t - \theta_2 \right) \\
	\end{dcases}
\end{equation}
\begin{equation*}
	\begin{gathered}
		p_x=\sqrt{\dfrac{\sin \alpha \sin \beta}{\sin \gamma \sin \delta}-1}, ~~ p_y=\sqrt{\dfrac{\sin \beta \sin \gamma}{\sin \delta \sin \alpha}-1} \\
		p_z=\sqrt{\dfrac{\sin \gamma \sin \delta}{\sin \alpha \sin \beta}-1}, ~~ p_w=\sqrt{\dfrac{\sin \delta \sin \alpha}{\sin \beta \sin \gamma}-1}
	\end{gathered}
\end{equation*}

\bgroup
\def\arraystretch{2}
\begin{center}
	\begin{tabular}{|c|c|c|}
		\hline
		\makecell{Choice of $t$ for \\ real configurations} &  Branch 1 & Branch 2 \\ 
		\hline
		\makecell{$p_x \in \mathbb{R}^+$ \\ $p_z \in i\mathbb{R}^+$}  & \makecell {$is$, $s \in (-\infty, 0)$ \\ $\pi + is$, $s \in (-\infty, 0)$ \\ when $xz \neq 0$, $xz>0$ \\ this branch is continuous at $x = \infty$ \\ and `snaps' at $x = 0$} & \makecell {$is$, $s \in (0, +\infty) $ \\ $\pi + is$, $s \in (0, +\infty)$ \\ when $xz \neq 0$, $xz<0$ \\ this branch is continuous at $x = \infty$ \\ and `snaps' at $x = 0$} \\
		\hline
		\makecell{$p_x \in i\mathbb{R}^+$ \\ $p_z \in \mathbb{R}^+$} & \makecell {$\pi/2 + is$, $s \in (-\infty, 0)$ \\ $3\pi/2 + is$, $s \in (-\infty, 0)$ \\ when $xz \neq 0$, $xz>0$ \\ this branch is continuous at $x = \infty$ \\ and `snaps' at $x = 0$} & \makecell {$3\pi/2 + is$, $s \in (0, +\infty) $ \\ $\pi/2 + is$, $s \in (0, +\infty) $ \\ when $xz \neq 0$, $xz<0$ \\ this branch is continuous at $x = \infty$ \\ and `snaps' at $x = 0$}\\  
		\hline      
	\end{tabular}
\end{center}
\egroup
For the phase shift:
\bgroup
\def\arraystretch{2}
\begin{center}
	\begin{tabular}{|c|c|c|}
		\hline
		&  $\theta_1$ & $\theta_2$ \\
		\hline
		$p_x \in \mathbb{R}^+, ~p_y \in \mathbb{R}^+$ & $ \ln \left|\dfrac{\sqrt{\sin \alpha \sin \gamma}+\sqrt{\sin \beta \sin \delta}}{\sqrt{\sin \alpha \sin \gamma}-\sqrt{\sin \beta \sin \delta}}\right|$ & $\dfrac{\pi}{2} + \theta_1$ \\
		\hline
		$p_y \in \mathbb{R}^+, ~p_z \in \mathbb{R}^+$ & $-\dfrac{\pi}{2} + \ln \left|\dfrac{\sqrt{\sin \alpha \sin \gamma}+\sqrt{\sin \beta \sin \delta}}{\sqrt{\sin \alpha \sin \gamma}-\sqrt{\sin \beta \sin \delta}}\right|$ & $\dfrac{\pi}{2} + \theta_1$\\ 
		\hline 
		$p_z \in \mathbb{R}^+, ~p_w \in \mathbb{R}^+$ & $ \ln \left|\dfrac{\sqrt{\sin \alpha \sin \gamma}+\sqrt{\sin \beta \sin \delta}}{\sqrt{\sin \alpha \sin \gamma}-\sqrt{\sin \beta \sin \delta}}\right|$ & $-\dfrac{\pi}{2} + \theta_1$ \\
		\hline
		$p_w \in \mathbb{R}^+, ~p_x \in \mathbb{R}^+$ & $ \dfrac{\pi}{2} + \ln \left|\dfrac{\sqrt{\sin \alpha \sin \gamma}+\sqrt{\sin \beta \sin \delta}}{\sqrt{\sin \alpha \sin \gamma}-\sqrt{\sin \beta \sin \delta}}\right|$ & $-\dfrac{\pi}{2} + \theta_1$ \\
		\hline      
	\end{tabular}
\end{center}
\egroup

Please refer to the table below on how to determine the maximum sector angle from the range of amplitudes $p_x, ~p_y, ~p_z, ~p_w$.
\bgroup
\def\arraystretch{2}
\begin{center}
	\begin{tabular}{|c|c|c|}
		\hline
		& $\sigma < \pi$ & $\sigma > \pi$ \\ 
		\hline
		$\beta = \mathrm{max}(\alpha, ~\beta, ~\pi - \gamma, ~\pi - \delta)$ & 
		\makecell{$\sin \alpha \sin \beta > \sin \gamma \sin \delta$ \\
			$\sin \beta \sin \gamma > \sin \delta \sin \alpha$ \\
			$\sin \alpha \sin \gamma > \sin \beta \sin \delta$}
		&  \makecell{$\sin \alpha \sin \beta < \sin \gamma \sin \delta$ \\
			$\sin \beta \sin \gamma < \sin \delta \sin \alpha$ \\
			$\sin \alpha \sin \gamma > \sin \beta \sin \delta$} \\
		\hline
		$\alpha = \mathrm{max}(\alpha, ~\beta, ~\pi - \gamma, ~\pi - \delta)$ & \makecell{$\sin \alpha \sin \beta > \sin \gamma \sin \delta$ \\
			$\sin \beta \sin \gamma < \sin \delta \sin \alpha$ \\
			$\sin \alpha \sin \gamma < \sin \beta \sin \delta$} &  \makecell{$\sin \alpha \sin \beta < \sin \gamma \sin \delta$ \\
			$\sin \beta \sin \gamma > \sin \delta \sin \alpha$ \\
			$\sin \alpha \sin \gamma < \sin \beta \sin \delta$} \\
		\hline
		$\pi - \delta= \mathrm{max}(\alpha, ~\beta, ~\pi - \gamma, ~\pi - \delta)$ & \makecell{$\sin \alpha \sin \beta < \sin \gamma \sin \delta$ \\
			$\sin \beta \sin \gamma < \sin \delta \sin \alpha$ \\
			$\sin \alpha \sin \gamma > \sin \beta \sin \delta$} & \makecell{$\sin \alpha \sin \beta > \sin \gamma \sin \delta$ \\
			$\sin \beta \sin \gamma > \sin \delta \sin \alpha$ \\
			$\sin \alpha \sin \gamma > \sin \beta \sin \delta$} \\
		\hline
		$\pi - \gamma = \mathrm{max}(\alpha, ~\beta, ~\pi - \gamma, ~\pi - \delta)$ & \makecell{$\sin \alpha \sin \beta < \sin \gamma \sin \delta$ \\
			$\sin \beta \sin \gamma > \sin \delta \sin \alpha$ \\
			$\sin \alpha \sin \gamma < \sin \beta \sin \delta$} &  \makecell{$\sin \alpha \sin \beta > \sin \gamma \sin \delta$ \\
			$\sin \beta \sin \gamma < \sin \delta \sin \alpha$ \\
			$\sin \alpha \sin \gamma < \sin \beta \sin \delta$} \\
		\hline
	\end{tabular}
\end{center}
\egroup

\section{Conic III: finite solution}

The condition on sector angles is 
\begin{equation*} 
	f_{22} \neq 0, ~~ f_{20} \neq 0,~~f_{02} = 0, ~~f_{00} \neq 0  ~~ \Leftrightarrow ~~ 
	\begin{dcases}
		\alpha - \beta + \gamma - \delta \neq 0 \\
		\alpha - \beta - \gamma + \delta \neq 0 \\
		\alpha + \beta - \gamma - \delta = 0 \\
		\alpha + \beta + \gamma + \delta \neq 2\pi
	\end{dcases}
\end{equation*}
We could transfer the result for Conic I to Conic III from switching a strip:
\begin{equation}
	(\alpha, ~\beta, ~\gamma, ~\delta, ~x, ~y, ~z, ~w) \rightarrow (\pi-\alpha, ~\beta, ~\gamma, ~\pi-\delta, ~-x^{-1}, ~y, ~-z^{-1}, ~-w) \\
\end{equation}
Now:
\begin{equation*}
	\sigma = \dfrac{-\alpha + \beta + \gamma - \delta}{2} + \pi
\end{equation*}
and we could directly write the result for Conic III with a little careful checking on the signs of the expressions:
\begin{equation}
	\begin{dcases}
		x^{-1} = -p_x \cos t \\
		y = \mathrm{sign}(\pi - \sigma) p_y \cos \left(t - \theta_1 \right) \\
		z^{-1} = p_z \cos \left(t - \dfrac{\pi}{2}\right) \\
		w = \mathrm{sign}(\sigma - \pi) p_w \cos \left(t - \theta_2 \right) \\ 
	\end{dcases}
\end{equation}
\begin{equation*}
	\begin{gathered}
		p_x=\sqrt{\dfrac{\sin \alpha \sin \beta}{\sin \gamma \sin \delta}-1}, ~~ p_y=\sqrt{\dfrac{\sin \beta \sin \gamma}{\sin \delta \sin \alpha}-1} \\
		p_z=\sqrt{\dfrac{\sin \gamma \sin \delta}{\sin \alpha \sin \beta}-1}, ~~ p_w=\sqrt{\dfrac{\sin \delta \sin \alpha}{\sin \beta \sin \gamma}-1}
	\end{gathered}
\end{equation*}

\bgroup
\def\arraystretch{2}
\begin{center}
	\begin{tabular}{|c|c|c|}
		\hline
		\makecell{Choice of $t$ for \\ real configurations} &  Branch 1 & Branch 2 \\ 
		\hline
		\makecell{$p_x \in \mathbb{R}^+$ \\ $p_z \in i\mathbb{R}^+$}  & \makecell {$is$, $s \in (0, +\infty)$ \\ $\pi + is$, $s \in (0, +\infty) $ \\ when $xz \neq 0$, $xz>0$ \\ this branch is continuous at $x = 0$ \\ and `snaps' at $x = \infty$} & \makecell {$is$, $s \in (-\infty, 0)$ \\ $\pi + is$, $s \in (-\infty, 0)$ \\ when $xz \neq 0$, $xz<0$ \\ this branch is continuous at $x = 0$ \\ and `snaps' at $x = \infty$} \\
		\hline
		\makecell{$p_x \in i\mathbb{R}^+$ \\ $p_z \in \mathbb{R}^+$} & \makecell {$3\pi/2 + is$, $s \in (0, +\infty)$ \\ $\pi/2 + is$, $s \in (0, +\infty)$ \\ when $xz \neq 0$, $xz>0$ \\ this branch is continuous at $x = 0$ \\ and `snaps' at $x = \infty$} & \makecell {$\pi/2 + is$, $s \in (-\infty, 0)$ \\ $3\pi/2 + is$, $s \in (-\infty, 0)$ \\ when $xz \neq 0$, $xz<0$ \\ this branch is continuous at $x = 0$ \\ and `snaps' at $x = \infty$} \\  
		\hline      
	\end{tabular}
\end{center}
\egroup
For the phase shift:
\bgroup
\def\arraystretch{2}
\begin{center}
	\begin{tabular}{|c|c|c|}
		\hline
		&  $\theta_1$ & $\theta_2$ \\
		\hline
		$p_x \in \mathbb{R}^+, ~p_y \in \mathbb{R}^+$ & $ \ln \left|\dfrac{\sqrt{\sin \alpha \sin \gamma}+\sqrt{\sin \beta \sin \delta}}{\sqrt{\sin \alpha \sin \gamma}-\sqrt{\sin \beta \sin \delta}}\right|$ & $\dfrac{\pi}{2} + \theta_1$ \\
		\hline
		$p_y \in \mathbb{R}^+, ~p_z \in \mathbb{R}^+$ & $-\dfrac{\pi}{2} + \ln \left|\dfrac{\sqrt{\sin \alpha \sin \gamma}+\sqrt{\sin \beta \sin \delta}}{\sqrt{\sin \alpha \sin \gamma}-\sqrt{\sin \beta \sin \delta}}\right|$ & $\dfrac{\pi}{2} + \theta_1$\\ 
		\hline 
		$p_z \in \mathbb{R}^+, ~p_w \in \mathbb{R}^+$ & $ \ln \left|\dfrac{\sqrt{\sin \alpha \sin \gamma}+\sqrt{\sin \beta \sin \delta}}{\sqrt{\sin \alpha \sin \gamma}-\sqrt{\sin \beta \sin \delta}}\right|$ & $-\dfrac{\pi}{2} + \theta_1$ \\
		\hline
		$p_w \in \mathbb{R}^+, ~p_x \in \mathbb{R}^+$ & $ \dfrac{\pi}{2} + \ln \left|\dfrac{\sqrt{\sin \alpha \sin \gamma}+\sqrt{\sin \beta \sin \delta}}{\sqrt{\sin \alpha \sin \gamma}-\sqrt{\sin \beta \sin \delta}}\right|$ & $-\dfrac{\pi}{2} + \theta_1$ \\
		\hline      
	\end{tabular}
\end{center}
\egroup

Please refer to the table below on how to determine the maximum sector angle from the range of amplitudes $p_x, ~p_y, ~p_z, ~p_w$.
\bgroup
\def\arraystretch{2}
\begin{center}
	\begin{tabular}{|c|c|c|}
		\hline
		& $\sigma < \pi$ & $\sigma > \pi$ \\ 
		\hline
		$\beta = \mathrm{max}(\pi - \alpha, ~\beta, ~\gamma, ~\pi - \delta)$ & 
		\makecell{$\sin \alpha \sin \beta > \sin \gamma \sin \delta$ \\
			$\sin \beta \sin \gamma > \sin \delta \sin \alpha$ \\
			$\sin \alpha \sin \gamma > \sin \beta \sin \delta$}
		&  \makecell{$\sin \alpha \sin \beta < \sin \gamma \sin \delta$ \\
			$\sin \beta \sin \gamma < \sin \delta \sin \alpha$ \\
			$\sin \alpha \sin \gamma > \sin \beta \sin \delta$} \\
		\hline
		$\pi - \alpha = \mathrm{max}(\pi - \alpha, ~\beta, ~\gamma, ~\pi - \delta)$ & \makecell{$\sin \alpha \sin \beta > \sin \gamma \sin \delta$ \\
			$\sin \beta \sin \gamma < \sin \delta \sin \alpha$ \\
			$\sin \alpha \sin \gamma < \sin \beta \sin \delta$} &  \makecell{$\sin \alpha \sin \beta < \sin \gamma \sin \delta$ \\
			$\sin \beta \sin \gamma > \sin \delta \sin \alpha$ \\
			$\sin \alpha \sin \gamma < \sin \beta \sin \delta$} \\
		\hline
		$\pi - \delta= \mathrm{max}(\pi - \alpha, ~\beta, ~\gamma, ~\pi - \delta)$ & \makecell{$\sin \alpha \sin \beta < \sin \gamma \sin \delta$ \\
			$\sin \beta \sin \gamma < \sin \delta \sin \alpha$ \\
			$\sin \alpha \sin \gamma > \sin \beta \sin \delta$} & \makecell{$\sin \alpha \sin \beta > \sin \gamma \sin \delta$ \\
			$\sin \beta \sin \gamma > \sin \delta \sin \alpha$ \\
			$\sin \alpha \sin \gamma > \sin \beta \sin \delta$} \\
		\hline
		$\gamma = \mathrm{max}(\pi - \alpha, ~\beta, ~\gamma, ~\pi - \delta)$ & \makecell{$\sin \alpha \sin \beta < \sin \gamma \sin \delta$ \\
			$\sin \beta \sin \gamma > \sin \delta \sin \alpha$ \\
			$\sin \alpha \sin \gamma < \sin \beta \sin \delta$} &  \makecell{$\sin \alpha \sin \beta > \sin \gamma \sin \delta$ \\
			$\sin \beta \sin \gamma < \sin \delta \sin \alpha$ \\
			$\sin \alpha \sin \gamma < \sin \beta \sin \delta$} \\
		\hline
	\end{tabular}
\end{center}
\egroup

\section{Conic IV: real solution}

The condition on sector angles is:
\begin{equation*} 
	f_{22} \neq 0, ~~ f_{20} \neq 0,~~f_{02} \neq 0, ~~f_{00} = 0  ~~ \Leftrightarrow ~~ 
	\begin{dcases}
		\alpha - \beta + \gamma - \delta \neq 0 \\
		\alpha - \beta - \gamma + \delta \neq 0 \\
		\alpha + \beta - \gamma - \delta \neq 0 \\
		\alpha + \beta + \gamma + \delta = 2\pi
	\end{dcases}
\end{equation*}
We could transfer the result for Conic I to Conic IV from switching two strips, continuing from Conic II or III:
\begin{equation}
	(\alpha, ~\beta, ~\gamma, ~\delta, ~x, ~y, ~z, ~w) \rightarrow (\pi-\alpha, ~\beta, ~\pi-\gamma, ~\delta, ~-x^{-1}, ~-y^{-1}, ~z^{-1}, ~w^{-1}) \\
\end{equation}
Now:
\begin{equation*}
	\sigma = \dfrac{-\alpha + \beta - \gamma + \delta}{2} + \pi
\end{equation*}
and we could directly write the result for Conic IV with a little careful checking on the signs of the expressions:
\begin{equation}
	\begin{dcases}
		x^{-1} = -p_x \cos t \\
		y^{-1} = \mathrm{sign}(\sigma - \pi) p_y \cos \left(t - \theta_1 \right) \\
		z^{-1} = p_z \cos \left(t + \dfrac{\pi}{2}\right) \\
		w^{-1} = \mathrm{sign}(\pi - \sigma) p_w \cos \left(t - \theta_2 \right) \\
	\end{dcases}
\end{equation}
\begin{equation*}
	\begin{gathered}
		p_x=\sqrt{\dfrac{\sin \alpha \sin \beta}{\sin \gamma \sin \delta}-1}, ~~ p_y=\sqrt{\dfrac{\sin \beta \sin \gamma}{\sin \delta \sin \alpha}-1} \\
		p_z=\sqrt{\dfrac{\sin \gamma \sin \delta}{\sin \alpha \sin \beta}-1}, ~~ p_w=\sqrt{\dfrac{\sin \delta \sin \alpha}{\sin \beta \sin \gamma}-1}
	\end{gathered}
\end{equation*}

\bgroup
\def\arraystretch{2}
\begin{center}
	\begin{tabular}{|c|c|c|}
		\hline
		\makecell{Choice of $t$ for \\ real configurations} &  Branch 1 & Branch 2 \\ 
		\hline
		\makecell{$p_x \in \mathbb{R}^+$ \\ $p_z \in i\mathbb{R}^+$}  & \makecell {$is$, $s \in (-\infty, 0)$ \\ $\pi + is$, $s \in (-\infty, 0)$ \\ when $xz \neq 0$, $xz>0$ \\ this branch is continuous at $x = 0$ \\ and `snaps' at $x = \infty$} & \makecell {$is$, $s \in (0, +\infty) $ \\ $\pi + is$, $s \in (0, +\infty)$ \\ when $xz \neq 0$, $xz<0$ \\ this branch is continuous at $x = 0$ \\ and `snaps' at $x = \infty$} \\
		\hline
		\makecell{$p_x \in i\mathbb{R}^+$ \\ $p_z \in \mathbb{R}^+$} & \makecell {$\pi/2 + is$, $s \in (-\infty, 0)$ \\ $3\pi/2 + is$, $s \in (-\infty, 0)$ \\ when $xz \neq 0$, $xz>0$ \\ this branch is continuous at $x = 0$ \\ and `snaps' at $x = \infty$} & \makecell {$3\pi/2 + is$, $s \in (0, +\infty)$ \\ $\pi/2 + is$, $s \in (0, +\infty)$ \\ when $xz \neq 0$, $xz<0$ \\ this branch is continuous at $x = 0$ \\ and `snaps' at $x = \infty$} \\  
		\hline      
	\end{tabular}
\end{center}
\egroup
For the phase shift:
\bgroup
\def\arraystretch{2}
\begin{center}
	\begin{tabular}{|c|c|c|}
		\hline
		&  $\theta_1$ & $\theta_2$ \\
		\hline
		$p_x \in \mathbb{R}^+, ~p_y \in \mathbb{R}^+$ & $ \ln \left|\dfrac{\sqrt{\sin \alpha \sin \gamma}+\sqrt{\sin \beta \sin \delta}}{\sqrt{\sin \alpha \sin \gamma}-\sqrt{\sin \beta \sin \delta}}\right|$ & $\dfrac{\pi}{2} + \theta_1$ \\
		\hline
		$p_y \in \mathbb{R}^+, ~p_z \in \mathbb{R}^+$ & $-\dfrac{\pi}{2} + \ln \left|\dfrac{\sqrt{\sin \alpha \sin \gamma}+\sqrt{\sin \beta \sin \delta}}{\sqrt{\sin \alpha \sin \gamma}-\sqrt{\sin \beta \sin \delta}}\right|$ & $\dfrac{\pi}{2} + \theta_1$\\ 
		\hline 
		$p_z \in \mathbb{R}^+, ~p_w \in \mathbb{R}^+$ & $ \ln \left|\dfrac{\sqrt{\sin \alpha \sin \gamma}+\sqrt{\sin \beta \sin \delta}}{\sqrt{\sin \alpha \sin \gamma}-\sqrt{\sin \beta \sin \delta}}\right|$ & $-\dfrac{\pi}{2} + \theta_1$ \\
		\hline
		$p_w \in \mathbb{R}^+, ~p_x \in \mathbb{R}^+$ & $ \dfrac{\pi}{2} + \ln \left|\dfrac{\sqrt{\sin \alpha \sin \gamma}+\sqrt{\sin \beta \sin \delta}}{\sqrt{\sin \alpha \sin \gamma}-\sqrt{\sin \beta \sin \delta}}\right|$ & $-\dfrac{\pi}{2} + \theta_1$ \\
		\hline      
	\end{tabular}
\end{center}
\egroup

Please refer to the table below on how to determine the maximum sector angle from the range of amplitudes $p_x, ~p_y, ~p_z, ~p_w$.
\bgroup
\def\arraystretch{2}
\begin{center}
	\begin{tabular}{|c|c|c|}
		\hline
		& $\sigma < \pi$ & $\sigma > \pi$ \\ 
		\hline
		$\beta = \mathrm{max}(\pi -\alpha, ~\beta, ~\pi -\gamma, ~\delta)$ & 
		\makecell{$\sin \alpha \sin \beta > \sin \gamma \sin \delta$ \\
			$\sin \beta \sin \gamma > \sin \delta \sin \alpha$ \\
			$\sin \alpha \sin \gamma > \sin \beta \sin \delta$}
		&  \makecell{$\sin \alpha \sin \beta < \sin \gamma \sin \delta$ \\
			$\sin \beta \sin \gamma < \sin \delta \sin \alpha$ \\
			$\sin \alpha \sin \gamma > \sin \beta \sin \delta$} \\
		\hline
		$\pi - \alpha = \mathrm{max}(\pi -\alpha, ~\beta, ~\pi -\gamma, ~\delta)$ & \makecell{$\sin \alpha \sin \beta > \sin \gamma \sin \delta$ \\
			$\sin \beta \sin \gamma < \sin \delta \sin \alpha$ \\
			$\sin \alpha \sin \gamma < \sin \beta \sin \delta$} &  \makecell{$\sin \alpha \sin \beta < \sin \gamma \sin \delta$ \\
			$\sin \beta \sin \gamma > \sin \delta \sin \alpha$ \\
			$\sin \alpha \sin \gamma < \sin \beta \sin \delta$} \\
		\hline
		$\delta= \mathrm{max}(\pi -\alpha, ~\beta, ~\pi -\gamma, ~\delta)$ & \makecell{$\sin \alpha \sin \beta < \sin \gamma \sin \delta$ \\
			$\sin \beta \sin \gamma < \sin \delta \sin \alpha$ \\
			$\sin \alpha \sin \gamma > \sin \beta \sin \delta$} & \makecell{$\sin \alpha \sin \beta > \sin \gamma \sin \delta$ \\
			$\sin \beta \sin \gamma > \sin \delta \sin \alpha$ \\
			$\sin \alpha \sin \gamma > \sin \beta \sin \delta$} \\
		\hline
		$\pi - \gamma = \mathrm{max}(\pi -\alpha, ~\beta, ~\pi -\gamma, ~\delta)$ & \makecell{$\sin \alpha \sin \beta < \sin \gamma \sin \delta$ \\
			$\sin \beta \sin \gamma > \sin \delta \sin \alpha$ \\
			$\sin \alpha \sin \gamma < \sin \beta \sin \delta$} &  \makecell{$\sin \alpha \sin \beta > \sin \gamma \sin \delta$ \\
			$\sin \beta \sin \gamma < \sin \delta \sin \alpha$ \\
			$\sin \alpha \sin \gamma < \sin \beta \sin \delta$} \\
		\hline
	\end{tabular}
\end{center}
\egroup

\section{Elliptic: finite solution} \label{section: spherical elliptic}
This is the most general case with no extra condition on the sector angles. 
\begin{equation*} 
	f_{22} \neq 0, ~~ f_{20} \neq 0,~~f_{02} \neq 0, ~~f_{00} \neq 0  ~~ \Leftrightarrow ~~ 
	\begin{dcases}
		\alpha - \beta + \gamma - \delta \neq 0 \\
		\alpha - \beta - \gamma + \delta \neq 0 \\
		\alpha + \beta - \gamma - \delta \neq 0 \\
		\alpha + \beta + \gamma + \delta \neq 2\pi
	\end{dcases}
\end{equation*}
Similarly, let us start with Equation \eqref{eq: opposite folding angle}:
\begin{equation*} 
	\begin{gathered}
		g(\alpha, ~\beta, ~\gamma, ~\delta, ~x, ~ z)=g_{22}x^2z^2+g_{20}x^2+g_{02}z^2+g_{00}=0 \\
		g_{22} = \sin (\sigma-\alpha-\delta) \sin(\sigma-\beta-\delta), ~~
		g_{20} = \sin (\sigma-\alpha) \sin (\sigma-\beta) \\
		g_{02} = - \sin (\sigma-\gamma) \sin (\sigma-\delta), ~~
		g_{00} = \sin \sigma \sin (\sigma - \alpha - \beta) \\
		\sigma=\dfrac{\alpha+\beta+\gamma+\delta}{2}
	\end{gathered}
\end{equation*}
Let
\begin{equation}
	M = \dfrac{\sin \alpha \sin \beta \sin \gamma \sin \delta}{\sin (\sigma-\alpha) \sin (\sigma-\beta) \sin (\sigma-\gamma) \sin(\sigma-\delta)} \in (0, 1) \cup (1, +\infty)
\end{equation}
If using the amplitudes $p_x$ and $p_z$, from Section \ref{section: sign convention folding angle} we could see the above equation implies:
\begin{equation} \label{eq: spherical elliptic opposite}
	\begin{gathered}
		g(\alpha, ~\beta, ~\gamma, ~\delta, ~x, ~ z)=\left(M-1\right)\dfrac{x^2z^2}{p_x^2p_z^2}+\dfrac{x^2}{p_x^2}+\dfrac{z^2}{p_z^2} - 1 = 0\\
	\end{gathered}	
\end{equation} 
Further,
\begin{equation*}
	\begin{gathered}
		f(\alpha, ~\beta, ~\gamma, ~\delta, ~x, ~y) = f_{22} x^2y^2 + f_{20}x^2 + 2f_{11}xy + f_{02}y^2 + f_{00} = 0 \\
		f_{22} = \sin (\sigma-\beta) \sin (\sigma-\beta-\delta), ~~
		f_{20} = \sin (\sigma-\alpha) \sin(\sigma-\alpha-\delta) \\
		f_{11} = - \sin \alpha \sin \gamma, ~~ f_{02} =   \sin (\sigma-\gamma) \sin (\sigma-\gamma-\delta), ~~ f_{00}  = 
		\sin \sigma \sin (\sigma-\delta) \\
	\end{gathered}
\end{equation*}
Divide by $\dfrac{\sin \sigma \sin(\sigma-\alpha-\delta) \sin (\sigma-\gamma-\delta)}{\sin (\sigma-\beta)}$:
\begin{equation} \label{eq: spherical elliptic adjacent}
	\begin{aligned}
		f(\alpha, ~\beta, ~\gamma, ~\delta, ~x, ~y) & = \left( \dfrac{\sin \alpha \sin \gamma}{\sin(\sigma-\alpha)\sin(\sigma-\gamma)}-1 \right)\dfrac{x^2y^2}{p_x^2p_y^2} + \dfrac{x^2}{p_x^2} + \dfrac{y^2}{p_y^2} \\ & \quad +  \dfrac{\sin(\sigma-\beta)\sin(\sigma-\delta)}{\sin(\sigma-\alpha)\sin(\sigma-\gamma)-\sin \beta \sin\delta} \\ & \quad - \dfrac{2 \sin \alpha \sin \gamma}{\sqrt{\sin (\sigma-\alpha) \sin (\sigma-\gamma)(\sin (\sigma-\alpha) \sin (\sigma-\gamma)- \sin \beta \sin \delta)}}\dfrac{xy}{p_xp_y} = 0
	\end{aligned} \\
\end{equation} 
hence 
\begin{equation} \label{eq: spherical elliptic adjacent 2}
	\begin{aligned}
		f(\beta, ~\alpha, ~\delta, ~\gamma, ~x, ~w) & = \left( \dfrac{\sin \beta \sin \delta}{\sin (\sigma-\beta) \sin (\sigma-\delta)}-1 \right)\dfrac{x^2y^2}{p_x^2p_w^2} + \dfrac{x^2}{p_x^2} + \dfrac{w^2}{p_w^2} \\ & \quad + \dfrac{\sin(\sigma-\alpha)\sin(\sigma-\gamma)}{\sin(\sigma-\beta)\sin(\sigma-\delta)-\sin\alpha\sin\beta} \\ & \quad - \dfrac{2\sin\beta \sin\delta}{\sqrt{\sin(\sigma-\beta)\sin(\sigma-\delta)(\sin(\sigma-\beta)\sin(\sigma-\delta)-\sin\alpha\sin\gamma)}}\dfrac{xy}{p_xp_w} = 0
	\end{aligned} \\
\end{equation} 

It has the parametrization using elliptic functions defined below. We have provided a brief introduction on elliptic function in \citet[Section 14]{he_real_2023}.

\begin{prop}
	Consider the parametrization of Equation \eqref{eq: spherical elliptic opposite}, Equation \eqref{eq: spherical elliptic adjacent} and Equation \eqref{eq: spherical elliptic adjacent 2}.
	\begin{enumerate} [label={[\arabic*]}]
		\item When $M \in (1, +\infty)$:
		\begin{equation*}
			\begin{gathered}
				\dfrac{k^2}{1-k^2}\mathrm{cn}^2t\mathrm{cn}^2(t+K)+\mathrm{cn}^2t+\mathrm{cn}^2(t+K)-1=0 \\
				k = \sqrt{1-\dfrac{1}{M}}
			\end{gathered}
		\end{equation*}
		That is to say for Equation \eqref{eq: spherical elliptic opposite}:
		\begin{equation*}
			x = p_x \mathrm{cn} t, ~~z = p_z \mathrm{cn} (t+K)
		\end{equation*}
		\item When $M \in (0, 1)$:
		\begin{equation*}
			\begin{gathered}
				-k^2\mathrm{sn}^2t\mathrm{sn}^2(t+K)+\mathrm{sn}^2t+\mathrm{sn}^2(t+K)-1=0 \\
				k = \sqrt{1-M}
			\end{gathered}
		\end{equation*}
		That is to say for Equation \eqref{eq: spherical elliptic opposite}:
		\begin{equation*}
			x = p_x \mathrm{sn} t, ~~z = p_z \mathrm{sn} (t+K)
		\end{equation*}
		\item When $M \in (1, +\infty)$ and $\pi - \sigma > 0$:
		\begin{equation*}
			\begin{gathered}
				\dfrac{k^2\mathrm{sn}^2 \theta_1}{\mathrm{dn}^2 \theta_1}\mathrm{cn}^2 t\mathrm{cn}^2 (t-\theta_1)+\mathrm{cn}^2t-\dfrac{2\mathrm{cn}\theta_1}{\mathrm{dn}^2\theta_1}\mathrm{cn} t\mathrm{cn} (t-\theta_1)+\mathrm{cn}^2(t-\theta_1)+\dfrac{(k^2-1)\mathrm{sn}^2 \theta_1}{\mathrm{dn}^2 \theta_1}=0 \\
				k = \sqrt{1-\dfrac{1}{M}}
			\end{gathered}
		\end{equation*}
		Hence for the parametrization of Equation \eqref{eq: spherical elliptic adjacent} when $\alpha+\gamma<\sigma$:
		\begin{equation*}
			\begin{dcases}
				\dfrac{k^2\mathrm{sn}^2 \theta_1}{\mathrm{dn}^2 \theta_1} = \dfrac{\sin \alpha \sin \gamma}{\sin(\sigma-\alpha)\sin(\sigma-\gamma)}-1 \\
				\dfrac{\mathrm{cn}\theta_1}{\mathrm{dn}^2\theta_1} = \dfrac{\sin\alpha \sin\gamma}{\sqrt{\sin(\sigma-\alpha)\sin(\sigma-\gamma)(\sin(\sigma-\alpha)\sin(\sigma-\gamma)-\sin\beta\sin\delta)}}\\
				\dfrac{(k^2-1)\mathrm{sn}^2 \theta_1}{\mathrm{dn}^2 \theta_1} = \dfrac{\sin(\sigma-\beta)\sin(\sigma-\delta)}{\sin(\sigma-\alpha)\sin(\sigma-\gamma)-\sin\beta\sin\delta}
			\end{dcases}
		\end{equation*}
		The above equation determines $\theta_1$ in the form of:
		\begin{equation*}
			\mathrm{dn}\theta_1 = \sqrt{\dfrac{\sin(\sigma-\alpha)\sin(\sigma-\gamma)}{\sin\alpha \sin\gamma}}, ~~ \theta_1 \in (0, iK') \mathrm{~~when~~} \alpha + \gamma < \sigma \\
		\end{equation*}
		When $\alpha + \gamma > \sigma$, we need to do the following transformation:
		\begin{equation*}
			(\alpha, ~\beta, ~\gamma, ~\delta, ~x, ~y, ~z, ~w) \rightarrow (\delta, ~\alpha, ~\beta, ~\gamma, ~w, ~x, ~y, ~z)
		\end{equation*}
		and we could find 
		\begin{equation*}
			\mathrm{dn}\theta_1 = \sqrt{\dfrac{\sin(\sigma-\beta)\sin(\sigma-\delta)}{\sin\beta \sin\delta}}, ~~ \theta_1 \in (0, iK') \mathrm{~~when~~} \alpha + \gamma > \sigma \\
		\end{equation*}
		Note that when $\pi- \sigma < 0$ there is a sign change, which is shown in the final expression.
		\item When $M \in (0, 1)$ and $\pi - \sigma > 0$:
		\begin{equation*}
			\begin{gathered}
				- k^2 \mathrm{sn}^2 \theta_1 \mathrm{sn}^2 t\mathrm{sn}^2 (t-\theta_1)+\mathrm{sn}^2t-2\mathrm{cn}\theta_1 \mathrm{dn} \theta_1 \mathrm{sn} t\mathrm{sn} (t-\theta_1)+\mathrm{sn}^2(t-\theta_1)-\mathrm{sn}^2 \theta_1=0 \\
				k = \sqrt{1-M}
			\end{gathered}
		\end{equation*}
		Hence for the parametrization of Equation \eqref{eq: spherical elliptic adjacent}, when $\alpha + \gamma > \sigma $:
		\begin{equation*}
			\begin{dcases}
				-k^2\mathrm{sn}^2 \theta_1 = \dfrac{\sin\alpha \sin\gamma}{\sin(\sigma-\alpha)\sin(\sigma-\gamma)}-1 \\
				\mathrm{cn}\theta_1 \mathrm{dn} \theta_1 = \dfrac{\sin\alpha \sin\gamma}{\sqrt{\sin(\sigma-\alpha)\sin(\sigma-\gamma)(\sin(\sigma-\alpha)\sin(\sigma-\gamma)-\sin\beta\sin\delta)}} \\
				- \mathrm{sn}^2 \theta_1 = \dfrac{\sin(\sigma-\beta)\sin(\sigma-\delta)}{\sin(\sigma-\alpha)\sin(\sigma-\gamma)-\sin\beta\sin\delta} 
			\end{dcases}
		\end{equation*}
		The above equation determines $\theta_1$ in the form of:
		\begin{equation*}
			\mathrm{dn}\theta_1 = \sqrt{\dfrac{\sin \alpha \sin\gamma}{\sin(\sigma-\alpha)\sin(\sigma-\gamma)}}, ~~\theta_1 \in (0, iK') \mathrm{~~when~~} \alpha + \gamma > \sigma \\
		\end{equation*}
		When $\alpha + \gamma < \sigma$, we need to do the following transformation:
		\begin{equation*}
			(\alpha, ~\beta, ~\gamma, ~\delta, ~x, ~y, ~z, ~w) \rightarrow (\delta, ~\alpha, ~\beta, ~\gamma, ~w, ~x, ~y, ~z)
		\end{equation*}
		and we could find 
		\begin{equation*}
			\mathrm{dn}\theta_1 = \sqrt{\dfrac{\sin(\sigma-\beta)\sin(\sigma-\delta)}{\sin\beta \sin\delta}}, ~~ \theta_1 \in (0, iK') \mathrm{~~when~~} \alpha + \gamma < \sigma \\
		\end{equation*}
		Note that when $\pi- \sigma < 0$ there is a sign change, which is shown in the final expression.
	\end{enumerate}
\end{prop}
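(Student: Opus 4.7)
The strategy is to verify each parametrization directly as an algebraic identity in Jacobi elliptic functions, then match coefficients of the resulting quadratic form against the coefficients of Equations \eqref{eq: spherical elliptic opposite}, \eqref{eq: spherical elliptic adjacent} and \eqref{eq: spherical elliptic adjacent 2}. In the opposite-angle case [1] and [2], the parametrization is a pure shift by the quarter period $K$, which makes the identity essentially a two-line computation once the half-period translation formulas are in hand. In the adjacent-angle case [3] and [4], the shift is an unknown $\theta_1$ whose value is extracted by matching three separate coefficients; the heart of the proof is to show that the resulting over-determined system is in fact consistent.

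For part [1], the plan is to substitute $\mathrm{cn}(t+K) = -k' \mathrm{sn}(t)/\mathrm{dn}(t)$, so that $\mathrm{cn}^2(t+K) = (1-k^2)\mathrm{sn}^2(t)/\mathrm{dn}^2(t)$. After substitution, the left hand side of the claimed identity collects over the common denominator $\mathrm{dn}^2(t)$, and the numerator simplifies to $\mathrm{sn}^2(t)\bigl(k^2\mathrm{cn}^2(t) + 1 - k^2\bigr) + \mathrm{dn}^2(t)(\mathrm{cn}^2(t)-1)$. The first factor equals $\mathrm{dn}^2(t)$ by the Pythagorean identity $k^2\mathrm{sn}^2 + \mathrm{dn}^2 = 1$ together with $\mathrm{sn}^2+\mathrm{cn}^2 = 1$, and then the bracket collapses to $\mathrm{sn}^2(t)+\mathrm{cn}^2(t)-1 = 0$. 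Matching the leading coefficient $k^2/(1-k^2)$ to $M-1$ recovers $k = \sqrt{1 - 1/M}$. Part [2] is handled analogously with $\mathrm{sn}(t+K) = \mathrm{cn}(t)/\mathrm{dn}(t)$ playing the role of the half-period translation, and $k = \sqrt{1-M}$ emerging from $-k^2 = M-1$.

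For part [3], I would plug $x = p_x\mathrm{cn}(t)$ and $y = p_y\mathrm{cn}(t-\theta_1)$ into Equation \eqref{eq: spherical elliptic adjacent} and use the cn-addition formula to expand $\mathrm{cn}(t-\theta_1)$, treating $\theta_1$ symbolically. Reading off the four coefficients (of $\mathrm{cn}^2 t \mathrm{cn}^2(t-\theta_1)$, $\mathrm{cn} t \mathrm{cn}(t-\theta_1)$, the pure $\mathrm{cn}^2$ terms, and the constant) gives the system displayed in the statement. The critical consistency check is that the three expressions for $k^2\mathrm{sn}^2\theta_1/\mathrm{dn}^2\theta_1$, $\mathrm{cn}\theta_1/\mathrm{dn}^2\theta_1$ and $(k^2-1)\mathrm{sn}^2\theta_1/\mathrm{dn}^2\theta_1$ must come from a single $\theta_1$; here the helpful identities of Proposition \ref{prop: identities folding angle}, especially the ones involving $\sin\alpha\sin\gamma - \sin(\sigma-\alpha)\sin(\sigma-\gamma)$ and $\sin(\sigma-\beta)\sin(\sigma-\delta)$, reduce the three relations to the single clean statement $\mathrm{dn}\theta_1 = \sqrt{\sin(\sigma-\alpha)\sin(\sigma-\gamma)/(\sin\alpha\sin\gamma)}$, which is real and lies in the appropriate range because of the sign constraints from Section \ref{section: sign convention folding angle}. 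The region of validity $\alpha+\gamma < \sigma$ versus $\alpha+\gamma > \sigma$ is governed by the sign of $\sin(\sigma-\alpha-\gamma) = -\sin(\sigma-\beta-\delta)$, and in the reversed case one applies the cyclic relabelling $(\alpha,\beta,\gamma,\delta) \mapsto (\delta,\alpha,\beta,\gamma)$ already recorded in the statement. Part [4] follows by the same pattern using the sn-addition formula.

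The main obstacle is the consistency of the over-determined coefficient system for $\theta_1$: there are three equations for one unknown, and their compatibility is not visible without the trigonometric identities catalogued in Proposition \ref{prop: identities folding angle}. A secondary subtlety is the global sign $\mathrm{sign}(\pi-\sigma)$ attached to $y$ and $w$ in the final unified expressions; this is not fixed by the polynomial identities alone and must be pinned down by post-examination against a single reference configuration, exactly as done in the conic cases earlier.
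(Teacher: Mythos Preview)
Your treatment of [1] and [2] is correct and matches the paper's ``direct calculation'' exactly; you have simply spelled out what the paper leaves implicit.

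For [3] and [4] your plan diverges from the paper's and, as written, has a gap at the key step. The proposition asserts two things: first, that for \emph{every} $\theta_1$ the pair $u=\mathrm{cn}\,t$, $v=\mathrm{cn}(t-\theta_1)$ satisfies the displayed biquadratic identically in $t$; second, that matching the coefficients of that biquadratic to those of Equation~\eqref{eq: spherical elliptic adjacent} determines $\theta_1$. You focus on the second task (and on the consistency of the over-determined system, which is fine), but your description of the first task is confused. You say you will ``expand $\mathrm{cn}(t-\theta_1)$ by the addition formula'' and then ``read off the coefficients of $\mathrm{cn}^2 t\,\mathrm{cn}^2(t-\theta_1)$, \ldots'': once you have expanded, $\mathrm{cn}(t-\theta_1)$ no longer appears as a free symbol, so there is nothing to read off. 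What actually needs to be done is to \emph{derive} the biquadratic relation between $u$ and $v$, and this is where the paper supplies the idea you are missing.

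The paper's trick is to write the addition formula twice---once expressing $v=\mathrm{cn}(t-\theta_1)$ in terms of $t$, and once expressing $u=\mathrm{cn}\,t$ in terms of $t-\theta_1$---clear denominators and square out the $\mathrm{sn}\cdot\mathrm{dn}$ cross-terms to obtain two polynomial relations in $u,v$ (each of degree four), then \emph{subtract} them. The difference is divisible by $k^2\mathrm{sn}^2\theta_1\,\mathrm{dn}^2\theta_1\,(u^2-v^2)$, and the quotient is precisely the symmetric biquadratic displayed in the statement. The same manoeuvre with $\mathrm{sn}$ gives [4]. This symmetrize-and-subtract step is the substantive content of the proof; once you have it, the coefficient matching and the consistency check you describe are routine. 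Your brute-force expansion could in principle verify the identity, but it is much messier and you have not indicated how it would close.
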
 
The proof is also provided in \citet[Section 14]{he_real_2023}.

Subsequently, as in the derivation process presented in Conic I, it remains essential to determine the magnitudes of $\alpha, ~\beta, ~\gamma, ~\delta$.
\begin{prop}
	The relation between the magnitudes of $\alpha, ~ \beta, ~\gamma, ~\delta$ and the signs of amplitudes $p_x, ~p_y, ~p_z, ~p_w$. Note that $\mathrm{max} = \mathrm{max}(\alpha, ~ \beta, ~\gamma, ~\delta)$. $\mathrm{min} = \mathrm{min}(\alpha, ~ \beta, ~\gamma, ~\delta)$. 
	\begin{equation*}
		\begin{gathered}
			\mathrm{max} + \mathrm{min} > \sigma \Leftrightarrow \mathrm{sign}(\pi-\sigma)(M-1)>0 \\
			\mathrm{max} + \mathrm{min} < \sigma \Leftrightarrow \mathrm{sign}(\pi-\sigma)(M-1)<0
		\end{gathered}
	\end{equation*}
	\begin{enumerate} [label={[\arabic*]}]
		\item $\mathrm{sign}(\pi-\sigma)(M-1)>0$ and $\alpha + \gamma < \sigma$ imply $\beta = \max$ or $\delta = \max$:
		\begin{enumerate} [label={[\alph*]}]
			\item $\beta = \max \Leftrightarrow \alpha \beta > (\sigma-\alpha)(\sigma-\beta)$, $\beta \gamma > (\sigma-\beta)(\sigma-\gamma)$
			\item $\delta = \max \Leftrightarrow \alpha \beta < (\sigma-\alpha)(\sigma-\beta)$, $\beta \gamma < (\sigma-\beta)(\sigma-\gamma)$
		\end{enumerate}
		\item $\mathrm{sign}(\pi-\sigma)(M-1)>0$ and $\alpha + \gamma > \sigma$ imply $\alpha = \max$ or $\gamma = \max$:
		\begin{enumerate} [label={[\alph*]}]
			\item $\alpha = \max \Leftrightarrow \alpha \beta > (\sigma-\alpha)(\sigma-\beta)$, $\beta \gamma < (\sigma-\beta)(\sigma-\gamma)$
			\item $\gamma = \max \Leftrightarrow \alpha \beta < (\sigma-\alpha)(\sigma-\beta)$, $\beta \gamma > (\sigma-\beta)(\sigma-\gamma)$
		\end{enumerate}
		\item $\mathrm{sign}(\pi-\sigma)(M-1)<0$ and $\alpha + \gamma > \sigma$ imply $\beta = \min$ or $\delta = \min$:
		\begin{enumerate} [label={[\alph*]}]
			\item $\beta = \min \Leftrightarrow \alpha \beta < (\sigma-\alpha)(\sigma-\beta)$, $\beta \gamma < (\sigma-\beta)(\sigma-\gamma)$
			\item $\delta = \min \Leftrightarrow \alpha \beta > (\sigma-\alpha)(\sigma-\beta)$, $\beta \gamma > (\sigma-\beta)(\sigma-\gamma)$
		\end{enumerate}
		\item $\mathrm{sign}(\pi-\sigma)(M-1)<0$ and $\alpha + \gamma < \sigma$ imply $\alpha = \min$ or $\gamma = \min$:
		\begin{enumerate} [label={[\alph*]}]
			\item $\alpha = \min \Leftrightarrow \alpha \beta < (\sigma-\alpha)(\sigma-\beta)$, $\beta \gamma > (\sigma-\beta)(\sigma-\gamma)$
			\item $\gamma = \min \Leftrightarrow \alpha \beta > (\sigma-\alpha)(\sigma-\beta)$, $\beta \gamma < (\sigma-\beta)(\sigma-\gamma)$
		\end{enumerate}
	\end{enumerate}
\end{prop}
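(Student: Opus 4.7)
My plan is to derive the whole proposition from identity [4] of Proposition~\ref{prop: identities folding angle}, which reads
\[
\sin\alpha\sin\beta\sin\gamma\sin\delta-\sin(\sigma-\alpha)\sin(\sigma-\beta)\sin(\sigma-\gamma)\sin(\sigma-\delta)=\sin\sigma\sin(\sigma-\alpha-\beta)\sin(\sigma-\alpha-\gamma)\sin(\sigma-\beta-\gamma).
\]
By Proposition~\ref{prop: sector angle range} each of $\sigma-\alpha,\sigma-\beta,\sigma-\gamma,\sigma-\delta$ lies in $(0,\pi)$, so the denominator of $M$ is strictly positive, giving $\mathrm{sign}(M-1)=\mathrm{sign}\!\bigl(\sin\sigma\sin(\sigma-\alpha-\beta)\sin(\sigma-\alpha-\gamma)\sin(\sigma-\beta-\gamma)\bigr)$. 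Since $\sigma\in(0,2\pi)$ gives $\mathrm{sign}(\sin\sigma)=\mathrm{sign}(\pi-\sigma)$, multiplying by $\mathrm{sign}(\pi-\sigma)$ and replacing each of the three remaining sines by the sign of its argument (valid because $\sigma-\alpha-\beta,\sigma-\alpha-\gamma,\sigma-\beta-\gamma\in(-\pi,\pi)$) yields
\[
\mathrm{sign}\!\bigl((\pi-\sigma)(M-1)\bigr)=\mathrm{sign}\!\bigl((\gamma+\delta-\alpha-\beta)(\beta+\delta-\alpha-\gamma)(\alpha+\delta-\beta-\gamma)\bigr).
\]

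The three factors on the right are precisely the three signed pair-sum differences coming from the three partitions of $\{\alpha,\beta,\gamma,\delta\}$ into two pairs. Sorting the values as $a_1\le a_2\le a_3\le a_4$, the partitions $\{a_1,a_2\}|\{a_3,a_4\}$ and $\{a_1,a_3\}|\{a_2,a_4\}$ always give non-negative differences, while the crossed partition $\{a_1,a_4\}|\{a_2,a_3\}$ gives $(\max+\min)-(\text{middle sum})=2(\max+\min-\sigma)$. Consequently the product inherits the sign of $\max+\min-\sigma$, which is the opening equivalence of the proposition.

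For the four sub-cases I treat case [1] in detail; the remaining cases follow by the same template after swapping the roles of $\{\alpha,\gamma\}$ vs.\ $\{\beta,\delta\}$ or replacing $\max$ with $\min$. Assuming $\mathrm{sign}(\pi-\sigma)(M-1)>0$ and $\alpha+\gamma<\sigma$: if $\alpha=\max$ with $m=\min$, the first equivalence gives $\alpha+m>\sigma$, and $\gamma\ge m$ forces $\alpha+\gamma\ge\alpha+m>\sigma$, contradicting the hypothesis; the analogous argument rules out $\gamma=\max$, so $\max\in\{\beta,\delta\}$. The middle factor $\beta+\delta-\alpha-\gamma$ is positive, so the positivity of the full product forces $(\gamma+\delta-\alpha-\beta)$ and $(\alpha+\delta-\beta-\gamma)$ to share a sign: both negative means $\alpha+\beta>\sigma$ and $\beta+\gamma>\sigma$, whose sum is $2\beta>\beta+\delta$, so $\beta>\delta$ and hence $\beta=\max$; both positive gives $\delta>\beta$ and hence $\delta=\max$. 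Finally the elementary identity $\alpha\beta-(\sigma-\alpha)(\sigma-\beta)=\sigma(\alpha+\beta-\sigma)$ rewrites each linear inequality $\alpha+\beta\gtrless\sigma$ in the multiplicative form stated in sub-cases (a) and (b).

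Cases [2], [3], [4] repeat the same three steps: localize $\max$ (or $\min$, when $\mathrm{sign}(\pi-\sigma)(M-1)<0$) in the complementary pair, split the three-factor product into the middle factor and the remaining two factors, and read off the sub-case via the $\beta$-versus-$\delta$ comparison obtained by summing the defining inequalities. The main bookkeeping obstacle is tracking whether $\max$ or $\min$ is the relevant extremum and whether $\alpha+\gamma$ lies above or below $\sigma$; once the four-case table is laid out carefully, the three-factor sign identity from the first step supplies all the content.
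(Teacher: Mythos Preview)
Your argument is correct. The paper does not actually supply a proof of this proposition; it merely remarks that the situation is ``similar to the derivation in Conic I'' and leaves the reader to fill in the details from the table there and the identities of Proposition~\ref{prop: identities folding angle}. Your route---factoring $M-1$ via identity~[4], reading the sign of the resulting three pair-difference factors to obtain the $\max+\min\gtrless\sigma$ equivalence, and then in each sub-case localizing the extremum and converting the linear inequalities to the multiplicative form via $\alpha\beta-(\sigma-\alpha)(\sigma-\beta)=\sigma(\alpha+\beta-\sigma)$---is precisely the natural execution of that hint, so there is nothing to contrast.
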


We are now prepared to present the solutions for both the real configuration space and its complexified counterpart.

If $M > 1$:
\begin{equation*}
	\begin{gathered}
		\begin{dcases}
			x = p_x \mathrm{cn}(t; ~k) \\
			y = \mathrm{sign}(\pi-\sigma)p_y \mathrm{cn}(t - \theta_1; ~k) \\
			z = p_z \mathrm{cn}(t+K; ~k) \\
			w = \mathrm{sign}(\pi-\sigma)p_w \mathrm{cn}(t - \theta_2; ~k) \\ 
		\end{dcases}, ~~ \begin{dcases}
			k = \sqrt{1-\dfrac{1}{M}}, ~~k' = \sqrt{\dfrac{1}{M}} \\
			K = \bigintsss_{0}^{\frac{\pi}{2}} \dfrac{\dif \psi}{\sqrt{1 - k^2 \sin^2 \psi}}
		\end{dcases} 
	\end{gathered}
\end{equation*}
To calculate $s$ with $x$:
\begin{equation*}
	\begin{gathered}
		\mathrm{cn}(is; ~k) = \dfrac{1}{\mathrm{cn}(s; ~k')}, ~~
		\mathrm{cn}(K + is; ~k) = -ik'\dfrac{\mathrm{sn}(s; ~k')}{\mathrm{dn}(s; ~k')} \\
	\end{gathered}
\end{equation*} 
\bgroup
\def\arraystretch{2}
\begin{center}
	\begin{tabular}{|c|c|c|}
		\hline
		\makecell{Choice of $t$ for \\ real configurations} &  Branch 1 & Branch 2 \\ 
		\hline
		\makecell{$p_x \in \mathbb{R}^+$ \\ $p_z \in i\mathbb{R}^+$} & \makecell {$is$, $s \in (0, +\infty)$ \\ $2K + is$, $s \in (0, +\infty)$ \\ when $xz \neq 0$, $xz>0$ \\ this branch `snaps' at $x = \infty$ \\ and `snaps' at $x = 0$} & \makecell {$is$, $s \in (-\infty, 0)$ \\ $2K + is$, $s \in (-\infty, 0)$ \\ when $xz \neq 0$, $xz<0$ \\ this branch `snaps' at $x = \infty$ \\ and `snaps' at $x = 0$} \\
		\hline
		\makecell{$p_x \in i\mathbb{R}^+$ \\ $p_z \in \mathbb{R}^+$} & \makecell {$3K + is$, $s \in (0, +\infty)$ \\ $K + is$, $s \in (0, +\infty)$ \\ when $xz \neq 0$, $xz>0$ \\ this branch `snaps' at $x = \infty$ \\ and `snaps' at $x = 0$} & \makecell {$K + is$, $s \in (-\infty, 0)$ \\ $3K + is$, $s \in (-\infty, 0)$ \\ when $xz \neq 0$, $xz<0$ \\ this branch `snaps' at $x = \infty$ \\ and `snaps' at $x = 0$} \\  
		\hline      
	\end{tabular}
\end{center}
for phase shift:
\begin{center}
	\begin{tabular}{|c|c|c|}
		\hline
		&  $\theta_1$ & $\theta_2$ \\
		\hline
		$p_x \in \mathbb{R}^+, ~p_y \in \mathbb{R}^+$ & $i\mathrm{dc}^{-1}\left(\sqrt{\dfrac{\sin(\sigma - \alpha)\sin(\sigma - \gamma)}{\sin \alpha \sin \gamma}};~k'\right)$ & $K + \theta_1$ \\
		\hline
		$p_y \in \mathbb{R}^+, ~p_z \in \mathbb{R}^+$ & $-K + i\mathrm{dc}^{-1}\left(\sqrt{\dfrac{\sin(\sigma - \beta)\sin(\sigma - \delta)}{\sin \beta \sin \delta}};~k'\right)$ & $K + \theta_1$\\ 
		\hline 
		$p_z \in \mathbb{R}^+, ~p_w \in \mathbb{R}^+$ & $ i\mathrm{dc}^{-1}\left(\sqrt{\dfrac{\sin(\sigma - \alpha)\sin(\sigma - \gamma)}{\sin \alpha \sin \gamma}};~k'\right)$ & $-K + \theta_1$ \\
		\hline
		$p_w \in \mathbb{R}^+, ~p_x \in \mathbb{R}^+$ & $ K + i\mathrm{dc}^{-1}\left(\sqrt{\dfrac{\sin(\sigma - \beta)\sin (\sigma - \delta)}{\sin \beta \sin \delta}};~k'\right)$ & $-K + \theta_1$ \\
		\hline      
	\end{tabular}
\end{center}
\begin{equation*}
	\begin{aligned}
		\mathrm{dn} \theta_1 = \dfrac{\mathrm{dn}(\theta_1';~k')}{\mathrm{cn}(\theta_1';~k')} = \mathrm{dc}(\theta_1';~k') \mathrm{~~when~~} \theta_1 \in (0, iK') 
	\end{aligned}
\end{equation*}
\egroup

If $M < 1$:
\begin{equation*}
	\begin{gathered}
		\begin{dcases}
			x = p_x \mathrm{sn}(t; ~k) \\
			y = \mathrm{sign}(\pi - \sigma)p_y \mathrm{sn}(t - \theta_1; ~k) \\
			z = p_z \mathrm{sn}(t+K; ~k) \\
			w = \mathrm{sign}(\pi - \sigma)p_w \mathrm{sn}(t - \theta_2; ~k) \\ 
		\end{dcases}, ~~ \begin{dcases}
			k = \sqrt{1-M}, ~~ k' = \sqrt{M}\\
			K = \bigintsss_{0}^{\frac{\pi}{2}} \dfrac{\dif \psi}{\sqrt{1 - k^2 \sin^2 \psi}}
		\end{dcases}
	\end{gathered}
\end{equation*}
To calculate $s$ with $x$:
\begin{equation*}
	\begin{gathered}
		\mathrm{sn}(K + is; ~k) = \dfrac{1}{\mathrm{dn}(s; ~k')}, ~~ \mathrm{sn}(is; ~k) = i \dfrac{\mathrm{sn}(s; ~k')}{\mathrm{cn}(s; ~k')}
	\end{gathered}
\end{equation*} 

\bgroup
\def\arraystretch{2}
\begin{center}
	\begin{tabular}{|c|c|c|}
		\hline
		\makecell{Choice of $t$ for \\ real configurations} &  Branch 1 & Branch 2 \\ 
		\hline
		\makecell{$p_x \in \mathbb{R}^+$ \\ $p_z \in i\mathbb{R}^+$} & \makecell {$K + is$, $s \in (0, +\infty)$ \\ $3K + is$, $s \in (0, +\infty)$ \\ when $xz \neq 0$, $xz>0$ \\ this branch `snaps' at $x = \infty$ \\ and `snaps' at $x = 0$} & \makecell {$K + is$, $s \in (-\infty, 0)$ \\ $3K + is$, $s \in (-\infty, 0)$ \\ when $xz \neq 0$, $xz<0$ \\ this branch `snaps' at $x = \infty$ \\ and `snaps' at $x = 0$} \\
		\hline
		\makecell{$p_x \in i\mathbb{R}^+$ \\ $p_z \in \mathbb{R}^+$} & \makecell {$is$, $s \in (0, +\infty)$ \\ $2K + is$, $s \in (0, +\infty)$ \\ when $xz \neq 0$, $xz>0$ \\ this branch `snaps' at $x = \infty$ \\ and `snaps' at $x = 0$} & \makecell {$2K + is$, $s \in (-\infty, 0)$ \\ $is$, $s \in (-\infty, 0)$ \\ when $xz \neq 0$, $xz<0$ \\ this branch `snaps' at $x = \infty$ \\ and `snaps' at $x = 0$} \\  
		\hline      
	\end{tabular}
\end{center}
for phase shift:
\begin{center}
	\begin{tabular}{|c|c|c|}
		\hline
		&  $\theta_1$ & $\theta_2$ \\
		\hline
		$p_x \in \mathbb{R}^+, ~p_y \in \mathbb{R}^+$ & $i\mathrm{dc}^{-1}\left(\sqrt{\dfrac{\sin \alpha \sin \gamma}{\sin(\sigma - \alpha)\sin(\sigma - \gamma)}};~k'\right)$ & $K + \theta_1$ \\
		\hline
		$p_y \in \mathbb{R}^+, ~p_z \in \mathbb{R}^+$ & $-K + i\mathrm{dc}^{-1}\left(\sqrt{\dfrac{\sin \beta \sin \delta}{\sin(\sigma - \beta)\sin(\sigma - \delta)}};~k'\right)$ & $K + \theta_1$\\ 
		\hline 
		$p_z \in \mathbb{R}^+, ~p_w \in \mathbb{R}^+$ & $ i\mathrm{dc}^{-1}\left(\sqrt{\dfrac{\sin \alpha \sin \gamma}{\sin(\sigma - \alpha)\sin(\sigma - \gamma)}};~k'\right)$ & $-K + \theta_1$ \\
		\hline
		$p_w \in \mathbb{R}^+, ~p_x \in \mathbb{R}^+$ & $ K + i\mathrm{dc}^{-1}\left(\sqrt{\dfrac{\sin \beta \sin \delta}{\sin (\sigma - \beta)\sin (\sigma - \delta)}};~k'\right)$ & $-K + \theta_1$ \\
		\hline      
	\end{tabular}
\end{center}
where
\begin{equation*}
	\begin{aligned}
		\mathrm{dn} \theta_1 = \dfrac{\mathrm{dn}(\theta_1';~k')}{\mathrm{cn}(\theta_1';~k')} = \mathrm{dc}(\theta_1';~k') \mathrm{~~when~~} \theta_1 \in (0, iK') 
	\end{aligned}
\end{equation*}
\egroup

The findings in Elliptic I and Elliptic II exhibit formal symmetry and align well with Deltoid II, Conic I, Conic II, and Conic III. 

\section{Orthodiagonal: finite solution}

The condition on sector angles is:
\begin{equation*} 
	f_{22} \neq 0, ~~ f_{20} \neq 0,~~f_{02} \neq 0, ~~f_{00} \neq 0  ~~ \Leftrightarrow ~~ 
	\begin{dcases}
		\alpha - \beta + \gamma - \delta \neq 0 \\
		\alpha - \beta - \gamma + \delta \neq 0 \\
		\alpha + \beta - \gamma - \delta \neq 0 \\
		\alpha + \beta + \gamma + \delta \neq 2\pi
	\end{dcases}
\end{equation*}
and
\begin{equation*}
	\cos \alpha \cos \gamma = \cos \beta \cos \delta \Leftrightarrow \sin(\sigma - \alpha)\sin(\sigma - \gamma) = \sin(\sigma - \beta)\sin(\sigma - \delta)
\end{equation*}
From the helpful identities listed in Proposition \ref{prop: identities folding angle}, we have the following simplified Equations \eqref{eq: degree-4 vertex adjacent}:
\begin{equation*}
	\begin{gathered}
		f(\alpha, ~\beta, ~\gamma, ~\delta, ~x, ~y) = f_{22} x^2y^2 + f_{20}x^2 + 2f_{11}xy + f_{02}y^2 + f_{00} = 0 \\
		f_{22} = \sin(\sigma-\beta)\sin(\sigma-\beta-\delta) = \dfrac{\sin (\beta - \alpha) \sin (\beta - \gamma)}{2\cos \beta} \\
		f_{20} = \sin(\sigma-\alpha)\sin(\sigma-\alpha-\delta) = \dfrac{\sin (\beta - \alpha) \sin (\beta + \gamma)}{2\cos \beta}\\
		f_{11} = - \sin \alpha \sin\gamma \\
		f_{02} = \sin(\sigma-\gamma)\sin(\sigma-\gamma-\delta) = \dfrac{\sin (\beta + \alpha) \sin (\beta - \gamma)}{2\cos \beta} \\
		f_{00}  = 
		\sin\sigma\sin(\sigma-\delta) = \dfrac{\sin (\beta + \alpha) \sin (\beta + \gamma)}{2\cos \beta} \\
		\sigma=\dfrac{\alpha+\beta+\gamma+\delta}{2}
	\end{gathered}
\end{equation*}
That is to say $x$ and $y$ are separable:
\begin{equation*}
	f = 0 \Leftrightarrow \left(\sin(\beta - \alpha)x + \dfrac{\sin(\beta + \alpha)}{x}  \right) \left(\sin(\beta - \gamma)y + \dfrac{\sin(\beta + \gamma)}{y}  \right) =  4 \sin \alpha \sin \gamma \cos \beta
\end{equation*}
In the parametrized expression, we could obtain $\cos \alpha \cos \gamma = \cos \beta \cos \delta \Rightarrow M<1$, since
\begin{equation*}
	\begin{aligned}
		1-M = & \dfrac{-\sin\sigma\sin(\sigma-\alpha-\beta)\sin(\sigma-\alpha-\gamma)\sin(\sigma-\beta-\gamma)}{\sin(\sigma-\alpha)\sin(\sigma-\beta)\sin(\sigma-\gamma)\sin(\sigma-\delta)} \\
		& = \dfrac{(\cos (\alpha - \beta)-\cos(\alpha - \beta))(\cos (\alpha + \beta)-\cos(\alpha + \beta))}{4} \\
		& = \left(\dfrac{\sin\alpha\sin\gamma - \sin\beta\sin\delta}{\sin\alpha\sin\gamma + \sin\beta\sin\delta}\right)^2
	\end{aligned}
\end{equation*}
and
\begin{equation*}
	k = \sqrt{1-M} = \dfrac{|\sin\alpha\sin\gamma - \sin\beta\sin\delta|}{\sin\alpha\sin\gamma + \sin\beta\sin\delta} 
\end{equation*}
Further we claim that $\theta_1'$ is half of the quarter period $K'$ of elliptic functions with modulus $k'$:
\begin{equation*}
	\theta_1' = \dfrac{K'}{2}
\end{equation*}
This is because for half-quarter-periods
\begin{equation*}
	\mathrm{dc}\left(\dfrac{K'}{2}; ~k'\right) = \sqrt{1 + k}
\end{equation*}
If $\sin\alpha\sin\gamma > \sin\beta\sin\delta$, 
\begin{equation*}
	\begin{aligned}
		\mathrm{dc}\left(\dfrac{K'}{2}; ~k'\right) = \sqrt{\dfrac{2 \sin \alpha \sin \gamma}{\sin\alpha\sin\gamma + \sin\beta\sin\delta}} = \sqrt{\dfrac{\sin\alpha \sin\gamma}{\sin(\sigma - \alpha)\sin(\sigma - \gamma)}}
	\end{aligned}
\end{equation*}
If $\sin\alpha\sin\gamma < \sin\beta\sin\delta$, 
\begin{equation*}
	\begin{aligned}
		\mathrm{dc}\left(\dfrac{K'}{2}; ~k'\right) = \sqrt{\dfrac{2 \sin \beta \sin \delta}{\sin\alpha\sin\gamma + \sin\beta\sin\delta}} = \sqrt{\dfrac{\sin\beta \sin\delta}{\sin(\sigma - \beta)\sin(\sigma - \delta)}}
	\end{aligned}
\end{equation*}

\section{Square: solution at infinity}

The condition on sector angles is:
\begin{equation*}
	f_{22}=0, ~~ f_{20}=0,~~f_{02}=0, ~~f_{00} = 0 ~~ \Leftrightarrow ~~  \alpha=\beta=\gamma=\delta=\pi/2
\end{equation*}
which means:
\begin{equation*}
	\begin{aligned}
		\begin{cases}
			x_1y_1x_2y_2=0 \\
			x_1w_1x_2w_2=0 \\
			x_1^2z_2^2-x_2^2z_1^2=0 \\
			y_1^2w_2^2-y_2^2w_1^2=0
		\end{cases} 
		~~ \Rightarrow ~~ & \begin{cases}
			x_1 \neq 0, ~~ x_2=0 \\
			z_1 \neq 0, ~~ z_2=0 \\
			w = \pm y
		\end{cases} \mathrm{or} \quad \begin{cases}
			y_1 \neq 0, ~~ y_2=0 \\
			w_1 \neq 0, ~~ w_2=0 \\
			x = \pm z
		\end{cases} \\
		~~ \Rightarrow ~~ & \begin{cases}
			x = \infty \\
			z = \infty \\
			w = \pm y
		\end{cases}  \mathrm{or} \quad \begin{cases}
			y = \infty \\
			w = \infty \\
			x =\pm z
		\end{cases}
	\end{aligned}
\end{equation*}
Following the examination, the solutions at infinity correspond to two distinct branches of simple folds, each of which is diffeomorphic to a circle $S^1$.
\begin{equation*}
	\begin{cases}
		x \in \mathbb{R} \cup \{\infty\} \\
		y \equiv \infty \\
		z = -x \\ 
		w \equiv \infty \\
	\end{cases} \mathrm{or} \quad \begin{cases}
		x \equiv \infty \\
		y \in \mathbb{R} \cup \{\infty\} \\
		z \equiv \infty \\ 
		w = -y \\
	\end{cases}	 
\end{equation*}

\section{Rhombus: solution at infinity}

The condition on sector angles is 
\begin{equation*}
	f_{22}=0, ~~ f_{20}=0,~~f_{02}=0, ~~f_{00} \neq 0 ~~ \Leftrightarrow ~~  \alpha=\beta=\gamma=\delta \neq \pi/2
\end{equation*}
which means:
\begin{equation*}
	\begin{aligned}
		\begin{cases}
			(x_1y_1-x_2y_2 \cos \alpha)x_2y_2=0 \\
			(x_1w_1-x_2w_2 \cos \alpha)x_2w_2=0 \\
			x_1^2z_2^2-x_2^2z_1^2=0 \\
			y_1^2w_2^2-y_2^2w_1^2=0
		\end{cases} 
		~~ \Rightarrow ~~ & \begin{cases}
			x_1 \neq 0, ~~ x_2=0 \\
			z_1 \neq 0, ~~ z_2=0 \\
			w = \pm y
		\end{cases} \mathrm{or} \quad \begin{cases}
			y_1 \neq 0, ~~ y_2=0 \\
			w_1 \neq 0, ~~ w_2=0 \\
			x = \pm z
		\end{cases} \\
		~~ \Rightarrow ~~ & \begin{cases}
			x = \infty \\
			z = \infty \\
			w = \pm y
		\end{cases}  \mathrm{or} \quad \begin{cases}
			y = \infty \\
			w = \infty \\
			x =\pm z
		\end{cases}
	\end{aligned}
\end{equation*}
Following the examination, the solutions at infinity correspond to two distinct branches of simple folds, each of which is diffeomorphic to a circle $S^1$.
\begin{equation*}
	\begin{cases}
		x \in \mathbb{R} \cup \{\infty\} \\
		y \equiv \infty \\
		z = -x \\ 
		w \equiv \infty \\
	\end{cases} \mathrm{or} \quad \begin{cases}
		x \equiv \infty \\
		y \in \mathbb{R} \cup \{\infty\} \\
		z \equiv \infty \\ 
		w = -y \\
	\end{cases}	 
\end{equation*}

\section{Cross: solution at infinity}
The condition on sector angles is:
\begin{equation*}
	f_{22} \neq 0, ~~ f_{20}=0,~~f_{02}=0, ~~f_{00} = 0 ~~ \Leftrightarrow ~~  \alpha= \pi - \beta = \gamma = \pi - \delta \neq \pi/2
\end{equation*}
which means:
\begin{equation*}
	\begin{aligned}
		\begin{cases}
			(x_1y_1 \cos \alpha+x_2y_2)x_1y_1=0 \\
			(x_1w_1 \cos \alpha+x_2w_2)x_1w_1=0 \\
			x_1^2z_2^2-x_2^2z_1^2=0 \\
			y_1^2w_2^2-y_2^2w_1^2=0
		\end{cases} 
		~~ \Rightarrow ~~ & \begin{cases}
			x_1 \neq 0, ~~ x_2=0 \\
			z_1 \neq 0, ~~ z_2=0 \\
			w = \pm y = 0
		\end{cases} \mathrm{or} \quad \begin{cases}
			y_1 \neq 0, ~~ y_2=0 \\
			w_1 \neq 0, ~~ w_2=0 \\
			x = \pm z = 0
		\end{cases} \\
		~~ \Rightarrow ~~ & \begin{cases}
			x = \infty \\
			y = 0 \\
			z = \infty \\
			w = 0
		\end{cases}  \mathrm{or} \quad \begin{cases}
			x = 0 \\
			y = \infty \\
			z = 0 \\
			w = \infty \\
		\end{cases}
	\end{aligned}
\end{equation*}
The solutions at infinity are two isolated points, which are limit points of the finite branches.

\section{Miura I: solution at infinity}

The condition on sector angles is:
\begin{equation*}
	f_{22} = 0, ~~ f_{20} \neq 0,~~f_{02}=0, ~~f_{00} = 0 ~~ \Leftrightarrow ~~  \alpha= \pi - \beta = \pi - \gamma = \delta \neq \pi/2
\end{equation*}
which means:
\begin{equation*}
	\begin{dcases}
		x_1y_2 (x_1y_2 \cos \alpha - x_2y_1) = 0 \\
		x_1^2z_2^2 - x_2^2z_1^2 = 0 \\
		x_1w_2 (x_1w_2 \cos \alpha - x_2w_1) = 0
	\end{dcases}
	~~ \Rightarrow ~~ \begin{cases}
		y_1 \neq 0, ~~ y_2 = 0 \\
		z = \pm x	\\
		w_1 \neq 0, ~~ w_2 = 0 \\
	\end{cases}
	\Rightarrow ~~  \begin{cases}
		y = \infty \\
		z = \pm x \\
		w = \infty \\	
	\end{cases}
\end{equation*}
From post-examination, the solutions at infinity becomes another branch diffeomorphic to a circle $S^1$:
\begin{equation*}
	\begin{cases}
		x \in \mathbb{R} \cup \{\infty\} \\
		y \equiv \infty \\
		z = -x \\ 
		w \equiv \infty \\
	\end{cases}
\end{equation*}

\section{Miura II: solution at infinity}
The condition on sector angles is:
\begin{equation*}
	f_{22} = 0, ~~ f_{20} = 0,~~f_{02} \neq 0, ~~f_{00} = 0 ~~ \Leftrightarrow ~~  \alpha = \beta = \pi - \gamma = \pi - \delta \neq \pi/2
\end{equation*}
which means:
\begin{equation*}
	\begin{dcases}
		x_2y_1 (x_2y_1 \cos \alpha + x_1y_2) = 0 \\
		x_1^2z_2^2- x_2^2z_1^2 = 0 \\
		x_2w_1 (x_2w_1 \cos \alpha + x_1w_2) = 0 \\
	\end{dcases}
	~~ \Rightarrow ~~ \begin{cases}
		x_1 \neq 0, ~~ x_2 = 0 \\
		y = \pm w \\
		z_1 \neq 0, ~~ z_2 = 0 \\	
	\end{cases} 
	~~ \Rightarrow ~~ \begin{cases}
		x = \infty \\
		y = \pm w  \\
		z = \infty \\	
	\end{cases}
\end{equation*}
From post-examination, the solutions at infinity becomes another branch diffeomorphic to a circle $S^1$:
\begin{equation*}
	\begin{cases}
		x \equiv \infty \\
		y \in \mathbb{R} \cup \{\infty\} \\
		z \equiv \infty \\ 
		w = -y \\
	\end{cases}	 
\end{equation*}

\section{Isogram: solution at infinity}

The condition on sector angles is 	
\begin{equation*}
	f_{22} \neq 0, ~~ f_{20}=0,~~f_{02}=0 ~~ \Leftrightarrow ~~ \gamma=\alpha, ~~ \delta=\beta, ~~\beta \neq \alpha
\end{equation*}
which means:
\begin{equation*}
	\begin{aligned}
		& \begin{cases}
			\sin(\alpha-\beta)x_1^2y_1^2-2 x_1y_1x_2y_2\sin\alpha+\sin(\alpha+\beta)x_2^2y_2^2 =0 \\
			x_1^2z_2^2-x_2^2z_1^2=0 \\
			\sin(\beta-\alpha)x_1^2w_1^2-2 x_1w_1x_2w_2\sin\beta+\sin(\beta+\alpha)x_2^2w_2^2 =0 \\
		\end{cases} \\
		~~ \Rightarrow ~~ & \begin{cases}
			x_1 \neq 0, ~~ x_2=0 \\
			y_1 = 0, ~~ y_2 \neq 0 \\
			z_1 \neq 0, ~~ z_2=0 \\
			w_1 = 0, ~~ w_2 \neq 0 \\
		\end{cases} \mathrm{or} \quad \begin{cases}
			y_1 \neq 0, ~~ y_2=0 \\
			z_1 = 0, ~~ z_2 \neq 0 \\
			w_1 \neq 0, ~~ w_2=0 \\
			x_1 = 0, ~~ x_2 \neq 0 \\
		\end{cases} \\
		~~ \Rightarrow ~~ & \begin{cases}
			x = \infty \\
			y = 0 \\
			z = \infty \\
			w = 0
		\end{cases}  \mathrm{or} \quad \begin{cases}
			x = 0 \\
			y = \infty \\
			z = 0 \\
			w = \infty \\
		\end{cases}
	\end{aligned}
\end{equation*}
The solutions at infinity are two isolated points, which are limit points of the finite branches.

\section{Anti-isogram: solution at infinity}

The condition on sector angles is:
\begin{equation*}
	f_{22} = 0, ~~ f_{20} \neq 0,~~f_{02} \neq 0, ~~f_{00} = 0 ~~ \Leftrightarrow ~~ \gamma=\pi - \alpha, ~~ \delta= \pi - \beta, ~~\beta \neq \alpha, ~~\alpha + \beta \neq \pi
\end{equation*}
which means:
\begin{equation*}
	\begin{aligned}
		& \begin{cases}
			\sin(\alpha-\beta) x_1^2y_2^2 + 2 x_1y_1x_2y_2 \sin \alpha + \sin(\alpha+\beta) x_2^2y_1^2 = 0 \\
			x_1^2z_2^2-x_2^2z_1^2=0 \\
			\sin(\alpha-\beta) x_1^2w_2^2 + 2 x_1w_1x_2w_2 \sin \alpha + \sin(\alpha+\beta) x_2^2w_1^2 = 0 
		\end{cases}  \\
		~~ \Rightarrow ~~ & \begin{cases}
			x_1 \neq 0, ~~ x_2=0 \\
			y_1 \neq 0, ~~ y_2=0 \\
			z_1 \neq 0, ~~ z_2=0 \\
			w_1 \neq 0, ~~ w_2=0 \\
		\end{cases} 
		~~ \Rightarrow ~~  \begin{cases}
			x = \infty \\
			y = \infty \\
			z = \infty \\
			w = \infty
		\end{cases} 
	\end{aligned}
\end{equation*}
This is exactly the limit point of the two branches in the finite solution, also called a \textbf{flat-folded state}.

\section{Deltoid I: solution at infinity}

The condition on sector angles is:	
\begin{equation*}
	f_{22} = 0, ~~ f_{20} \neq 0,~~f_{02} = 0, ~~f_{00} \neq 0 ~~ \Leftrightarrow ~~ \delta=\alpha, ~~ \gamma=\beta, ~~\beta \neq \alpha, ~~\alpha + \beta \neq \pi
\end{equation*}
which means:
\begin{equation*}
	\begin{aligned}
		& \begin{cases}
			\sin (\beta-\alpha)x_1^2y_2^2-2x_1y_1x_2y_2\sin \alpha+\sin (\beta+\alpha)x_2^2y_2^2 =0 \\
			x_1^2z_2^2-x_2^2z_1^2=0 \\
			\sin (\alpha-\beta)x_1^2w_2^2-2 x_1w_1x_2w_2 \sin \beta+\sin (\alpha+\beta)x_2^2w_2^2 =0 \\
		\end{cases} \\
		~~ \Rightarrow ~~ & \begin{cases}
			y_1 \neq 0, ~~ y_2 = 0 \\
			z = \pm x	\\
			w_1 \neq 0, ~~ w_2 = 0 \\
		\end{cases}
		\Rightarrow ~~  \begin{cases}
			y = \infty \\
			z = \pm x \\
			w = \infty \\	
		\end{cases}
	\end{aligned}
\end{equation*}
From post-examination, the solutions at infinity becomes another branch diffeomorphic to a circle $S^1$:
\begin{equation*}
	\begin{cases}
		x \in \mathbb{R} \cup \{\infty\} \\
		y \equiv \infty \\
		z = -x \\ 
		w \equiv \infty \\
	\end{cases}
\end{equation*}

\section{Anti-deltoid I: solution at infinity}

The condition on sector angles is:
\begin{equation*}
	f_{22} \neq 0, ~~ f_{20} = 0,~~f_{02} \neq 0, ~~f_{00} = 0 ~~ \Leftrightarrow ~~ \delta= \pi - \alpha, ~~ \gamma = \pi - \beta, ~~\beta \neq \alpha, ~~\alpha + \beta \neq \pi
\end{equation*}
which means:
\begin{equation*}
	\begin{aligned}
		& \begin{cases}
			\sin(\beta-\alpha) x_1^2y_1^2 + 2 \sin \alpha x_1y_1x_2y_2+ \sin (\beta+\alpha)x_2^2y_1^2 =0 \\
			x_1^2z_2^2-x_2^2z_1^2=0 \\
			\sin(\alpha-\beta) x_1^2w_1^2 + 2 \sin \beta x_1w_1x_2w_2+ \sin (\alpha+\beta)x_2^2w_1^2 =0 \\  
		\end{cases}  \\
		\Rightarrow ~~  & \begin{dcases}
			x = \infty \\
			y = 0 \\
			z = \infty \\
			w = 0 
		\end{dcases} ~~ \mathrm{or} ~~ \begin{dcases}
			x = \pm \sqrt{\dfrac{\sin (\alpha + \beta)}{\sin (\alpha - \beta)}} \\
			y = \infty \\
			z = \mp \sqrt{\dfrac{\sin (\alpha + \beta)}{\sin (\alpha - \beta)}} \\
			w = \mp \sqrt{\dfrac{\sin (\alpha + \beta) \sin (\alpha - \beta)}{\sin \beta}} 
		\end{dcases} 
		~~ \mathrm{or} ~~ \begin{dcases}
			x = \pm \sqrt{\dfrac{\sin (\beta + \alpha)}{\sin (\beta - \alpha)}} \\
			y = \mp \sqrt{\dfrac{\sin (\beta + \alpha) \sin (\beta - \alpha)}{\sin \alpha}} \\
			z = \mp \sqrt{\dfrac{\sin (\beta + \alpha)}{\sin (\beta - \alpha)}} \\
			w = \infty
		\end{dcases}
	\end{aligned}
\end{equation*}
For the last two group of points, only one group is real solution, depending on the sign of $\sin (\alpha + \beta) \sin (\alpha - \beta)$.

\section{Deltoid II: solution at infinity}

The condition on sector angles is 	
\begin{equation*}
	f_{22} = 0, ~~ f_{20} = 0,~~f_{02} \neq 0, ~~f_{00} \neq 0 ~~ \Leftrightarrow ~~ \alpha=\beta, ~~ \delta=\gamma, ~~\gamma \neq \beta, ~~\beta + \gamma \neq \pi
\end{equation*}
which means:
\begin{equation*}
	\begin{aligned}
		& \begin{cases}
			\sin(\beta-\gamma)x_2^2y_1^2-2 x_1y_1x_2y_2\sin \gamma+\sin(\beta+\gamma)x_2^2y_2^2 =0 \\
			x_2^2 (z_1^2+z_2^2) \sin^2 \beta=  z_2^2(x_1^2+x_2^2) \sin^2 \gamma \\
			\sin (\beta-\gamma)x_2^2w_1^2-2 x_1w_1x_2w_2 \sin \gamma+\sin (\beta+\gamma)x_2^2w_2^2 =0 \\	
			w_1^2y_2^2-w_2^2y_1^2=0 
		\end{cases} \\
		~~ \Rightarrow ~~ & \begin{cases}
			x_1 \neq 0, ~~ x_2 = 0 \\
			y = \pm w \\
			z_1 \neq 0, ~~ z_2 = 0 \\	
		\end{cases} 
		~~ \Rightarrow ~~ \begin{cases}
			x = \infty \\
			y = \pm w  \\
			z = \infty \\	
		\end{cases}
	\end{aligned}
\end{equation*}
From post-examination, the solutions at infinity becomes another branch diffeomorphic to a circle $S^1$:
\begin{equation*}
	\begin{cases}
		x \equiv \infty \\
		y \in \mathbb{R} \cup \{\infty\} \\
		z \equiv \infty \\ 
		w = -y \\
	\end{cases}	 
\end{equation*}

\section{Anti-deltoid II: finite solution}

The condition on sector angles is 	
\begin{equation*}
	f_{22} \neq 0, ~~ f_{20} \neq 0,~~f_{02} = 0, ~~f_{00} = 0 ~~ \Leftrightarrow ~~ \alpha= \pi - \beta, ~~ \delta= \pi - \gamma, ~~\gamma \neq \beta, ~~\beta + \gamma \neq \pi
\end{equation*}
which means:
\begin{equation*}
	\begin{aligned}
		& \begin{cases}
			\sin (\beta-\gamma)x_1^2y_1^2+2 x_1y_1x_2y_2 \sin \gamma + \sin (\beta+\gamma)x_1^2y_2^2 =0 \\
			x_2^2 (z_1^2+z_2^2) \sin^2 \beta=  z_2^2(x_1^2+x_2^2) \sin^2 \gamma \\
			\sin (\beta-\gamma)x_1^2w_1^2-2 x_1w_1x_2w_2 \sin \gamma + \sin (\beta+\gamma)x_1^2w_2^2 =0
		\end{cases} \\
		\Rightarrow ~~  & \begin{dcases}
			x = 0 \\
			y = \infty \\
			z = 0 \\
			w = \infty 
		\end{dcases} ~~ \mathrm{or} ~~ \begin{dcases}
			x = \mp \sqrt{\dfrac{\sin (\beta + \gamma) \sin (\beta - \gamma)}{\sin \gamma}} \\
			y = \pm \sqrt{\dfrac{\sin (\beta + \gamma)}{\sin (\beta - \gamma)}} \\
			z = \infty \\
			w = \mp \sqrt {\dfrac{\sin (\beta + \gamma)}{\sin (\beta - \gamma)}} \\
		\end{dcases} 
		~~ \mathrm{or} ~~ \begin{dcases}
			x = \infty \\
			y = \pm \sqrt{\dfrac{\sin (\gamma + \beta)}{\sin (\gamma - \beta)}} \\
			z = \mp \sqrt{\dfrac{\sin (\gamma + \beta) \sin (\gamma - \beta)}{\sin \beta}} \\
			w = \mp \sqrt{\dfrac{\sin (\gamma + \beta)}{\sin (\gamma - \beta)}} \\
		\end{dcases}
	\end{aligned}
\end{equation*}
For the last two group of points, only one group is real solution, depending on the sign of $\sin (\beta + \gamma) \sin (\beta - \gamma)$.

\section{Conic I: solution at infinity}

The condition on sector angles is:
\begin{equation*} 
	f_{22} = 0, ~~ f_{20} \neq 0,~~f_{02} \neq 0, ~~f_{00} \neq 0  ~~ \Leftrightarrow ~~ 
	\begin{dcases}
		\alpha - \beta + \gamma - \delta = 0 \\
		\alpha - \beta - \gamma + \delta \neq 0 \\
		\alpha + \beta - \gamma - \delta \neq 0 \\
		\alpha + \beta + \gamma + \delta \neq 2\pi
	\end{dcases}
\end{equation*} 
which implies $\sigma=\alpha+\gamma=\beta+\delta$, and:
\begin{equation*} 
	\begin{aligned}
		& \Rightarrow ~~ \begin{cases}
			\sin\gamma\sin(\beta-\alpha)x_1^2y_2^2-2 x_1y_1x_2y_2 \sin\alpha\sin\gamma \\
			+\sin\alpha\sin(\beta-\gamma)x_2^2y_1^2+\sin\beta\sin(\alpha+\gamma)x_2^2y_2^2 =0 \\
			x_2^2 (z_1^2+z_2^2) \sin\alpha \sin\beta =  z_2^2(x_1^2+x_2^2) \sin\gamma \sin\delta \\
			\sin\delta\sin(\alpha-\beta)x_1^2w_2^2-2 x_1w_1x_2w_2 \sin\beta\sin\delta \\ +\sin\beta\sin(\alpha-\delta)x_2^2w_1^2+\sin\alpha\sin(\beta+\delta)x_2^2w_2^2 =0
		\end{cases} \\
		& \Rightarrow ~~ \begin{cases}
			x_1 \neq 0, ~~ x_2=0 \\
			y_1 \neq 0, ~~ y_2 = 0 \\
			z_1 \neq 0, ~~ z_2=0 \\
			w_1 \neq 0, ~~ w_2 = 0 \\
		\end{cases} ~~ \Rightarrow ~~ \begin{cases}
			x = \infty \\
			y = \infty \\
			z = \infty \\
			w = \infty \\
		\end{cases} \\
	\end{aligned}
\end{equation*}
This is exactly the limit point of the two branches in the finite solution.

\section{Conic II: solution at infinity}

The condition on sector angles is 
\begin{equation*} 
	f_{22} \neq 0, ~~ f_{20} = 0,~~f_{02} \neq 0, ~~f_{00} \neq 0  ~~ \Leftrightarrow ~~ 
	\begin{dcases}
		\alpha - \beta + \gamma - \delta \neq 0 \\
		\alpha - \beta - \gamma + \delta = 0 \\
		\alpha + \beta - \gamma - \delta \neq 0 \\
		\alpha + \beta + \gamma + \delta \neq 2\pi
	\end{dcases}
\end{equation*}
\begin{equation*}
	\sigma = \dfrac{\alpha + \beta - \gamma - \delta}{2} + \pi
\end{equation*}
\begin{equation*}
	\begin{cases}
		\sin\gamma\sin(\alpha-\beta)x_1^2y_1^2-2 x_1y_1x_2y_2\sin\alpha\sin\gamma \\ +\sin\beta\sin(\alpha-\gamma)x_2^2y_1^2+\sin\alpha\sin(\beta+\gamma)x_2^2y_2^2 =0 \\
		x_2^2 (z_1^2+z_2^2) \sin\alpha \sin\beta= z_2^2(x_1^2+x_2^2) \sin\gamma \sin\delta \\
		\sin\delta\sin(\beta-\alpha)x_1^2w_1^2-2 x_1w_1x_2w_2 \sin\beta\sin\delta \\ +\sin\alpha\sin(\beta-\delta)x_2^2w_1^2+\sin\beta\sin(\alpha+\delta)x_2^2w_2^2 =0 \\
	\end{cases}
\end{equation*}
\begin{equation*}
	\begin{aligned}
		~~ \Rightarrow ~~ & \begin{cases}
			x_1 \neq 0, ~~ x_2=0 \\
			y_1 = 0, ~~ y_2 \neq 0 \\
			z_1 \neq 0, ~~ z_2=0 \\
			w_1 = 0, ~~ w_2 \neq 0 \\
		\end{cases} \mathrm{or} \quad \begin{cases}
			y_1 \neq 0, ~~ y_2 = 0 \\
			\sin \gamma \sin (\alpha-\beta)x_1^2+\sin \beta \sin (\alpha-\gamma)x_2^2=0 \\
			\sin \alpha \sin \beta (z^2+1)= \sin \gamma \sin \delta (x^2+1) \\
			\sin \beta \sin \gamma (w^{-2}+1) = \sin \delta \sin \alpha  \\
		\end{cases} \\
		& \mathrm{or} \quad \begin{cases}
			w_1 \neq 0, ~~ w_2 = 0 \\
			\sin \delta\sin (\beta-\alpha)x_1^2+\sin \alpha \sin(\beta-\delta)x_2^2=0 \\
			\sin \delta \sin \alpha (y^{-2} +1 ) = \sin \beta \sin \gamma \\
			\sin \alpha \sin \beta (z^2+1)= \sin \gamma \sin \delta (x^2+1) \\
		\end{cases} \\
	\end{aligned}
\end{equation*}
\begin{equation*} 
	\begin{aligned}
		~~ \Rightarrow ~~ & \begin{cases}
			x = \infty \\
			y = 0 \\
			z = \infty \\
			w=0 
		\end{cases} \mathrm{or} \quad \begin{dcases}
			x = \pm \sqrt{ \dfrac{\sin\alpha\sin(\beta-\gamma)}{\sin\gamma\sin(\beta-\alpha)}-1} \\
			y = \infty \\
			z = \pm \sqrt{\dfrac{\sin\delta\sin(\gamma-\beta)}{\sin\beta\sin(\gamma-\delta)}-1} \\
			w^{-1} = \pm \sqrt{\dfrac{\sin\delta \sin\alpha}{\sin\beta \sin\gamma}-1}  \\
		\end{dcases} \quad \mathrm{or} \quad \begin{dcases}
			x = \pm \sqrt{ \dfrac{\sin\beta\sin(\alpha-\delta)}{\sin\delta\sin(\alpha-\beta)}-1} \\
			y^{-1} = \pm \sqrt{\dfrac{\sin\beta \sin\gamma}{ \sin\delta \sin\alpha}-1}  \\
			z = \pm \sqrt{ \dfrac{\sin\gamma\sin(\delta-\alpha)}{\sin\alpha\sin(\delta-\gamma)}-1} \\
			w = \infty \\
		\end{dcases}
	\end{aligned}
\end{equation*}

From post-examination, the isolated solutions at infinity are:
\begin{equation*}
	\begin{cases}
		x = \infty \\
		y = 0 \\
		z = \infty \\
		w=0 
	\end{cases}
\end{equation*}
and when $\sin \delta \sin \alpha > \sin \beta \sin \gamma$, 
\begin{equation*}
	\begin{dcases}
		x = \mathrm{sign} (\sigma - \pi)\sqrt{ \dfrac{\sin \alpha \sin(\beta-\gamma)}{\sin \gamma \sin (\beta-\alpha)}-1} \\
		y = \infty \\
		z = \mathrm{sign} (\pi - \sigma)  \sqrt{\dfrac{\sin \delta \sin(\gamma-\beta)}{\sin \beta \sin(\gamma-\delta)}-1} \\
		w^{-1} = \sqrt{\dfrac{\sin\delta \sin\alpha}{\sin\beta \sin\gamma}-1}  \\ 
	\end{dcases} , ~ \begin{dcases}
		x = \mathrm{sign} (\pi - \sigma) \sqrt{ \dfrac{\sin\alpha\sin(\beta-\gamma)}{\sin\gamma\sin(\beta-\alpha)}-1} \\
		y = \infty \\
		z = \mathrm{sign} (\sigma - \pi) \sqrt{\dfrac{\sin\delta\sin(\gamma-\beta)}{\sin\beta\sin(\gamma-\delta)}-1} \\
		w^{-1} = -\sqrt{\dfrac{\sin\delta \sin\alpha}{\sin\beta \sin\gamma}-1}  \\
	\end{dcases}
\end{equation*}
On the other hand, when $\sin \delta \sin \alpha < \sin \beta \sin \gamma$,
\begin{equation*}
	\begin{dcases}
		x = \mathrm{sign} (\sigma - \pi) \sqrt{ \dfrac{\sin \beta\sin(\alpha-\delta)}{\sin\delta\sin(\alpha-\beta)}-1} \\
		y^{-1} = \sqrt{\dfrac{\sin\beta \sin\gamma}{ \sin\delta \sin\alpha}-1}  \\
		z = \mathrm{sign} (\pi - \sigma) \sqrt{ \dfrac{\sin\gamma\sin(\delta-\alpha)}{\sin\alpha\sin(\delta-\gamma)}-1} \\
		w = \infty 
	\end{dcases} , ~ \begin{dcases}
		x = \mathrm{sign} (\pi - \sigma) \sqrt{ \dfrac{\sin\beta\sin(\alpha-\delta)}{\sin\delta\sin(\alpha-\beta)}-1} \\
		y^{-1} = - \sqrt{\dfrac{\sin\beta \sin\gamma}{ \sin\delta \sin\alpha}-1}  \\
		z = \mathrm{sign} (\sigma - \pi) \sqrt{ \dfrac{\sin\gamma\sin(\delta-\alpha)}{\sin\alpha\sin(\delta-\gamma)}-1} \\
		w = \infty \\ 
	\end{dcases}
\end{equation*}
The two sets of solutions presented above will alternate between real and complex values depending on whether $\sin\delta \sin\alpha$ is greater than $\sin\beta \sin\gamma$, or vice versa. At infinity, only real values are considered valid solutions.

\section{Conic III: solution at infinity}

The condition on sector angles is 
\begin{equation*} 
	f_{22} \neq 0, ~~ f_{20} \neq 0,~~f_{02} = 0, ~~f_{00} \neq 0  ~~ \Leftrightarrow ~~ 
	\begin{dcases}
		\alpha - \beta + \gamma - \delta \neq 0 \\
		\alpha - \beta - \gamma + \delta \neq 0 \\
		\alpha + \beta - \gamma - \delta = 0 \\
		\alpha + \beta + \gamma + \delta \neq 2\pi
	\end{dcases}
\end{equation*}
\begin{equation*}
	\sigma = \dfrac{-\alpha + \beta + \gamma - \delta}{2} + \pi
\end{equation*}
\begin{equation*}
	\begin{cases}
		\sin \alpha\sin(\gamma-\beta)x_1^2y_1^2+\sin\beta\sin(\gamma-\alpha)x_1^2y_2^2 \\
		-2x_1y_1x_2y_2\sin\alpha\sin\gamma+\sin\gamma\sin(\alpha+\beta)x_2^2y_2^2 =0 \\
		x_1^2 (z_1^2+z_2^2) \sin\alpha \sin\beta= z_1^2(x_1^2+x_2^2) \sin\gamma \sin\delta \\
		\sin\beta\sin(\delta-\alpha)x_1^2w_1^2+\sin\alpha\sin(\delta-\beta)x_1^2w_2^2 \\ -2 x_1w_1x_2w_2\sin\beta\sin\delta+\sin\delta\sin(\alpha+\beta)x_2^2w_2^2 =0 \\
	\end{cases}
\end{equation*}
\begin{equation*}
	\begin{aligned}
		~~ \Rightarrow ~~ & \begin{cases}
			x_1 \neq 0, ~~ x_2 = 0 \\
			\sin \alpha\sin(\gamma-\beta)y_1^2+\sin\beta\sin(\gamma-\alpha)y_2^2=0 \\
			\sin\alpha \sin\beta (z^{-2}+1)=\sin\gamma \sin\delta \\
			\sin\beta\sin(\delta-\alpha)w_1^2+\sin\alpha\sin(\delta-\beta)w_2^2=0  \\
		\end{cases} \mathrm{or} \quad  \begin{cases}
			x_1 = 0, ~~ x_2 \neq 0 \\
			y_1 \neq 0, ~~ y_2 = 0 \\
			z_1 = 0, ~~ z_2 \neq 0 \\
			w_1 \neq 0, ~~ w_2 = 0 \\
		\end{cases} \\
		& \mathrm{or} \quad \begin{cases}
			z_1 \neq 0, ~~ z_2 = 0 \\
			\sin\alpha\sin(\gamma-\delta)w_1^2+\sin\delta\sin(\gamma-\alpha)w_2^2=0  \\
			\sin\gamma \sin\delta (x^{-2}+1)=\sin\alpha \sin\beta \\
			\sin\delta \sin(\beta-\gamma)y_1^2+\sin\gamma \sin(\beta-\delta)y_2^2=0 \\
		\end{cases} \\
	\end{aligned}
\end{equation*}
\begin{equation*} 
	\begin{aligned}
		~~ \Rightarrow ~~ & \begin{dcases}
			x = \infty \\
			y = \pm \sqrt{ \dfrac{\sin\gamma\sin(\beta-\alpha)}{\sin\alpha\sin(\beta-\gamma)}-1} \\
			z^{-1} = \pm \sqrt{\dfrac{\sin\gamma \sin\delta}{\sin\alpha \sin\beta}-1} \\
			w = \pm \sqrt{\dfrac{\sin\delta\sin(\alpha-\beta)}{\sin\beta\sin(\alpha-\delta)}-1} 
		\end{dcases} \quad \mathrm{or} \quad  \begin{cases}
			x = 0 \\
			y = \infty \\
			z = 0 \\
			w = \infty \\
		\end{cases} 
		\mathrm{or} \quad  \begin{dcases}
			x^{-1} = \pm \sqrt{\dfrac{\sin\alpha \sin\beta}{\sin\gamma \sin\delta}-1} \\
			y = \pm \sqrt{ \dfrac{\sin\beta\sin(\gamma-\delta)}{\sin\delta\sin(\gamma-\beta)}-1} \\
			z = \infty \\
			w = \pm \sqrt{\dfrac{\sin\alpha\sin(\delta-\gamma)}{\sin\gamma\sin(\delta-\alpha)}-1}  \\
		\end{dcases}
	\end{aligned}
\end{equation*}

From post-examination, the isolated solutions at infinity are:
\begin{equation*}
	\begin{dcases}
		x = 0 \\
		y = \infty \\
		z = 0 \\
		w = \infty \\
	\end{dcases}
\end{equation*}
and, when $\sin\gamma \sin\delta > \sin\alpha \sin\beta$:
\begin{equation*}
	\begin{dcases}
		x = \infty \\
		y = \mathrm{sign} (\sigma - \pi) \sqrt{ \dfrac{\sin \gamma\sin(\beta-\alpha)}{\sin\alpha\sin(\beta-\gamma)}-1} \\
		z^{-1} = \sqrt{\dfrac{\sin\gamma \sin\delta}{\sin\alpha \sin\beta}-1} \\
		w = \mathrm{sign} (\pi - \sigma) \sqrt{\dfrac{\sin\delta\sin(\alpha-\beta)}{\sin\beta\sin(\alpha-\delta)}-1} 
	\end{dcases} , ~ \begin{dcases}
		x = \infty \\
		y = \mathrm{sign} (\pi - \sigma) \sqrt{ 	\dfrac{\sin\gamma\sin(\beta-\alpha)}{\sin\alpha\sin(\beta-\gamma)}-1} \\
		z^{-1} = - \sqrt{\dfrac{\sin\gamma \sin\delta}{\sin\alpha 	\sin\beta}-1} \\
		w = \mathrm{sign} (\sigma - \pi) 	\sqrt{\dfrac{\sin\delta\sin(\alpha-\beta)}{\sin\beta\sin(\alpha-\delta)}-1} 
	\end{dcases} 
\end{equation*}
when $\sin\gamma \sin\delta < \sin\alpha \sin\beta$:
\begin{equation*}
	\begin{dcases}
		x^{-1} = \sqrt{\dfrac{\sin\alpha \sin\beta}{\sin\gamma \sin\delta}-1} \\
		y = \mathrm{sign} (\sigma - \pi) \sqrt{ \dfrac{\sin\beta\sin(\gamma-\delta)}{\sin\delta\sin(\gamma-\beta)}-1} \\
		z = \infty \\
		w = \mathrm{sign} (\pi - \sigma) \sqrt{\dfrac{\sin\alpha\sin(\delta-\gamma)}{\sin\gamma\sin(\delta-\alpha)}-1}  \\
	\end{dcases} , ~ \begin{dcases}
		x^{-1} = - \sqrt{\dfrac{\sin\alpha \sin\beta}{\sin\gamma \sin\delta}-1} \\
		y = \mathrm{sign} (\pi - \sigma) \sqrt{ \dfrac{\sin\beta\sin(\gamma-\delta)}{\sin\delta\sin(\gamma-\beta)}-1} \\
		z = \infty \\
		w = \mathrm{sign} (\sigma - \pi) \sqrt{\dfrac{\sin\alpha\sin(\delta-\gamma)}{\sin\gamma\sin(\delta-\alpha)}-1}  \\
	\end{dcases}
\end{equation*}
The two sets of solutions presented above will alternate between real and complex values depending on whether $\sin\delta \sin\alpha$ is greater than $\sin\beta \sin\gamma$, or vice versa. At infinity, only real values are considered valid solutions. 

\section{Conic IV: solution at infinity}
The condition on sector angles is:
\begin{equation*} 
	f_{22} \neq 0, ~~ f_{20} \neq 0,~~f_{02} \neq 0, ~~f_{00} = 0  ~~ \Leftrightarrow ~~ 
	\begin{dcases}
		\alpha - \beta + \gamma - \delta \neq 0 \\
		\alpha - \beta - \gamma + \delta \neq 0 \\
		\alpha + \beta - \gamma - \delta \neq 0 \\
		\alpha + \beta + \gamma + \delta = 2\pi
	\end{dcases}
\end{equation*}
which leads to
\begin{equation*}
	\begin{aligned}
		& \begin{cases}
			x_1 \neq 0, ~~ x_2 = 0 \\
			\sin\beta\sin(\delta+\beta)y_1^2+\sin \alpha \sin(\delta+\alpha)y_2^2=0 \\
			\sin \alpha\sin(\gamma+\alpha)w_1^2+\sin \beta \sin(\gamma+\beta)w_2^2=0 \\
			\sin(\beta+\delta)\sin(\alpha+\delta)z_1^2+\sin \beta \sin \alpha z_2^2=0
		\end{cases} \\
		\mathrm{or} \quad & \begin{cases}
			y_1 \neq 0, ~~ y_2 = 0 \\
			\sin \gamma\sin(\alpha+\gamma)z_1^2+\sin\beta\sin(\alpha+\beta)z_2^2=0 \\
			\sin \beta\sin(\delta+\beta)x_1^2+\sin\gamma\sin(\delta+\gamma)x_2^2=0 \\
			\sin(\gamma+\alpha)\sin(\beta+\alpha)w_1^2+\sin \gamma \sin \beta w_2^2=0
		\end{cases} \\
		\mathrm{or} \quad & \begin{cases}
			z_1 \neq 0, ~~ z_2 = 0 \\
			\sin \delta \sin(\beta+\delta)w_1^2+\sin \gamma \sin(\beta+\gamma)w_2^2=0 \\
			\sin \gamma \sin(\alpha+\gamma)y_1^2+\sin \delta \sin(\alpha+\delta)y_2^2=0 \\
			\sin(\delta+\beta)\sin(\gamma+\beta)x_1^2+\sin \delta \sin \gamma x_2^2=0
		\end{cases} \\
		\mathrm{or} \quad & \begin{cases}
			w_1 \neq 0, ~~ w_2 = 0 \\
			\sin \alpha \sin(\gamma+\alpha)x_1^2+\sin \delta \sin(\gamma+\delta)x_2^2=0 \\
			\sin \delta \sin(\beta+\delta)z_1^2+\sin \alpha \sin(\beta+\alpha)z_2^2=0 \\
			\sin(\alpha+\gamma)\sin(\delta+\gamma)y_1^2+\sin \alpha \sin \delta y_2^2=0
		\end{cases} \\
	\end{aligned}
\end{equation*}

\begin{equation*}
	\begin{aligned}
		\Rightarrow \quad & \begin{dcases}
			x =  \infty \\
			y =  \pm \sqrt{\dfrac{\sin \gamma\sin (\beta-\alpha)}{\sin \beta\sin(\alpha+\gamma)}-1} \\
			z^{-1} = \pm \sqrt{\dfrac{\sin\gamma \sin\delta}{\sin \alpha \sin \beta}-1} \\
			w = \pm \sqrt{\dfrac{\sin\delta\sin(\alpha-\beta)}{\sin \alpha \sin(\beta+\delta)}-1} 
		\end{dcases}  ~\mathrm{or}~ \begin{dcases}
			y =  \infty \\
			z =  \pm \sqrt{\dfrac{\sin\delta\sin(\gamma-\beta)}{\sin \gamma \sin(\beta+\delta)}-1} \\
			w^{-1} = \pm \sqrt{\dfrac{\sin\delta \sin\alpha}{\sin \beta \sin \gamma}-1} \\
			x = \pm \sqrt{\dfrac{\sin\alpha\sin(\beta-\gamma)}{\sin \beta \sin(\alpha+\gamma)}-1}
		\end{dcases} \\
		\mathrm{or} ~ & \begin{dcases}
			z = \infty \\
			w = \pm \sqrt{\dfrac{\sin\alpha\sin(\delta-\gamma)}{\sin \delta \sin(\alpha+\gamma)}-1} \\
			x^{-1} = \pm \sqrt{\dfrac{\sin\alpha \sin\beta}{\sin \gamma \sin \delta}-1} \\
			y = \pm \sqrt{\dfrac{\sin\beta\sin(\gamma-\delta)}{\sin \gamma \sin(\beta+\delta)}-1} \\
		\end{dcases}
		~\mathrm{or}~ \begin{dcases}
			w =  \infty \\
			x =  \pm \sqrt{\dfrac{\sin\beta\sin(\alpha-\delta)}{\sin \alpha \sin(\beta+\delta)}-1} \\
			y^{-1} = \pm \sqrt{\dfrac{\sin\beta \sin\gamma}{\sin \alpha \sin \delta}-1} \\
			z = \pm \sqrt{\dfrac{\sin\gamma\sin(\delta-\alpha)}{\sin \delta \sin(\alpha+\gamma)}-1} \\
		\end{dcases} \\
	\end{aligned} 
\end{equation*} 

From post examination we could obtain the solution similar to Conic II and III.
\begin{equation*}
	\sigma = \dfrac{-\alpha + \beta - \gamma + \delta}{2} + \pi
\end{equation*}
When $\sin\gamma\sin\delta>\sin\alpha\sin\beta$,
\begin{equation*}
	\begin{aligned}
		\begin{dcases}
			x =  \infty \\
			y =  \mathrm{sign}(\sigma - \pi) \sqrt{\dfrac{\sin \gamma\sin (\beta-\alpha)}{\sin \beta\sin(\alpha+\gamma)}-1} \\
			z^{-1} = \sqrt{\dfrac{\sin\gamma \sin\delta}{\sin \alpha \sin \beta}-1} \\
			w = \mathrm{sign}(\pi - \sigma) \sqrt{\dfrac{\sin\delta\sin(\alpha-\beta)}{\sin \alpha \sin(\beta+\delta)}-1}
		\end{dcases}, ~~
		\begin{dcases}
			x =  \infty \\
			y =  \mathrm{sign}(\pi-\sigma) \sqrt{\dfrac{\sin \gamma\sin (\beta-\alpha)}{\sin \beta\sin(\alpha+\gamma)}-1} \\
			z^{-1} = -\sqrt{\dfrac{\sin\gamma \sin\delta}{\sin \alpha \sin \beta}-1} \\
			w = \mathrm{sign}(\sigma-\pi) \sqrt{\dfrac{\sin\delta\sin(\alpha-\beta)}{\sin \alpha \sin(\beta+\delta)}-1}
		\end{dcases}
	\end{aligned}
\end{equation*} 
When $\sin\delta\sin\alpha>\sin\beta\sin\gamma$,
\begin{equation*}
	\begin{aligned}
		\begin{dcases}
			y =  \infty \\
			z =  \mathrm{sign}(\sigma - \pi) \sqrt{\dfrac{\sin\delta\sin(\gamma-\beta)}{\sin \gamma \sin(\beta+\delta)}-1} \\
			w^{-1} = \sqrt{\dfrac{\sin\delta \sin\alpha}{\sin \beta \sin \gamma}-1} \\
			x = \mathrm{sign}(\pi-\sigma) \sqrt{\dfrac{\sin\alpha\sin(\beta-\gamma)}{\sin \beta \sin(\alpha+\gamma)}-1}
		\end{dcases}, ~~
		\begin{dcases}
			y =  \infty \\
			z =  \mathrm{sign}(\pi - \sigma) \sqrt{\dfrac{\sin\delta\sin(\gamma-\beta)}{\sin \gamma \sin(\beta+\delta)}-1} \\
			w^{-1} = - \sqrt{\dfrac{\sin\delta \sin\alpha}{\sin \beta \sin \gamma}-1} \\
			x = \mathrm{sign}(\sigma - \pi) \sqrt{\dfrac{\sin\alpha\sin(\beta-\gamma)}{\sin \beta \sin(\alpha+\gamma)}-1}
		\end{dcases}
	\end{aligned}
\end{equation*}
When $\sin\alpha\sin\beta>\sin\gamma\sin\delta$,
\begin{equation*}
	\begin{aligned}
		\begin{dcases}
			z = \infty \\
			w = \mathrm{sign}(\sigma - \pi) \sqrt{\dfrac{\sin\alpha\sin(\delta-\gamma)}{\sin \delta \sin(\alpha+\gamma)}-1} \\
			x^{-1} =  \sqrt{\dfrac{\sin\alpha \sin\beta}{\sin \gamma \sin \delta}-1} \\
			y = \mathrm{sign}(\pi - \sigma) \sqrt{\dfrac{\sin\beta\sin(\gamma-\delta)}{\sin \gamma \sin(\beta+\delta)}-1} \\
		\end{dcases}, ~~
		\begin{dcases}
			z = \infty \\
			w = \mathrm{sign}(\pi - \sigma) \sqrt{\dfrac{\sin\alpha\sin(\delta-\gamma)}{\sin \delta \sin(\alpha+\gamma)}-1} \\
			x^{-1} = - \sqrt{\dfrac{\sin\alpha \sin\beta}{\sin \gamma \sin \delta}-1} \\
			y = \mathrm{sign}(\sigma - \pi) \sqrt{\dfrac{\sin\beta\sin(\gamma-\delta)}{\sin \gamma \sin(\beta+\delta)}-1} \\
		\end{dcases}
	\end{aligned}
\end{equation*}
When $\sin\beta\sin\gamma>\sin\delta\sin\alpha$,
\begin{equation*}
	\begin{aligned}
		\begin{dcases}
			w =  \infty \\
			x =  \mathrm{sign}(\sigma - \pi) \sqrt{\dfrac{\sin\beta\sin(\alpha-\delta)}{\sin \alpha \sin(\beta+\delta)}-1} \\
			y^{-1} = \sqrt{\dfrac{\sin\beta \sin\gamma}{\sin \alpha \sin \delta}-1} \\
			z = \mathrm{sign}(\pi - \sigma) \sqrt{\dfrac{\sin\gamma\sin(\delta-\alpha)}{\sin \delta \sin(\alpha+\gamma)}-1} \\
		\end{dcases}, ~~
		\begin{dcases}
			w =  \infty \\
			x =  \mathrm{sign}(\pi - \sigma) \sqrt{\dfrac{\sin\beta\sin(\alpha-\delta)}{\sin \alpha \sin(\beta+\delta)}-1} \\
			y^{-1} = -\sqrt{\dfrac{\sin\beta \sin\gamma}{\sin \alpha \sin \delta}-1} \\
			z = \mathrm{sign}(\sigma - \pi) \sqrt{\dfrac{\sin\gamma\sin(\delta-\alpha)}{\sin \delta \sin(\alpha+\gamma)}-1} \\
		\end{dcases}
	\end{aligned}
\end{equation*}
Here only two of the four groups are real, while the rest are complex.

\section{Elliptic: solution at infinity}

The condition on sector angles is:
\begin{equation*} 
	f_{22} \neq 0, ~~ f_{20} \neq 0 , ~~ f_{02} \neq 0 ~~ \Leftrightarrow ~~ \begin{dcases}
		\alpha - \beta + \gamma - \delta \neq 0 \\
		\alpha - \beta - \gamma + \delta \neq 0 \\
		\alpha + \beta - \gamma - \delta \neq 0
	\end{dcases}
\end{equation*}
which leads to
\begin{equation*}
	\begin{aligned}
		& \begin{cases}
			x_1 \neq 0, ~~ x_2 = 0 \\
			\sin (\sigma-\beta)\sin(\sigma-\delta-\beta)y_1^2+\sin(\sigma-\alpha)\sin(\sigma-\delta-\alpha)y_2^2=0 \\
			\sin(\sigma-\alpha)\sin(\sigma-\gamma-\alpha)w_1^2+\sin(\sigma-\beta)\sin(\sigma-\gamma-\beta)w_2^2=0 \\
			\sin(\sigma-\beta-\delta)\sin(\sigma-\alpha-\delta)z_1^2+\sin(\sigma-\beta)\sin(\sigma-\alpha)z_2^2=0
		\end{cases} \\
		\mathrm{or} \quad & \begin{cases}
			y_1 \neq 0, ~~ y_2 = 0 \\
			\sin(\sigma-\gamma)\sin(\sigma-\alpha-\gamma)z_1^2+\sin(\sigma-\beta)\sin(\sigma-\alpha-\beta)z_2^2=0 \\
			\sin(\sigma-\beta)\sin(\sigma-\delta-\beta)x_1^2+\sin(\sigma-\gamma)\sin(\sigma-\delta-\gamma)x_2^2=0 \\
			\sin(\sigma-\gamma-\alpha)\sin(\sigma-\beta-\alpha)w_1^2+\sin(\sigma-\gamma)\sin(\sigma-\beta)w_2^2=0
		\end{cases} \\
		\mathrm{or} \quad & \begin{cases}
			z_1 \neq 0, ~~ z_2 = 0 \\
			\sin(\sigma-\delta)\sin(\sigma-\beta-\delta)w_1^2+\sin(\sigma-\gamma)\sin(\sigma-\beta-\gamma)w_2^2=0 \\
			\sin(\sigma-\gamma)\sin(\sigma-\alpha-\gamma)y_1^2+\sin(\sigma-\delta)\sin(\sigma-\alpha-\delta)y_2^2=0 \\
			\sin(\sigma-\delta-\beta)\sin(\sigma-\gamma-\beta)x_1^2+\sin(\sigma-\delta)\sin(\sigma-\gamma)x_2^2=0
		\end{cases} \\
		\mathrm{or} \quad & \begin{cases}
			w_1 \neq 0, ~~ w_2 = 0 \\
			\sin(\sigma-\alpha)\sin(\sigma-\gamma-\alpha)x_1^2+\sin(\sigma-\delta)\sin(\sigma-\gamma-\delta)x_2^2=0 \\
			\sin(\sigma-\delta)\sin(\sigma-\beta-\delta)z_1^2+\sin(\sigma-\alpha)\sin(\sigma-\beta-\alpha)z_2^2=0 \\
			\sin(\sigma-\alpha-\gamma)\sin(\sigma-\delta-\gamma)y_1^2+\sin(\sigma-\alpha)\sin(\sigma-\delta)y_2^2=0
		\end{cases} \\
	\end{aligned}
\end{equation*}

\begin{equation*}
	\begin{aligned}
		\Rightarrow \quad & \begin{dcases}
			x =  \infty \\
			y =  \pm \sqrt{\dfrac{\sin \gamma\sin (\beta-\alpha)}{\sin (\sigma-\beta)\sin(\sigma-\alpha-\gamma)}-1} \\
			z^{-1} = \pm \sqrt{\dfrac{\sin\gamma \sin\delta}{\sin(\sigma-\alpha)\sin(\sigma-\beta)}-1} \\
			w = \pm \sqrt{\dfrac{\sin\delta\sin(\alpha-\beta)}{\sin(\sigma-\alpha)\sin(\sigma-\beta-\delta)}-1} 
		\end{dcases}  ~\mathrm{or}~ \begin{dcases}
			y =  \infty \\
			z =  \pm \sqrt{\dfrac{\sin\delta\sin(\gamma-\beta)}{\sin(\sigma-\gamma)\sin(\sigma-\beta-\delta)}-1} \\
			w^{-1} = \pm \sqrt{\dfrac{\sin\delta \sin\alpha}{\sin(\sigma-\beta)\sin(\sigma-\gamma)}-1} \\
			x = \pm \sqrt{\dfrac{\sin\alpha\sin(\beta-\gamma)}{\sin(\sigma-\beta)\sin(\sigma-\alpha-\gamma)}-1}
		\end{dcases} \\
		\mathrm{or} ~ & \begin{dcases}
			z = \infty \\
			w = \pm \sqrt{\dfrac{\sin\alpha\sin(\delta-\gamma)}{\sin(\sigma-\delta)\sin(\sigma-\alpha-\gamma)}-1} \\
			x^{-1} = \pm \sqrt{\dfrac{\sin\alpha \sin\beta}{\sin(\sigma-\gamma)\sin(\sigma-\delta)}-1} \\
			y = \pm \sqrt{\dfrac{\sin\beta\sin(\gamma-\delta)}{\sin(\sigma-\gamma)\sin(\sigma-\beta-\delta)}-1} \\
		\end{dcases}
		~\mathrm{or}~ \begin{dcases}
			w =  \infty \\
			x =  \pm \sqrt{\dfrac{\sin\beta\sin(\alpha-\delta)}{\sin(\sigma-\alpha)\sin(\sigma-\beta-\delta)}-1} \\
			y^{-1} = \pm \sqrt{\dfrac{\sin\beta \sin\gamma}{\sin(\sigma-\alpha)\sin(\sigma-\delta)}-1} \\
			z = \pm \sqrt{\dfrac{\sin\gamma\sin(\delta-\alpha)}{\sin(\sigma-\delta)\sin(\sigma-\alpha-\gamma)}-1} \\
		\end{dcases} \\
	\end{aligned} 
\end{equation*} 

Similarly, our next step is to determine how many of the above solutions are real. Take $ x = \infty$ as an example, the condition for $y, ~z, ~w$ to be real numbers is (referring to the helpful identities in Section \ref{section: sign convention folding angle}): 
\begin{equation*}
	\begin{aligned}
		& \quad \sin(\sigma-\alpha)\sin(\sigma-\beta) - \sin\gamma \sin \delta = \sin(\sigma-\alpha-\gamma)\sin(\sigma-\beta-\gamma) \\
		& = \dfrac{\sin \alpha \sin\beta \sin\gamma \sin\delta - \sin(\sigma-\alpha)\sin(\sigma-\beta)\sin(\sigma-\gamma)\sin(\sigma-\delta)}{\sin\sigma\sin(\sigma-\alpha-\beta)} < 0
	\end{aligned}
\end{equation*} 
Following the notation introduced for the Elliptic type, 
\begin{equation}
	M = \dfrac{\alpha \beta \gamma \delta}{(\sigma-\alpha)(\sigma-\beta)(\sigma-\gamma)(\sigma-\delta)} \in (0, 1) \cup (1, +\infty)
\end{equation}
the condition for $x=\infty$ to be a real configuration is
\begin{equation*}
	(M-1)\sin\sigma\sin(\sigma-\alpha-\beta) < 0
\end{equation*} 
Let us write all the cases together:
\begin{equation*}
	\begin{dcases}
		(M-1)\mathrm{sign}(\pi - \sigma)\mathrm{sign}(\sigma-\alpha-\beta)<0 ~~ \Leftrightarrow ~~ x=\infty \mathrm{~can~be~reached} \\
		(M-1)\mathrm{sign}(\pi - \sigma)\mathrm{sign}(\sigma-\beta-\gamma)<0 ~~ \Leftrightarrow ~~ y=\infty \mathrm{~can~be~reached} \\
		(M-1)\mathrm{sign}(\pi - \sigma)\mathrm{sign}(\sigma-\gamma-\delta)<0 ~~ \Leftrightarrow ~~ z=\infty \mathrm{~can~be~reached} \\
		(M-1)\mathrm{sign}(\pi - \sigma)\mathrm{sign}(\sigma-\delta-\alpha)<0 ~~ \Leftrightarrow ~~ w=\infty \mathrm{~can~be~reached} \\
	\end{dcases}
\end{equation*}

From post-examination, when $(M-1)\mathrm{sign}(\pi - \sigma)\mathrm{sign}(\sigma-\alpha-\beta)<0$, $x$ reaches the infinity:
\begin{equation*}
	\begin{dcases}
		x =  \infty \\
		y =  \mathrm{sign} (\sigma - \pi) \sqrt{\dfrac{\sin \gamma \sin(\beta-\alpha)}{\sin(\sigma-\beta)\sin(\sigma-\alpha-\gamma)}-1} \\
		z^{-1} = \sqrt{\dfrac{\sin\gamma \sin\delta}{\sin(\sigma-\alpha)\sin(\sigma - \beta)}-1} \\
		w = \mathrm{sign} (\pi - \sigma) \sqrt{\dfrac{\sin\delta\sin(\alpha-\beta)}{\sin(\sigma-\alpha)\sin(\sigma-\beta-\delta)}-1} 
	\end{dcases} \\
\end{equation*}
\begin{equation*}
	\begin{dcases}
		x =  \infty \\
		y =  \mathrm{sign} (\pi - \sigma) \sqrt{\dfrac{\sin\gamma\sin(\beta-\alpha)}{\sin(\sigma-\beta)\sin(\sigma-\alpha-\gamma)}-1} \\
		z^{-1} = - \sqrt{\dfrac{\sin\gamma \sin\delta}{\sin(\sigma-\alpha)\sin(\sigma-\beta)}-1} \\
		w = \mathrm{sign} (\sigma - \pi) \sqrt{\dfrac{\sin\delta\sin(\alpha-\beta)}{\sin(\sigma-\alpha)\sin(\sigma-\beta-\delta)}-1} 
	\end{dcases}
\end{equation*}
When $(M-1)\mathrm{sign}(\pi - \sigma)\mathrm{sign}(\sigma-\beta-\gamma)<0$, $y$ reaches the infinity:
\begin{equation*}
	\begin{dcases}
		y =  \infty \\
		z =  \mathrm{sign} (\sigma - \pi) \sqrt{\dfrac{\sin\delta\sin(\gamma-\beta)}{\sin(\sigma-\gamma)\sin(\sigma-\beta-\delta)}-1} \\
		w^{-1} = \sqrt{\dfrac{\sin\delta \sin\alpha}{\sin(\sigma-\beta)\sin(\sigma-\gamma)}-1} \\
		x = \mathrm{sign} (\pi - \sigma) \sqrt{\dfrac{\sin\alpha\sin(\beta-\gamma)}{\sin(\sigma-\beta)\sin(\sigma-\alpha-\gamma)}-1}
	\end{dcases} 
\end{equation*}
\begin{equation*}
	\begin{dcases}
		y =  \infty \\
		z =  \mathrm{sign} (\pi - \sigma) \sqrt{\dfrac{\sin\delta\sin(\gamma-\beta)}{\sin(\sigma-\gamma)\sin(\sigma-\beta-\delta)}-1} \\
		w^{-1} = - \sqrt{\dfrac{\sin\delta \sin\alpha}{\sin(\sigma-\beta)\sin(\sigma-\gamma)}-1} \\
		x = \mathrm{sign} (\sigma - \pi) \sqrt{\dfrac{\sin\alpha\sin(\beta-\gamma)}{\sin(\sigma-\beta)\sin(\sigma-\alpha-\gamma)}-1}
	\end{dcases}
\end{equation*}
When $(M-1)\mathrm{sign}(\pi - \sigma)\mathrm{sign}(\sigma-\gamma-\delta)<0$, $z$ reaches the infinity:
\begin{equation*}
	\begin{dcases}
		z = \infty \\
		w = \mathrm{sign} (\sigma - \pi) \sqrt{\dfrac{\sin\alpha\sin(\delta-\gamma)}{\sin(\sigma-\delta)\sin(\sigma-\alpha-\gamma)}-1} \\
		x^{-1} = \sqrt{\dfrac{\sin\alpha \sin\beta}{\sin(\sigma-\gamma)\sin(\sigma-\delta)}-1} \\
		y = \mathrm{sign} (\pi - \sigma) \sqrt{\dfrac{\sin\beta\sin(\gamma-\delta)}{\sin(\sigma-\gamma)\sin(\sigma-\beta-\delta)}-1} \\
	\end{dcases}
\end{equation*}
\begin{equation*} 
	\begin{dcases}
		z = \infty \\
		w = \mathrm{sign} (\pi - \sigma) \sqrt{\dfrac{\sin\alpha\sin(\delta-\gamma)}{\sin(\sigma-\delta)\sin(\sigma-\alpha-\gamma)}-1} \\
		x^{-1} = -\sqrt{\dfrac{\sin\alpha \sin\beta}{\sin(\sigma-\gamma)\sin(\sigma-\delta)}-1} \\
		y = \mathrm{sign} (\sigma - \pi) \sqrt{\dfrac{\sin\beta\sin(\gamma-\delta)}{\sin(\sigma-\gamma)\sin(\sigma-\beta-\delta)}-1} \\
	\end{dcases}
\end{equation*}
When $(M-1)\mathrm{sign}(\pi - \sigma)\mathrm{sign}(\sigma-\delta-\alpha)<0$, $w$ reaches the infinity:
\begin{equation*}
	\begin{dcases}
		w =  \infty \\
		x =  \mathrm{sign} (\sigma - \pi) \sqrt{\dfrac{\sin\beta\sin(\alpha-\delta)}{\sin(\sigma-\alpha)\sin(\sigma-\beta-\delta)}-1} \\
		y^{-1} = \sqrt{\dfrac{\sin\beta \sin\gamma}{\sin(\sigma-\alpha)\sin(\sigma-\delta)}-1} \\
		z = \mathrm{sign} (\pi - \sigma) \sqrt{\dfrac{\sin\gamma\sin(\delta-\alpha)}{\sin(\sigma-\delta)\sin(\sigma-\alpha-\gamma)}-1} \\
	\end{dcases}
\end{equation*}
\begin{equation*} 
	\begin{dcases}
		w =  \infty \\
		x =  \mathrm{sign} (\pi - \sigma) \sqrt{\dfrac{\sin\beta\sin(\alpha-\delta)}{\sin(\sigma-\alpha)\sin(\sigma-\beta-\delta)}-1} \\
		y^{-1} = -\sqrt{\dfrac{\sin\beta \sin\gamma}{\sin(\sigma-\alpha)\sin(\sigma-\delta)}-1} \\
		z = \mathrm{sign} (\sigma - \pi) \sqrt{\dfrac{\sin\gamma\sin(\delta-\alpha)}{\sin(\sigma-\delta)\sin(\sigma-\alpha-\gamma)}-1} \\
	\end{dcases}
\end{equation*}

\section{The Grashof condition and self-intersection}

\begin{prop}
	(The spherical version of the Grashof condition) If the sum of the shortest and longest linkage of a spherical 4-bar linkage is less than or equal to the sum of the remaining two linkages, then the shortest linkage can rotate fully with respect to a neighbouring linkage.
\end{prop}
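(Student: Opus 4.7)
The spherical Grashof condition should assert that, after relabeling so that $\alpha$ is the shortest sector angle and $\beta$ is a neighbor, the folding angle $\rho_x$ can take every value in $\mathbb{R}/2\pi$ (equivalently $x = \tan(\rho_x/2)$ sweeps all of $\mathbb{R} \cup \{\infty\}$) if and only if the sum of the shortest and longest sector angles does not exceed the sum of the remaining two. My plan is to translate this geometric statement into an algebraic surjectivity question for the projection of the configuration curve defined by Equation \eqref{eq: degree-4 vertex adjacent} onto the $x$-axis, then read off the answer from the discriminant together with the identities in Proposition \ref{prop: identities folding angle}.

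First I would view Equation \eqref{eq: degree-4 vertex adjacent} as the quadratic $(f_{22} x^2 + f_{02})y^2 + 2 f_{11} x y + (f_{20} x^2 + f_{00}) = 0$ in the unknown $y$, so that full rotation of $\alpha$ relative to $\beta$ is equivalent to the requirement that the discriminant $D(x) = f_{11}^2 x^2 - (f_{22} x^2 + f_{02})(f_{20} x^2 + f_{00})$ be nonnegative for every $x \in \mathbb{R} \cup \{\infty\}$ (the point at infinity being handled by the bihomogenization already prepared in Equation \eqref{eq: degree-4 vertex adjacent homo}). Substituting $t = x^2 \in [0, +\infty]$, the discriminant becomes the quadratic $P(t) = -f_{22} f_{20}\, t^2 + (f_{11}^2 - f_{22} f_{00} - f_{02} f_{20})\, t - f_{02} f_{00}$, and the geometric requirement reduces to $P(t) \ge 0$ on $[0, +\infty]$.

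Next I would extract three elementary sign conditions: (i) the leading coefficient $-f_{22} f_{20} \ge 0$, (ii) the constant term $-f_{02} f_{00} \ge 0$, and (iii) if the leading and constant coefficients both vanish or if the quadratic has real roots, those roots must lie off $[0,+\infty)$. Using Proposition \ref{prop: identities folding angle} (particularly the identities expressing $\sin(\sigma-\alpha)\sin(\sigma-\beta) - \sin\gamma \sin\delta$ and its permutations as products of the form $\sin(\sigma-\alpha-\gamma)\sin(\sigma-\beta-\gamma)$), the middle coefficient $f_{11}^2 - f_{22} f_{00} - f_{02} f_{20}$ should factor into a product of four sines indexed by the pairings $\sigma-\alpha-\beta, ~\sigma-\beta-\gamma, ~\sigma-\gamma-\delta, ~\sigma-\delta-\alpha$, and the leading and constant coefficients factor similarly. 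The signs of $\sin(\sigma - \alpha - \gamma)$ and $\sin(\sigma - \beta - \delta)$ are precisely what encode the inequalities "shortest $+$ longest vs.~sum of other two", because Proposition \ref{prop: sector angle range} puts each $\sigma - \alpha - \gamma$ and $\sigma - \beta - \delta$ in $(-\pi, \pi)$, so their sines change sign exactly at the Grashof threshold.

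The main obstacle is the bookkeeping: there are four choices of which sector is shortest, crossed with the two regimes $\sigma < \pi$ and $\sigma > \pi$, giving a proliferation of sign cases. The cleanest path is to exploit the symmetries already catalogued in the note, namely the conjugation $(\alpha,\beta,\gamma,\delta) \leftrightarrow (\sigma-\alpha,\sigma-\beta,\sigma-\gamma,\sigma-\delta)$ and the four "switch a strip" involutions in Proposition \ref{prop: switch a strip}, to reduce to a single canonical case (say $\alpha = \min$, $\gamma = \max$, and $\sigma < \pi$). In that case the Grashof inequality $\alpha + \gamma \le \beta + \delta$ is equivalent to $\sin(\sigma-\beta-\delta) \ge 0$, and one checks directly that $P(t) \ge 0$ on $[0, +\infty)$ iff this sign condition holds, with the limiting value at $t = \infty$ being handled by the bihomogenization. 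I expect the only delicate point to be the degenerate classes from the case list following Equation \eqref{eq: degree-4 vertex adjacent} (square through elliptic), where some of $f_{22}, f_{20}, f_{02}, f_{00}$ vanish; each degenerate class will need to be matched against the explicit parametrizations already tabulated in the later sections to confirm the equivalence.
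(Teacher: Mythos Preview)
Your approach is correct in outline but takes a genuinely different route from the paper. The paper's argument is essentially a one-paragraph corollary of the ``solutions at infinity'' analysis already carried out in the Elliptic section: having established the criterion
\[
(M-1)\,\mathrm{sign}(\pi-\sigma)\,\mathrm{sign}(\sigma-\alpha-\beta)<0 \ \Longleftrightarrow\ x=\infty \text{ is attained}
\]
(and its cyclic permutations), the paper simply observes that if $\beta$ is shortest then at least one of $\sigma-\alpha-\beta$, $\sigma-\beta-\gamma$ is positive, so reaching $x=\infty$ or $y=\infty$ forces $(M-1)\,\mathrm{sign}(\pi-\sigma)<0$, which is exactly $\max+\min<\sigma$. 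No discriminant is computed; the heavy lifting was already done via the elliptic-function parametrization.

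Your discriminant route is more elementary and self-contained: it avoids the elliptic machinery entirely and works directly from Equation \eqref{eq: degree-4 vertex adjacent}. The trade-off is the sign bookkeeping you anticipate, which the paper sidesteps by having already tabulated the real/imaginary status of $p_x,p_y,p_z,p_w$ case by case. Two small remarks on your plan: first, the statement as given (and the paper's argument) only addresses the ``if'' direction, whereas you are setting up an ``if and only if''; this is fine but worth flagging. Second, nonnegativity of $D(x)$ for all $x$ guarantees that the projection to $x$ is surjective, but does not by itself guarantee that a \emph{single connected branch} realizes full rotation; you would need a word about connectivity of the real locus (which is where the paper's parametrization does real work). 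Your reduction to a canonical case via Proposition~\ref{prop: switch a strip} is a good idea and mirrors how the paper handles the Conic II--IV cases.
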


We can easily analyse this condition based on our prior discussions regarding solutions at infinity. Only the elliptic type requires further justification. Without loss of generality, suppose $\beta$ is the shortest linkage, then the condition for $x= \infty$ or $y= \infty$ is
\begin{equation*}
	\begin{gathered}
		(M-1)\mathrm{sign}(\pi - \sigma)\mathrm{sign}(\sigma-\alpha-\beta)<0 \\
		(M-1)\mathrm{sign}(\pi - \sigma)\mathrm{sign}(\sigma-\beta-\gamma)<0
	\end{gathered} 
\end{equation*}
Since $\beta$ is the shortest,
\begin{equation*}
	\sigma-\alpha-\beta>0 ~~\mathrm{or}~~ \sigma-\beta-\gamma>0
\end{equation*}
hence in order for $x= \infty$ or $y= \infty$ can be reached, we need to require:
\begin{equation*}
	(M-1)\mathrm{sign}(\pi - \sigma) < 0 \Leftrightarrow \max + \min < \sigma
\end{equation*}
The right hand side is exactly the Grashof condition.

\begin{prop}
	When no folding angle reaches zero or the infinity, a spherical 4-bar linkage is self-intersected if and only if
	\begin{equation*}
		(\mathrm{sign}(x), \mathrm{sign}(y), \mathrm{sign}(z), \mathrm{sign}(w)) = 
		\begin{dcases}
			(1, ~1, ~-1, ~-1) \\
			(-1, ~1, ~1, ~-1) \\
			(-1, ~-1, ~1, ~1) \\
			(1, ~-1, ~-1, ~1)
		\end{dcases}
	\end{equation*}  
\end{prop}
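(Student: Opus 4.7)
The plan is to combine a connectedness reduction with direct geometric analysis using the coordinates from Figure \ref{fig: derivation vertex}. My key preliminary observation is that the four listed sign patterns coincide with those satisfying $\mathrm{sign}(x) \neq \mathrm{sign}(z)$ \emph{and} $\mathrm{sign}(y) \neq \mathrm{sign}(w)$ simultaneously, as one checks by enumerating the $16$ sign tuples and noting the four ``two-adjacent-equal'' ones are precisely those where both pairs of opposite creases fold in opposite directions.

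First, I would show that on the open stratum where $x, y, z, w$ are all finite and nonzero, self-intersection depends only on the cyclic sign tuple $(\mathrm{sign}(x), \mathrm{sign}(y), \mathrm{sign}(z), \mathrm{sign}(w))$. Panels vary continuously with folding angles, and any flip in self-intersection requires two non-adjacent panels to meet along a common ray from the vertex, which forces some folding angle to pass through $0$ or $\pm\pi$, equivalently some coordinate of $(x, y, z, w)$ to hit $0$ or $\infty$. Hence the claim reduces to checking one representative per sign class; the global sign-reversal symmetry (reflecting the fold across the unfolded plane) further cuts the number of classes in half.

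Second, I would use the diagonal planes $\Pi_u$ through $O, V_B, V_D$ and $\Pi_v$ through $O, V_A, V_C$ (labels as in Figure \ref{fig: derivation vertex}) as candidate separating planes. Each of the four panels is a triangle two of whose vertices lie on $\Pi_u$ and the third of which is either $V_A$ or $V_C$; consequently the $(\alpha,\beta)$-strip lies in the half-space of $V_A$ and the $(\gamma,\delta)$-strip in that of $V_C$. Thus when $V_A$ and $V_C$ lie on opposite sides of $\Pi_u$, the two strips are strictly separated and no interior intersection is possible; an analogous statement holds for $\Pi_v$ separating the $(\alpha,\delta)$ and $(\beta,\gamma)$ strips. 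The problem then reduces to linking the side of $V_A, V_C$ relative to $\Pi_u$ (respectively $V_B, V_D$ relative to $\Pi_v$) to the signs of $x, z$ (respectively $y, w$). Using the explicit $B, C$ coordinates of Figure \ref{fig: derivation vertex}, I would compute the signed volume $V_A \cdot (V_B \times V_D)$ as a function of the folding angles and show it has the sign of $\mathrm{sign}(x)\cdot s_{AD}$ for a fixed geometric constant $s_{AD}$ determined by the setup, and analogously for $V_C$ with $\mathrm{sign}(z)$. Hence $V_A, V_C$ lie on opposite sides of $\Pi_u$ iff $\mathrm{sign}(x) \neq \mathrm{sign}(z)$, and by the analogous $\Pi_v$ computation, a separating plane exists iff $\mathrm{sign}(x) \neq \mathrm{sign}(z)$ or $\mathrm{sign}(y) \neq \mathrm{sign}(w)$. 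This gives the ``only if'' direction. For the ``if'' direction, when both sign conditions fail, the two strips necessarily overlap across both diagonal planes, and the chords $V_AV_C$ and $V_BV_D$ must cross in $\mathbb{R}^3$, producing an explicit interior intersection between non-adjacent panels.

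The main obstacle will be the signed-volume calculation and the persistence-of-sign argument: ensuring that $\mathrm{sign}(V_A \cdot (V_B \times V_D))$ really does depend only on $\mathrm{sign}(x)$ (not on the magnitudes or on the other folding angles) requires tracking the sign across the full sign stratum and ruling out zeros of the determinant inside it. One route is to show that the determinant vanishes precisely when the linkage degenerates (a folding angle reaching $0$ or $\pm\pi$), tying it back to step one. A secondary obstacle is handling the degenerate linkage types (square, rhombus, Miura I/II, and so on from Section 2), where some sign patterns are not realized and the signed-volume formula may need specialization; here the switch-a-strip transformations of Proposition \ref{prop: switch a strip} reduce the case burden by mapping between sign patterns within a common geometric framework.
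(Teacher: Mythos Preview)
Your separating-plane step has the key sign relation inverted, and this makes the argument prove the opposite of the proposition. You claim that $V_A, V_C$ lie on opposite sides of $\Pi_u$ iff $\mathrm{sign}(x) \neq \mathrm{sign}(z)$, but the correct statement is the reverse. The triple product $\det(V_D,V_A,V_B)$ vanishes exactly when the two panels meeting at crease $A$ are coplanar, i.e.\ when $\rho_x=0$, so its sign tracks $\mathrm{sign}(x)$; analogously $\det(V_B,V_C,V_D)$ tracks $\mathrm{sign}(z)$. Now the side of $V_A$ relative to $\Pi_u=\mathrm{span}(O,V_B,V_D)$ is $\mathrm{sign}\det(V_B,V_D,V_A)=\mathrm{sign}\det(V_D,V_A,V_B)$ (cyclic permutation), while the side of $V_C$ is $\mathrm{sign}\det(V_B,V_D,V_C)=-\mathrm{sign}\det(V_B,V_C,V_D)$ (one transposition). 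The extra minus sign flips the conclusion: $V_A,V_C$ lie on \emph{opposite} sides of $\Pi_u$ iff $\mathrm{sign}(x)=\mathrm{sign}(z)$. A concrete check: the isogram's butterfly branch has $z=-x$, hence $\mathrm{sign}(x)\neq\mathrm{sign}(z)$, and is self-intersecting; your argument would say $\Pi_u$ separates the two strips and conclude \emph{no} self-intersection. Since your preliminary observation correctly identifies the four listed patterns as exactly those with $\mathrm{sign}(x)\neq\mathrm{sign}(z)$ and $\mathrm{sign}(y)\neq\mathrm{sign}(w)$, the inversion makes your ``only if'' argument establish precisely the wrong implication.

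With the sign corrected, the separating-plane argument does yield the ``only if'' direction cleanly. But your ``if'' direction is not a proof as written: two chords $V_AV_C$ and $V_BV_D$ in $\mathbb{R}^3$ are generically skew, so ``must cross'' fails. What is needed is that one pair of opposite \emph{arcs} on the sphere crosses, and the clean criterion for that is exactly the oriented-area test the paper uses: arcs $AB$ and $CD$ meet iff $A,B$ lie on opposite sides of the great circle $CD$ \emph{and} $C,D$ lie on opposite sides of $AB$, which is a sign condition on the four triple products $\det(V_{\mathrm{prev}},V,V_{\mathrm{next}})$. The paper then identifies each such triple product with the sign of the fold at $V$ and reads off the four patterns directly, bypassing both the connectedness reduction and the diagonal planes. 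Once you repair the sign and replace the chord-crossing claim with this arc criterion, your argument essentially converges to the paper's.
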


\begin{proof}
	The investigation of self-intersections in a spherical quadrilateral can be simplified to examining the intersection of two arcs, say $AB$ and $CD$. These arcs intersect if and only if the points $A$ and $B$ are positioned on opposite sides of the line $CD$, and the points $C$ and $D$ are on opposite sides of the line $AB$. The points $A$ and $B$ are on different sides of $CD$ if and only if the oriented areas of triangles $ACD$ and $BCD$ have opposite signs. Therefore, for a spherical 4-bar linkage $ABCD$, it is sufficient to calculate the oriented areas of all four triangles and analyze the sign pattern. Additionally, the signs of these oriented areas correspond to the signs of the rotational angles $x, ~y, ~z, ~w$.
\end{proof}

We have introduced this self-intersection check in the accompanying MATLAB app \citep{he_elliptic-fun-based_2023-1}.

\section*{Acknowledgement}

This work is supported by the Japan Science and Technology Agency – Core Research for Evolutional Science and Technology Grant No. JPMJCR1911.

\bibliographystyle{plainnat}

\end{document}